\documentclass[a4paper,UKenglish,cleveref, autoref, thm-restate,authorcolumns]{lipics-v2021}

\usepackage{graphicx} 

\usepackage{float}
\usepackage{tikz}
\usepackage{amsmath,amsthm,amssymb,amsfonts}
\usepackage{xcolor}
\usepackage{multirow}
\usepackage{subcaption}
\usepackage{array}

\usepackage[table]{xcolor}

\newcolumntype{M}[1]{>{\centering\arraybackslash}m{#1}}
\newcommand{\floor}[1]{\left\lfloor #1 \right\rfloor}

\newcommand{\N}{\mathbb{N}}
\newcommand{\Z}{\mathbb{Z}}

\newcommand{\Set}[1]{\{#1\}}

\newcommand{\crn}[1]{\mathcal{#1}}
\newcommand{\species}{\Lambda}
\newcommand{\reactions}{\Gamma}
\newcommand{\reactionsS}{\reactions_\states}
\newcommand{\reaction}{\mathcal{\gamma}}
\newcommand{\size}[1]{\lvert #1 \rvert}
\newcommand{\config}[1]{\overrightarrow{#1}}
\newcommand{\single}[1]{\vec{#1}}
\newcommand{\reactants}{\config{R}}
\newcommand{\products}{\config{P}}

\newcommand{\emptyConfig}{\config{0}}

\newcommand{\vas}[1]{\mathcal{#1}}
\newcommand{\transitions}{\mathcal{T}}

\newcommand{\states}{\mathcal{Q}}
\newcommand{\stepsto}{\mathop{\rightarrow}\limits}
\newcommand{\reaches}{\mathop{\leadsto}\limits}
\newcommand{\rxn}{\mathop{\longrightarrow}\limits}

\newcommand{\para}[1]{\noindent \textbf{#1}}

\title{Reachability with Restricted Reactions in Inhibitory Chemical Reaction Networks} 

\author{Divya Bajaj}{University of Texas Rio Grande Valley}{divya.bajaj@utrgv.edu}{}{}

\author{Bin Fu}{University of Texas Rio Grande Valley}{bin.fu@utrgv.edu}{}{}

\author{Ryan Knobel}{University of Texas Rio Grande Valley}{ryan.knobel01@utrgv.edu}{}{}

\author{Austin Luchsinger}{University of Texas Rio Grande Valley}{austin.luchsinger@utrgv.edu}{}{}

\author{Aiden Massie}{University of Texas Rio Grande Valley}{aiden.massie01@utrgv.edu}{}{}

\author{Pablo Santos}{University of Texas Rio Grande Valley}{pablo.santos01@utrgv.edu}{}{}

\author{Ramiro Santos}{University of Texas Rio Grande Valley}{ramiro.santos01@utrgv.edu}{}{}

\author{Robert Schweller}{University of Texas Rio Grande Valley}{robert.schweller@utrgv.edu}{}{}

\author{Evan Tomai}{University of Texas Rio Grande Valley}{evan.tomai@utrgv.edu}{}{}

\author{Tim Wylie}{University of Texas Rio Grande Valley}{timothy.wylie@utrgv.edu}{}{}

\authorrunning{D.~Bajaj et al.} 

\Copyright{Divya Bajaj \and Bin Fu \and Ryan Knobel \and Austin Luchsinger \and Aiden Massie \and Adrian Salinas \and Pablo Santos \and Ramiro Santos \and Robert Schweller \and Evan Tomai \and Tim Wylie}

\ccsdesc[500]{Theory of computation~Models of Computation}
\ccsdesc[500]{Theory of computation~Computational Complexity}

\keywords{Chemical Reaction Networks, Vector Addition Systems, Petri-nets, Reachability, Inhibitors, Void Reactions} 

\category{}

\relatedversion{Full Version}{}

\funding{This research was supported in part by National Science Foundation Grant CCF-2329918.}

\acknowledgements{}

\nolinenumbers 

\EventEditors{Pierre Fraigniaud}
\EventNoEds{1}
\EventLongTitle{20th Scandinavian Symposium on Algorithm Theory (SWAT 2026)}
\EventShortTitle{SWAT 2026}
\EventAcronym{SWAT}
\EventYear{2026}
\EventDate{June 17--19, 2026}
\EventLocation{Copenhagen, Denmark}
\EventLogo{}
\SeriesVolume{370}
\ArticleNo{5}

\begin{document}

\maketitle

\begin{abstract}
Chemical Reaction Networks (CRNs) are a well-established model of distributed computing characterized by quantities of molecular species that can transform or change through applications of reactions. A fundamental problem in CRNs is the reachability problem, which asks if an initial configuration of species can transition to a target configuration through an applicable sequence of reactions. It is well-known that the reachability problem in \emph{general} CRNs was recently proven to be Ackermann-complete. However, if the CRN's reactions are restricted in both power, such as only deleting species (deletion-only rules) or consuming and producing an equal number of species (volume-preserving rules), \emph{and} size (unimolecular or bimolecular rules), then reachability falls below Ackermann-completeness, and is even solvable in polynomial time for deletion-only systems.

In this paper, we investigate reachability under this set of restricted unimolecular and bimolecular reactions, but in the Priority-Inhibitory CRN and Inhibitory CRN models. These models extend a traditional CRN by allowing some reactions to be inhibited from firing in a configuration if certain species are present; the exact inhibition behavior varies between the models. We first show that reachability with Priority iCRNs mostly remains in P for deletion-only systems, but becomes NP-complete for one case. We then show that reachability with deletion-only reactions for iCRNs is mostly NP-complete, and PSPACE-complete even for $(1,1)$-size (general) reactions. We also provide FPT algorithms for solving most of the reachability problems for the iCRN model. Finally, we show reachability for CRNs with states is already NP-hard for the simplest deletion-only systems, and is PSPACE-complete even for (general) $(1,1)$-size reactions.
\end{abstract}

\newpage

\section{Introduction}

Chemical Reaction Networks (CRNs) \cite{Aris:1965:ARMA,Aris:1968:ARMA} are an abstract model of molecular computation in which molecular \emph{species} can evolve through applications of \emph{reactions}. Although initially motivated by modeling natural chemical interactions, CRNs have been recently explored for developing artificial chemical systems~\cite{NagipoguReif2025NeuralCRNs}, computations of semilinear functions \cite{chen2014deterministic}, digital logic \cite{cardelli:2018:chemical, hjelmfelt:1991:neural, jiang:2013:digital}, and neural networks \cite{hjelmfelt:1991:neural}, along with successful implementations in DNA Strand Displacement (DSD) systems \cite{soloveichik:2010:dna}. 

CRNs are computationally equivalent to two classical models of infinite-state systems: Petri-nets \cite{petri1966communication} and Vector Addition Systems (VAS) \cite{karp1967parallel,karp1969parallel}. This equivalence connects molecular computation with decades of complexity-theoretic results concerning reachability \cite{cook2009programmability, hack1976decidability}. In all three models, the \emph{reachability problem} asks whether a target configuration $\config{C_t}$ can be reached from a given initial configuration $\config{C_s}$ via a sequence of rule applications. Although certain characteristics of reachability in general CRNs (and equivalently, Petri-nets and VAS) were well-known for decades, such as EXPSPACE-hardness \cite{lipton1976reachability} and decidability \cite{mayr1981algorithm}, only recently has the problem been proven to be Ackermann-complete \cite{czerwinski2022reachability, leroux2022reachability}.

A crucial limitation of the classical models is the absence of \emph{zero-checking}. To address this, a natural and well-studied extension across all three models is to allow \emph{inhibition}, which allows certain reactions to be disabled by the presence of designated species. Petri-nets with inhibitor arcs \cite{hack1976petri,REINHARDT2008PVAS}, Priority Vector Addition Systems \cite{guttenberg2024flattability}, and Inhibitory Chemical Reaction Networks \cite{calabrese2024inhibitory} all explore this idea. Because inhibition enables zero-checking, these inhibitory extensions gain substantial power, achieving Turing Universality  \cite{calabrese2024inhibitory,peterson1981petri}. However, due to the Turing-universal power of zero-checking, the reachability problem becomes undecidable in general for such systems with inhibition. For this reason, and to better understand the boundaries between decidable/tractable and undecidable/intractable, we limit our study to restricted classes of CRNs with inhibition that use only deletion-only or volume-preserving rules.

For CRNs, \emph{deletion-only} (void) reactions never create any new species in the system (e.g., $\lambda_1+\lambda_2\rightarrow \emptyset$, or $\lambda_1+\lambda_2\rightarrow \lambda_1$), and \emph{volume-preserving} reactions contain an equal number of reactants and products, ensuring that the system's volume never changes (e.g., $\lambda_1+\lambda_2\rightarrow\lambda_3+\lambda_4$).
These restrictions are well-motivated: unlike general CRN systems, which can exhibit complex and error-prone behaviors that are challenging to implement \cite{wang:2018:effective}, deletion-only and volume-preserving systems are potentially simpler to realize and analyze \cite{AFG:2025:RRC, Fu:2025:DNA}.
Moreover, in the absence of inhibition, such restrictions significantly reduce computational power: deletion-only systems cannot compute even simple functions \cite{anderson2024steps}, reachability with bimolecular void reactions is solvable in polynomial time \cite{AFG:2025:RRC, Fu:2025:DNA}, and reachability with $(1,1)$-size reactions is solvable in polynomial time  \cite{AFG:2025:RRC}. 
These observations raise a central question: to what extent do inhibitory extensions increase the computational power of CRNs when restricted to these small-sized restricted reactions?

Under these structural restrictions, we consider two types of inhibitory CRN systems. The general \emph{Inhibitory} CRN (iCRN) \cite{calabrese2024inhibitory} model simply allows reaction execution to be inhibited by the presence of some subset of species in the configuration. The \emph{Priority-Inhibitory} CRN (P-iCRN) model establishes a priority for each reaction, in which a reaction with priority $k$ is inhibited from firing unless the first $k$ species are absent from the system. Here, the order of species is used to determine the inhibitors for each reaction. In this paper, we study the reachability problem under deletion-only and volume-preserving reactions for both of these models.

\subsection{Our Results}
For the restricted reaction types studied in this paper, we use the notation $(r,p)$ to denote reactions with $r$ reactants and $p$ products (e.g., (2,0) is a bimolecular void reaction and (1,1) is a volume-preserving unimolecular reaction).
For reference, in a traditional CRN model, reachability is in P for all of our considered void reactions \cite{AFG:2025:RRC, Fu:2025:DNA}, NL-hard for $(1,1)$-size reactions \cite{AFG:2025:RRC}, PSPACE-complete for $(2,2)$-size reactions \cite{ERW:2019:PAI}, and Ackermann-complete for general systems \cite{czerwinski2022reachability, leroux2022reachability}. 

As for organization, we first provide formal descriptions of all CRN models studied in this paper and some additional preliminaries in Section  \ref{sec:definitions}. We then study reachability under the Priority iCRN model in Section \ref{sec:p-icrns}. Here, we prove that reachability with our considered void reactions is mostly in P, but becomes NP-complete for $(2,0)+(2,1)$-size reactions. Section \ref{sec:icrns} looks at reachability for iCRNs. We show that reachability becomes NP-complete for all void reactions except for size $(1,0)$, and PSPACE-complete for $(1,1)$ and $(2,2)$-size reactions. We additionally provide FPT algorithms for solving reachability with void reactions of sizes $(2,0), (2,1)$, and $(k,k-1)$. 
The complete scope of our reachability results is shown in Table \ref{tab:results}.
Finally, in Section \ref{sec: conclusion}, we conclude with a discussion on some open problems for resolving complexity gaps of Table \ref{tab:results} and future potential work concerning CRNs using states as an unconventional yet powerful form of inhibition.

\begin{table}[t]
    \vspace{-.2cm}
    \centering
    \begin{tabular}{|@{}c@{}||c|@{}c@{}||c|c||c|c|c|}
    \hline
    \multicolumn{8}{|c|}{\textbf{Reachability under Inhibition}} \\
    \hline \hline
    \textbf{Rule Size} & \hyperref[subsec:crn]{\textbf{CRN}} & \textbf{Ref.} &\textbf{\hyperref[subsec:pvas]{\textbf{P-iCRN}}} & \textbf{Ref.} & \multicolumn{2}{c|}{\hyperref[subsec:inhib]{\textbf{iCRN}}} & \textbf{Ref.} \\
    \hline
    \multicolumn{8}{|c|}{\textbf{Deletion-Only Systems}} \\
    \hline
    $(1,0)$ & P& - &P & Obs. \ref{obs:(1,0) p-iCRN} & \multicolumn{2}{c|}{P} & Obs. \ref{obs:(1,0) iCRN} \\
    \hline
    $(2,0)$ & P& \cite{AFG:2025:RRC} & P & Thm. \ref{thm:(2,0)-Priority}  & NPC  & \cellcolor{blue!15} \hyperref[open: fpt-improvements]{FPT}& Thm. \ref{thm:(2, 0) icrn}, \ref{thm:(2,0)-fpt-icrn}  \\
    \hline
    $(2,1)$ & P& \cite{Fu:2025:DNA} &  P & Thm. \ref{thm:(k,k-1) PiCRN}  & NPC & \cellcolor{blue!15} \hyperref[open: fpt-improvements]{FPT} & Thm. \ref{thm:(2, 1) icrn}, \ref{thm:(k,k-1)-fpt-icrn} \\
    \hline
    $(k, k-1)$ & P & \cite{Fu:2025:DNA} & P & Thm. \ref{thm:(k,k-1) PiCRN} & NPC & \cellcolor{blue!15} \hyperref[open: fpt-improvements]{FPT} & Thm. \ref{thm:(k, k-1) icrns}, \ref{thm:(k,k-1)-fpt-icrn} \\
    \hline
    $(2,0), (2,1)$ &P & \cite{Fu:2025:DNA} & \cellcolor{blue!15} \hyperref[open: w1-hardness]{NPC} & Thm. \ref{thm:(2, 0) + (2, 1) picrns} & \multicolumn{2}{c|}{\cellcolor{blue!15}\hyperref[open: w1-hardness]{NPC}} &  Thm. \ref{thm:(2, 0) + (2, 1) icrns}\\
    \hline
    \multicolumn{8}{|c|}{\textbf{General Systems}} \\
    \hline
    $(1,1)$ & P, NL-hard & \cite{AFG:2025:RRC} & \multicolumn{2}{c|}{\cellcolor{blue!15} \hyperref[open: (1,1)]{Open}} & \multicolumn{2}{c|}{PSPACE-c} & Thm. \ref{thm:iCRN (1,1) Pspace-hardness} \\
    \hline
    $(2,2)$ & PSPACE-c& \cite{ERW:2019:PAI} & PSPACE-c & Obs. \ref{obs:(2,2)-Picrn} & \multicolumn{2}{c|}{PSPACE-c} & Obs. \ref{obs:(2,2)-icrn} \\    
    \hline
    General & Ack-c& \cite{czerwinski2022reachability,leroux2022reachability} & Decidable & \cite{guttenberg2024flattability,REINHARDT2008PVAS} & \multicolumn{2}{c|}{Undecidable}  & \cite{calabrese2024inhibitory,REINHARDT2008PVAS}  \\
    \hline
    \end{tabular}
    \caption{Reachability with Inhibition for restricted rule types. Void systems that are NP-complete in regular CRNs, $(k, k-i)$ where $k \geq 3$ and $i \geq 2$, are not listed since they are still NP-hard with inhibitions. Cells highlighted in \colorbox{blue!15}{blue} link to interesting open directions (in Section~\ref{sec: conclusion}) that extend the results in the table.}
    \label{tab:results}
    \vspace{-.4cm}
\end{table}

\section{Preliminaries}\label{sec:definitions}

We define the three chemical reaction network models considered in this paper: the basic CRN model (Section~\ref{subsec:crn}), the Inhibitory CRN model (Section~\ref{subsec:inhib}), and the Priority iCRN model (Section~\ref{subsec:pvas}).

\begin{figure}[t]
    \centering
    \begin{subfigure}{0.30\textwidth}
        \centering
        \includegraphics[width=1\textwidth]{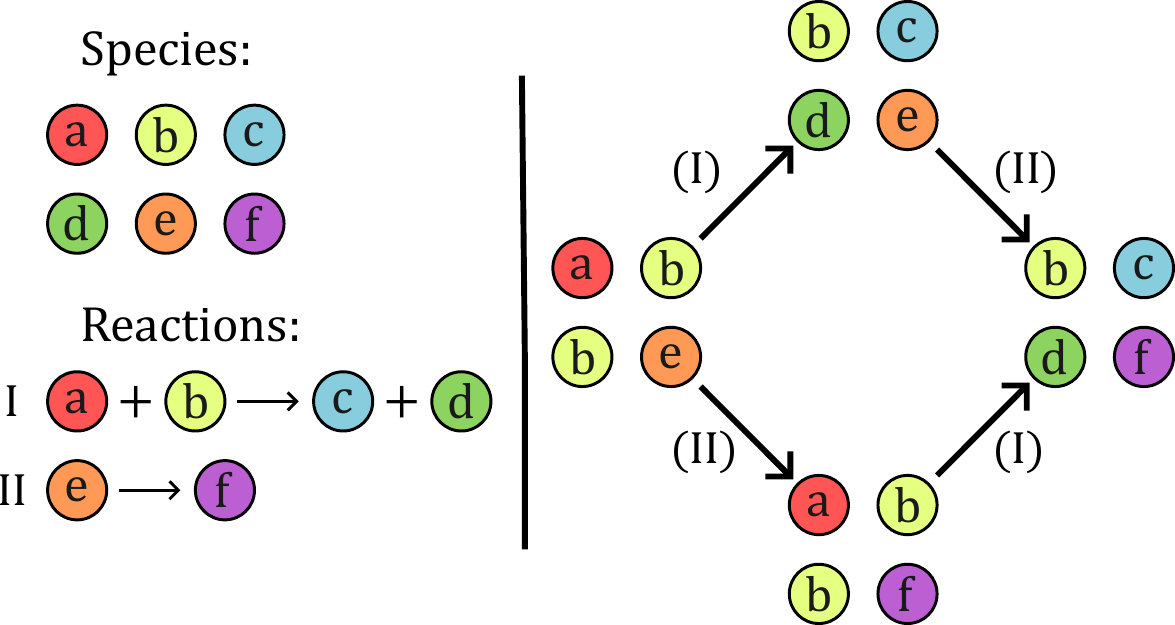}
        \subcaption{Basic CRN Example.}
        \label{subfig:basic_example}
    \end{subfigure}
    \hspace{0.2cm}
    \begin{subfigure}{0.30\textwidth}
        \centering
        \includegraphics[width=1\textwidth]{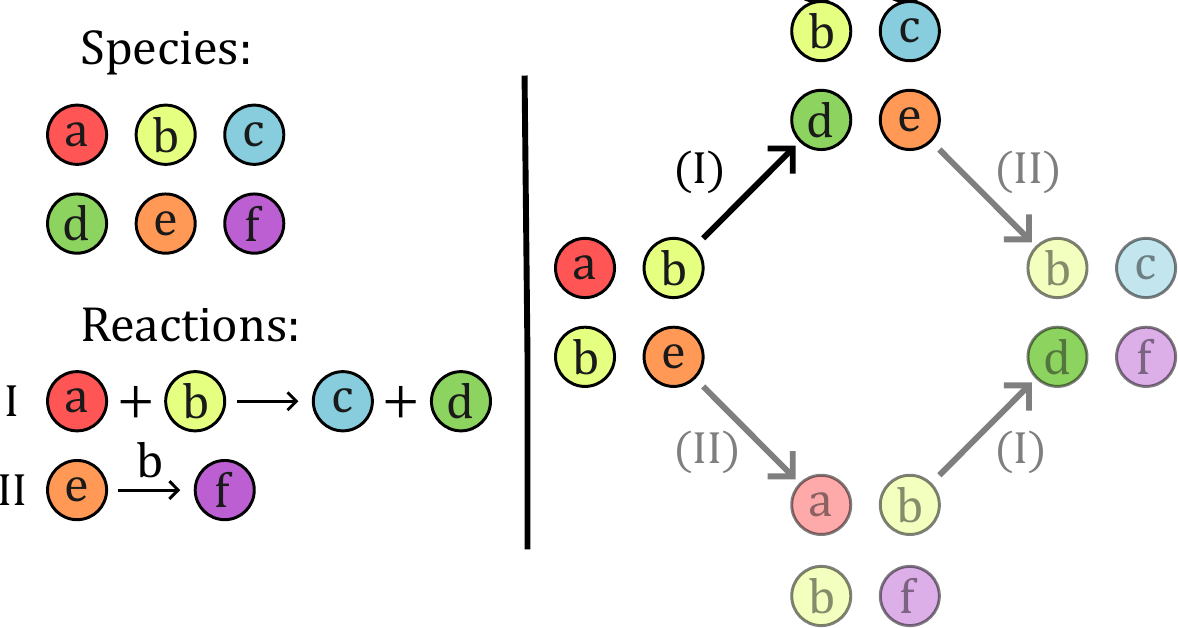}
        \subcaption{Inhibitory CRN Example.}
        \label{subfig:inhibitory_example}
    \end{subfigure}
    \hspace{0.2cm}
    
    \begin{subfigure}{0.40\textwidth}
        \centering
        \includegraphics[width=1\textwidth]{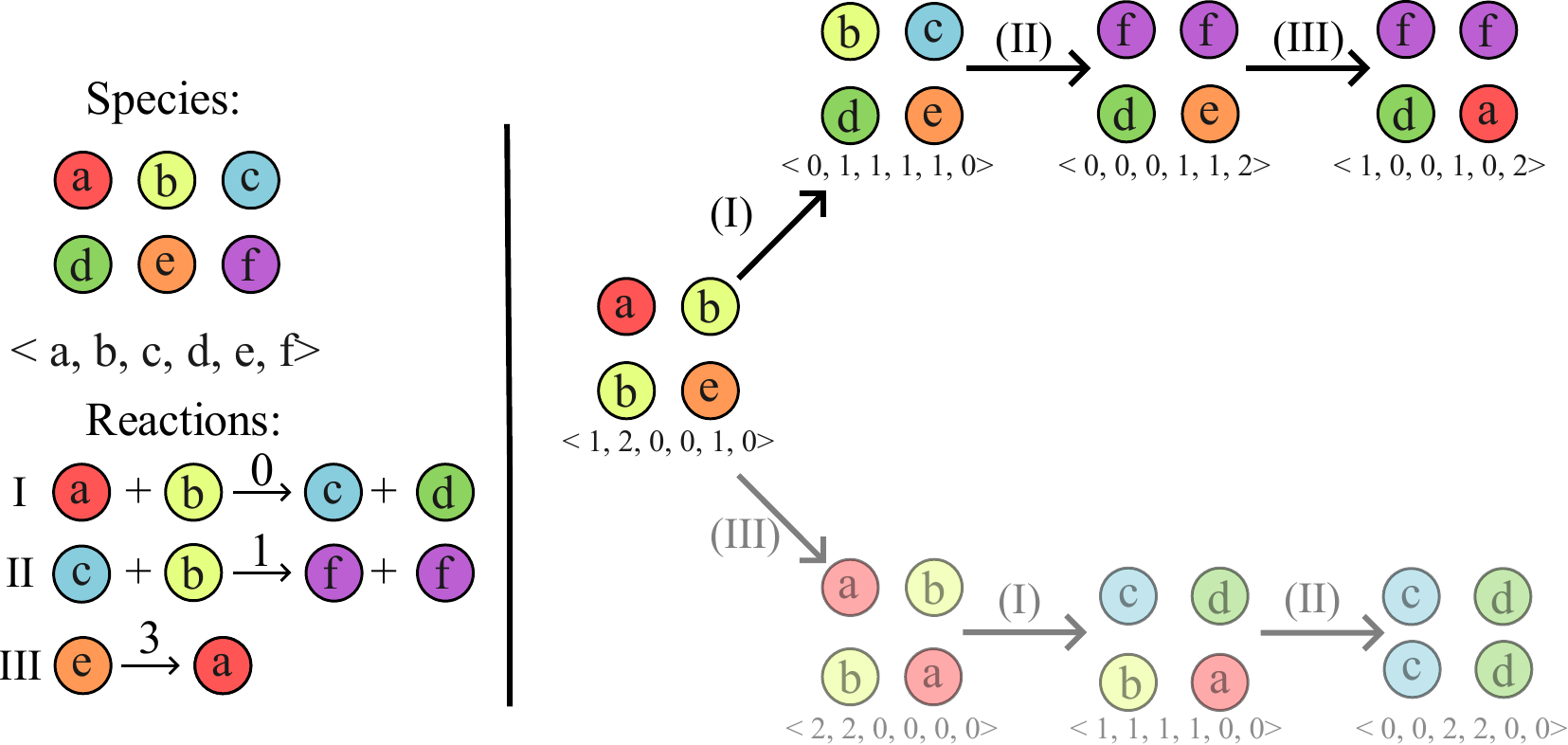}
        \subcaption{Priority iCRN Example.}
        \label{subfig:priority_example}
    \end{subfigure}
    \caption{Example systems of the CRN models studied in this paper. Each arrow represents an application of a reaction. Transparent arrows represent invalid reaction applications. (a) Basic CRN. Observe how, despite two different reaction sequences being applicable in the initial configuration, both lead to the same terminal configuration. (b) Inhibitory CRN. Because Reaction $2$ is inhibited by the species $b$, it cannot be applied in this system. (c) Priority iCRN. Reaction 3 is inhibited by species $a$ and $b$, since both species are in the system, it's not an applicable reaction. }
    \label{fig:crn_models}
\end{figure}

\subsection{Chemical Reaction Networks}\label{subsec:crn}

A \emph{chemical reaction network} (CRN) $\crn{C} = (\species, \reactions)$
is defined by a finite set of species $\species$
, and a finite set of reactions $\reactions$ where each reaction is a pair $( \reactants,\products) \in \N^\species \times \N^\species$, sometimes written $\config{R} \rxn \config{P}$, that denotes the \textit{reactant} species consumed by the reaction and the \textit{product} species generated by the reaction.
For instance, given $\species = \{a,b,c\}$, the reaction $( (2,0,0), (0,1,1) )$ represents $2a \rxn b + c$; 2 $a$ species are removed, and 1 new $b$ and $c$ species are created. See Figure \ref{subfig:basic_example} for an example.

A \textit{configuration} $\config{C}\in \N^{\size{\species}}$ of a CRN assigns integer counts to every species $\lambda \in \species$, and we use notation $\config{C} \, [\lambda]$ to denote that count.
For a species $\lambda \in \species$, we denote the configuration consisting of a single copy of $\lambda$ and no other species as $\single{\lambda}$.  
It is often useful to reference the set of species whose counts are not zero in a given configuration.
In such cases, the notation $\Set{\config{C}}$ is used.
Formally, $\Set{\config{C}} = \{\lambda \in \species \mid \config{C}[\lambda] > 0\}$, and when convenient and clear from the context, we further use $\{\config{C}\}$ to denote the configuration (vector) representation in which each element has a single copy.  Finally, let $|\config{C}| = \sum_{\lambda\in \species} C[\lambda]$ denote the total number of copies of all species in a configuration, sometimes referred to as the \emph{volume} of $\config{C}$.

A reaction $(\reactants,\products)$ is said to be \textit{applicable} in configuration $\config{C}$ if $\reactants \leq \config{C}$; in other words, a reaction is applicable if $\config{C}$ has at least as many copies of each species as $\reactants$. If the reaction $(\reactants,\products)$ is applicable, it results in configuration $\config{C'} = \config{C} - \reactants + \products$ if it occurs, and we write $\config{C} \rightarrow_{crn}^{(\species,\reactions)} \config{C'}$, or simply $\config{C} \stepsto \config{C'}$ when the model and CRN are clear from context.

A \emph{void} reaction is any rule that does not create any new copies of any species type, and can only delete or preserve existing species copies. Thus, a void reaction either has no products or has products that are a subset of its reactants (in which case these products are termed \emph{catalysts}). 

\begin{definition}[Discrete Chemical Reaction Network]  A discrete chemical reaction network (CRN) is an ordered pair $(\species, \reactions)$ where $\species$ is an ordered alphabet of species, and $\reactions$ is a set of rules over $\species$.
\vspace{-.1cm}
\end{definition}
\begin{definition}[Basic CRN Dynamics]  For a CRN $(\species, \reactions)$ and configurations $\config{A}$ and $\config{B}$, we say that $\config{A} \rightarrow_{crn}^{(\species,\reactions)} \config{B}$ if there exists a rule $(\reactants,\products)\in \reactions$ such that $\reactants \leq \config{A}$, and $\config{A} - \reactants + \products = \config{B}$.
\end{definition}

If there exists a finite sequence of configurations such that $\config{C} \stepsto \config{C}_1 \stepsto \dots \stepsto \config{C}_n \stepsto \config{D}$, then we say that $\config{D}$ is \textit{reachable} from $\config{C}$ and we write $\config{C} \reaches \config{D}$.
A configuration is said to be \textit{terminal} if no reactions are applicable.
We also define an \emph{initial configuration} for a CRN as its starting configuration. A \emph{CRN System} $T$ is then defined as a pair of a CRN model and its initial configuration.

The following sections define extensions of the basic CRN model by way of defining modified dynamics.
For each model, assume the concepts of reachability and terminality are derived from the dynamics in the same manner as for the basic CRN model.

\subsubsection{Inhibitory CRNs}\label{subsec:inhib}

A reaction $\reaction$ is said to be \textit{inhibited} by a species $\lambda$ when the reaction $\reaction$ may only be applied if $\lambda$ is absent in the system. We define an inhibitor mapping $\mathcal{I} : \reactions \rightarrow \mathbb{P}(\species)$ that maps a reaction to a subset of species that inhibit the reaction. An \emph{Inhibitory CRN} $\crn{C_{IC}} = ((\species, \reactions), \mathcal{I})$ as defined by \cite{calabrese2024inhibitory} is then a basic CRN along with the mapping $\mathcal{I}$. See Figure \ref{subfig:inhibitory_example} for an example.

\begin{definition}[Inhibitory Dynamics]\label{def:inhibitory-dynamics}
    For a Inhibitory CRN $((\species, \reactions), \mathcal{I})$ and configurations $\config{A}$ and $\config{B}$, we say that $\config{A} \rightarrow_{\crn{C_{IC}}}^{(\species, \reactions)} \config{B}$ if there exists a rule $\reaction = (\reactants, \products) \in \reactions$ such that $\reactants \leq \config{A}$, $\config{A} -\reactants+\products=\config{B}$, and $A[\lambda] = 0, \forall \lambda \in \mathcal{I}(\reaction)$.
\end{definition}

\subsection{Vector Addition Systems}
A $d$-dimensional \emph{Vector Addition System} (VAS)  $\vas{V} = (\config{C_0}, \transitions)$ is defined by \cite{karp1969parallel} as an initial configuration vector $\config{C_0} \in \Z_{\geq 0}^d$, and a finite set of transitions $\transitions \subseteq \Z^d$.

\begin{definition}[VAS Dynamics]
    For a VAS $(\config{C_s}, \transitions)$ and configurations $\config{A}$ and $\config{B}$, we say $\config{A} \rightarrow_{\vas{V}} \config{B}$ if there exists a transition $\config{y} \in \transitions$ such that $\config{A} + \config{y} = \config{B}$.
\end{definition}

The concept of priority on transitions in a Vector Addition System is defined in \cite{guttenberg2024flattability}. Each transition in a $d$-dimensional \emph{Priority Vector Addition System} (PVAS) is mapped to an integer in $[0,d]$. For a system in configuration $\config{A}$, a transition $\config{y}$ with priority $p \leq d$ is applicable iff $\forall i\leq p, \config{A}[i] = 0$ and $\config{A} + \config{y} \in \Z_{\geq 0}^d$. 

The correspondence between Vector Addition Systems and Chemical Reaction Networks is direct, especially in the absence of catalysts. A transition vector $\config{y}$ in a VAS can be written as a reaction $(\reactants, \products)$ in CRN where
\[
    (\reactants[\lambda_i], \products[\lambda_i]) = \begin{cases}
        (\big|\config{y}[i]\big|, 0) & \text{if}\; \config{y}[i] < 0\\
        (0, \config{y}[i]) & \text{if}\; \config{y}[i] > 0\\
        (0,0) & \text{otherwise}
    \end{cases}
\]

Since PVAS are essentially VAS with prioritized inhibition on transitions, it is natural to compare this model to iCRNs, which have unprioritized inhibition on reactions.
While Vector Addition Systems are not capable of \emph{catalytic} transitions, it is known that a catalytic reaction in a CRN can be replaced by two non-catalytic reactions that use a unique intermediate species.
Therefore, Vector Addition Systems are equivalent to Chemical Reaction Networks in general. However, this equivalence may break down under restricted settings such as deletion-only or small-size rules. This observation motivates a reformulation of PVAS as Priority iCRN in order to compare these results to inhibitory CRNs without priority.

\subsubsection{Priority iCRN} \label{subsec:pvas}
We define a \emph{priority} mapping $\mathcal{P}:\reactions \rightarrow [\,0,\size{\species}\,]$ that maps a reaction to an integer in $[\,0,\size{\species}\,]$. A \emph{Priority iCRN} $\crn{C_{PI}} = ((\species, \reactions), \mathcal{P})$ is defined as a basic CRN along with mapping $\mathcal{P}$. Because the priority relies on the ordering of species, we make our set of species $\species$ in a Priority iCRN an ordered set. See Figure \ref{subfig:priority_example} for an example.

\begin{definition}[Priority Inhibitory Dynamics]
    For a Priority iCRN $((\species, \reactions), \mathcal{P})$ and configurations $\config{A}$ and $\config{B}$, we say that $\config{A} \rightarrow_{\crn{C_{PI}}}^{(\species, \reactions)} \config{B}$ if there exists a rule $\reaction = (\reactants, \products) \in \reactions$ such that $\reactants \leq \config{A}$, $\config{A} -\reactants+\products=\config{B}$, and $A[\lambda_i] = 0, \forall i \in [1,\mathcal{P}(\reaction)]$.
\end{definition}

\subsection{Reachability}
\begin{definition}[Reachability Problem]
Given any CRN System with a species set $\species$ and a reaction set $\reactions$, an initial configuration $\config{C_s}$ and a destination (target) configuration $\config{C_t}$: is $\config{C_t}$ reachable from $\config{C_s}$ following the dynamics of the model. 
\end{definition}

We also consider a special variant of reachability referred to as the \emph{empty configuration reachability} problem. Here, the destination configuration is always the empty configuration $\emptyConfig$, where all species have a count of zero. Thus, the problem can be re-interpreted as asking if there exists a sequence of reactions that can delete all present species in $\config{C_s}$.
\section{Priority Inhibitory CRNs}\label{sec:p-icrns}
We start our analysis of reachability in inhibitory systems with Priority iCRNs. We first show that the problem does not leave P with size $(2,0)$ or $(2,1)$ reactions. However, when considering systems with both $(2,0)$ and $(2,1)$ size rules, we show that the problem becomes NP-complete, even with a maximum priority of 1. 

\subsection{Priority iCRNs with Only Void Rules}

We state the following observation for completeness.

\begin{observation} \label{obs:(1,0) p-iCRN}
    Given a Priority $iCRN~\crn{C_{PI}} = ((\species, \reactions), \mathcal{P})$ with only void rules of size $(1,0)$, the reachability problem is solvable in $\mathcal{O}(\size{\species} + \size{\reactions})$ time.
\end{observation}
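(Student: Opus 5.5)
The plan is to characterize reachability directly and then check the characterization in linear time. With only $(1,0)$ rules, every reaction has the form $\lambda \rightarrow \emptyset$ with some priority $p$. Such a reaction deletes one copy of $\lambda$ and is enabled exactly when $\lambda$ is present and $\lambda_1,\dots,\lambda_p$ are all absent. Since nothing is ever created, counts only decrease. So $\config{C_t}$ is reachable from $\config{C_s}$ only if $\config{C_t} \leq \config{C_s}$, and in that case we must delete exactly $d_\lambda = \config{C_s}[\lambda] - \config{C_t}[\lambda]$ copies of each species $\lambda$. Call $\lambda$ \emph{deletable} if $d_\lambda > 0$.

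\textbf{Choosing the rule for each species.} For each species $\lambda$, take the deletion rule for $\lambda$ with the smallest priority, and call that priority $p_\lambda$. Using this rule is never worse than using any other rule for $\lambda$, since its inhibitor set is a prefix contained in theirs. If some deletable $\lambda$ has no rule at all, the answer is no.

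\textbf{Condition for each deletable species.} A deletion of $\lambda$ needs every $\lambda_i$ with $i \leq p_\lambda$ to be absent at that moment.
\begin{itemize}
\item If $\lambda = \lambda_j$ with $j \leq p_\lambda$, the rule inhibits itself while $\lambda_j$ is present, so it can never fire. The answer is no.
\item If some $\lambda_i$ with $i \leq p_\lambda$ has $\config{C_t}[\lambda_i] > 0$, then $\lambda_i$ is present at every point of any successful run. It would block $\lambda$ for the whole run, so the answer is no.
\end{itemize}

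\textbf{The ordering argument (main step).} Assume neither failure occurs. I claim the following schedule succeeds: process the deletable species in increasing index order, and for each $\lambda_j$ fire its chosen rule $d_{\lambda_j}$ times.

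To see this, consider the moment we handle $\lambda_j$. The rule needs all $\lambda_i$ with $i \leq p_{\lambda_j}$ to be absent.
\begin{itemize}
\item We have $p_{\lambda_j} < j$, so each such $i$ satisfies $i < j$.
\item Each such $\lambda_i$ has target count $0$ by the second condition.
\item So $\lambda_i$ is either deletable or has $d_{\lambda_i} = 0$ with zero initial count. In both cases it is already at count $0$, because deletable species with smaller index were handled earlier.
\end{itemize}
Deleting copies of $\lambda_j$ does not affect its own enabling condition, again because $p_{\lambda_j} < j$. The run ends exactly at $\config{C_t}$.

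The necessity direction is the two impossibility arguments above. So reachability holds if and only if:
\begin{itemize}
\item $\config{C_t} \leq \config{C_s}$, and
\item every deletable $\lambda_j$ has a rule with $p_{\lambda_j} < j$, and
\item $\config{C_t}[\lambda_i] = 0$ for all $i \leq p_{\lambda_j}$.
\end{itemize}
The ordering argument is the only real obstacle, and the index-order schedule resolves it.

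\textbf{Running time.} To meet the $\mathcal{O}(\size{\species}+\size{\reactions})$ bound:
\begin{itemize}
\item One pass over $\reactions$ computes each $p_\lambda$.
\item One pass over $\species$ computes the prefix quantity $f(k)$, defined as the smallest index $i$ with $\config{C_t}[\lambda_i] > 0$. The third condition then becomes $p_{\lambda_j} < f$.
\item A final pass over $\species$ checks all conditions in constant time per species.
\end{itemize}
This assumes constant-time arithmetic on counts, which is the standard convention for this setting.
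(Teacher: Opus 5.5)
Your proof is correct and follows essentially the same route as the paper's. The paper first discards every rule whose priority reaches a species with positive target count, then deletes species to their target counts in increasing index order; your version additionally makes explicit the choice of the minimum-priority rule per species and the self-inhibition check $p_{\lambda_j} < j$, and it replaces simulating the deletions with a direct linear-time test of the resulting characterization, details the paper's short algorithm leaves implicit.
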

\begin{proof}
    Given a Priority iCRN, start configuration $\config{C_s}$ and target configuration $\config{C_t}$:
    \begin{enumerate}
        \item For any species $\lambda_i$, if $\config{C_t}[\lambda_i] \not= 0$ then remove all reactions $\reaction_j$ such that $\mathcal{P}(\reaction_j) \geq i$. Since we only consider void rules, such reactions will never be applicable in any configuration along any path from $\config{C_s}$ through $\config{C_t}$.
        \item For $i = 1$ to $\size{\species}$, if there exists a reaction of form $\lambda_i \rightarrow \phi$: 
            \begin{enumerate}
                \item then apply the reaction $\config{C_s}[\lambda_i] - \config{C_t}[\lambda_i]$ times.
                \item else, skip to step $(3)$.
            \end{enumerate}
        \item If the final configuration is reached, return ``yes'', otherwise return ``no''.
    \end{enumerate}
\end{proof}

\textbf{Void Rules of Size $(2,0)$.}
\noindent
We now show that the reachability problem in Priority iCRNs with $(2,0)$ reactions is also solvable in polynomial time. To do so, we start by showing that the general reachability problem for a given Priority iCRN with $(2,0)$ void rules can be reduced to the empty reachability problem in the given CRN in Lemmas \ref{lem:piCRN-empty1}, \ref{lem:piCRN-empty2}, and \ref{lem:piCRN-empty}. We then reduce the empty configuration reachability problem in the Priority iCRN to the known maximum $b$-matching problem, which is solvable in polynomial time \cite{b-matching_runtime}, in Lemma \ref{lem:(2,0)-Priority}.

Given a $(2,0)$ void Priority iCRN system $((\species, \reactions), \mathcal{P})$ such that $k = \max_{\reaction\in\reactions} \mathcal{P}(\reaction)$. We can partition the set of species $\species$ into set of inhibiting species $\{\lambda_1,\ldots,\lambda_k\}$ and set of non-inhibiting species $\{\lambda_{k+1},\ldots,\lambda_{\size{\species}}\}$. 

\begin{lemma}\label{lem:piCRN-empty1}
    Given two configurations $\config{C_s}$ and $\config{C_t}$ in a $(2,0)$ void Priority iCRN $\crn{C_{PI}}$, $\config{C_s} \reaches \config{C_t} \iff \config{C_s}-\config{C_t}\reaches\emptyConfig$, if $\forall i\in[1,k],$ $\config{C_t}[\lambda_i] = 0$.
\end{lemma}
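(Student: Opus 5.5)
We prove the two directions separately, using that every reaction is a $(2,0)$ void rule and that applicability under priority depends only on which of $\lambda_1,\dots,\lambda_k$ are present.

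\emph{($\Rightarrow$)} Suppose $\config{C_s} \reaches \config{C_t}$ via a sequence $\reaction_1,\dots,\reaction_m$ with intermediate configurations $\config{C_s}=\config{A_0},\config{A_1},\dots,\config{A_m}=\config{C_t}$. Since rules only delete, $\config{A_j} \geq \config{C_t}$ for all $j$. We apply the same sequence from $\config{C_s}-\config{C_t}$, obtaining configurations $\config{A_j}-\config{C_t}\geq \emptyConfig$.

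Reactant availability holds because $\config{A_{j-1}}-\config{C_t}\geq\reactants_j$. The reason is that $\config{A_{j}} = \config{A_{j-1}}-\reactants_j$ and $\config{A_j}\geq \config{C_t}$.

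The priority condition also holds. It requires $(\config{A_{j-1}}-\config{C_t})[\lambda_i]=0$ for $i\le \mathcal{P}(\reaction_j)\le k$. This follows from $\config{A_{j-1}}[\lambda_i]=0$ together with $\config{C_t}[\lambda_i]=0$. In fact $\config{C_t}[\lambda_i]=0$ is not even needed here, since subtracting from a zero count keeps it zero.

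The sequence therefore ends at $\config{C_t}-\config{C_t}=\emptyConfig$.

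\emph{($\Leftarrow$)} Suppose $\config{C_s}-\config{C_t} \reaches \emptyConfig$ via configurations $\config{B_0},\dots,\config{B_m}$. We replay the same reactions from $\config{C_s}$, which produces $\config{B_j}+\config{C_t}$. Reactants are available because adding $\config{C_t}$ only increases counts.

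The only real point is the inhibition check. For $i\le \mathcal{P}(\reaction_j)\le k$ we need $(\config{B_{j-1}}+\config{C_t})[\lambda_i]=0$. We know $\config{B_{j-1}}[\lambda_i]=0$, and the hypothesis $\config{C_t}[\lambda_i]=0$ for all $i\in[1,k]$ supplies the rest. This is exactly where the hypothesis is used, and it is the only delicate step: without it, the extra copies of inhibiting species in $\config{C_t}$ would block the replayed reactions.

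The replayed sequence ends at $\emptyConfig+\config{C_t}=\config{C_t}$. This completes both directions.

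One implicit point must be noted. Reachability of $\config{C_t}$ requires $\config{C_t}\le\config{C_s}$, because rules only delete. Likewise, the expression $\config{C_s}-\config{C_t}$ is only meaningful as a configuration when $\config{C_t}\le\config{C_s}$. We therefore assume $\config{C_t}\le\config{C_s}$, or treat both sides as false otherwise.
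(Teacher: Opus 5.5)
Your proof is correct and rests on the same idea as the paper's. Because $\config{C_t}$ vanishes on $\lambda_1,\dots,\lambda_k$, shifting every intermediate configuration by $\config{C_t}$ never changes whether a reaction is inhibited, and because $(2,0)$ rules are non-catalytic deletions the reactant counts shift cleanly. Your explicit replay of the reaction sequence in both directions is a more rigorous version of the paper's brief species-by-species sketch, and it correctly identifies that the hypothesis is needed only for the $\Leftarrow$ direction.
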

\begin{proof}
    Given that $\forall i\leq k, \config{C_t}[\lambda_i]=0$, $\config{C_s}[\lambda_i]-\config{C_t}[\lambda_i] = \config{C_s}[\lambda_i]$. Therefore, all inhibited reactions will be applicable at some point.

    Since void rules only delete species, and do not add new counts for species, for all non-inhibiting species $\lambda_j$ with $j>k$, if $\config{C_t}[\lambda_j]$ is reachable from $\config{C_s}[\lambda_j]$, then $\emptyConfig[\lambda_j]$ is reachable from $\config{C_s}[\lambda_j]-\config{C_t}[\lambda_j]$.
\end{proof}

\begin{lemma}\label{lem:piCRN-empty2}
    Given two configurations $\config{C_s}$ and $\config{C_t}$ in a $(2,0)$ void Priority iCRN $\crn{C_{PI}}$, $\config{C_s} \reaches \config{C_t} \iff \config{C_s}-\config{C_t} \reaches \emptyConfig$, if $\exists i \leq k, \config{C_t}[\lambda_i]\not= 0$.
\end{lemma}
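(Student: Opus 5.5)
The plan is to reduce this case to the situation of Lemma~\ref{lem:piCRN-empty1} by first discarding the reactions that can never fire on a path ending in $\config{C_t}$. One caveat up front: the equivalence will hold for the system with those reactions removed, which is how I read the statement.

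Let $i^* = \min\{ i \leq k : \config{C_t}[\lambda_i] \neq 0\}$. Since all rules are void, the count of every species is non-increasing along any execution. So on any path $\config{C_s} \reaches \config{C_t}$, every intermediate configuration $\config{A}$ satisfies $\config{A} \geq \config{C_t}$, and in particular $\config{A}[\lambda_{i^*}] > 0$. Hence no reaction $\reaction$ with $\mathcal{P}(\reaction) \geq i^*$ is ever applicable on such a path. As in step 1 of Observation~\ref{obs:(1,0) p-iCRN}, I will delete all such reactions and work with the reduced system $\crn{C}'_{PI}$. Reachability of $\config{C_t}$ from $\config{C_s}$ is unchanged by this deletion. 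The substantive claim is then that $\config{C_s} \reaches \config{C_t}$ holds if and only if $\config{C_s}-\config{C_t} \reaches \emptyConfig$ in $\crn{C}'_{PI}$.

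For the forward direction, take a path $\config{C_s} = \config{A}_0 \stepsto \config{A}_1 \stepsto \cdots \stepsto \config{A}_m = \config{C_t}$. I will apply the same reaction sequence to $\config{A}_j - \config{C_t}$, and two checks are needed.
\begin{itemize}
\item \emph{Reactant availability.} Since $\config{A}_{j+1} = \config{A}_j - \reactants \geq \config{C_t}$, we get $\reactants \leq \config{A}_j - \config{C_t}$.
\item \emph{Inhibition.} Each reaction used has priority $p < i^*$, so its inhibitors are $\lambda_1,\dots,\lambda_p$, all with index $< i^*$. By the minimality of $i^*$, $\config{C_t}[\lambda_l] = 0$ for these species, so $(\config{A}_j - \config{C_t})[\lambda_l] = \config{A}_j[\lambda_l] = 0$.
\end{itemize}
The shifted sequence therefore ends at $\emptyConfig$.

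For the reverse direction, take a path $\config{B}_0 = \config{C_s}-\config{C_t} \stepsto \cdots \stepsto \emptyConfig$ in $\crn{C}'_{PI}$ and add $\config{C_t}$ to every configuration. Reactants stay available, because adding species never hurts applicability. Inhibition is preserved for the same reason as before: each reaction in $\crn{C}'_{PI}$ has priority $< i^*$, and $\config{C_t}$ is zero on $\lambda_1,\ldots,\lambda_{i^*-1}$.

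The main obstacle is exactly the reverse direction in the unreduced system. In $\config{C_s}-\config{C_t}$ the species $\lambda_{i^*}$ may be absent, so reactions of priority $\geq i^*$ could fire on the difference path, and they would be blocked once $\config{C_t}$ is added back. This is why the proof must first prune those reactions, justified by the monotonicity of counts under void rules. With the pruning in place, both directions are routine shift arguments of the same kind used in Lemma~\ref{lem:piCRN-empty1}.
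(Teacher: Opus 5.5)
Your proof is correct and follows the paper's route. The paper likewise deletes every reaction of priority $\geq i$ for any $i \leq k$ with $\config{C_t}[\lambda_i] \neq 0$ (equivalently, priority $\geq i^*$), and then obtains the equivalence for the pruned reaction set $\reactions'$ by citing Lemma~\ref{lem:piCRN-empty1}, whose hypothesis holds because the new maximum priority is below $i^*$. The only differences are that you write out the shift argument directly instead of citing that lemma, and your caveat that the equivalence holds for the pruned system is exactly what the paper's proof establishes.
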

\begin{proof}
    We modify the given Priority iCRN $\crn{C_{PI}}$ system as follows. We remove all reactions $\reaction \in \reactions$ such that $\mathcal{P}(\reaction) \geq i$, $\forall i \leq k$ if $\config{C_t}[\lambda_i] \not= 0$. This is because such reactions will never be applicable as they will always be inhibited. 
    Due to this modification, the species $\lambda_i,\ldots,\lambda_k$ no longer permanently act as inhibitors. The new reaction set $\reactions'$ contains the remaining inhibited reactions with priority greater than $0$. We now compute $k'$ over $\reactions'$ as $k' = \max_{\reaction\in\reactions'} \mathcal{P}(\reaction)$. We partition the set $\species$ again into set of inhibiting species $\{\lambda_1,\ldots,\lambda_{k'}\}$ and set of non-inhibiting species $\{\lambda_{k'+1},\ldots,\lambda_{\size{\species}}\}$. From Lemma \ref{lem:piCRN-empty1}, $\config{C_s} \reaches \config{C_t} \iff \config{C_s}-\config{C_t}\reaches\emptyConfig$ in a Priority iCRN $\crn{C_{PI}} = ((\species,\reactions'),\mathcal{P})$.
\end{proof}

\begin{lemma}\label{lem:piCRN-empty}
    Given two configurations $\config{C_s}$ and $\config{C_t}$ in a $(2,0)$ void Priority iCRN $\crn{C_{PI}}$, $\config{C_s}\reaches\config{C_t}\iff\config{C_s}-\config{C_t}\reaches\emptyConfig$.
\end{lemma}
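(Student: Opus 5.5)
Lemma~\ref{lem:piCRN-empty} follows from a case split on whether some inhibiting species survives in the target. Either every $\lambda_i$ with $i\in[1,k]$ satisfies $\config{C_t}[\lambda_i]=0$, or there is some $i\le k$ with $\config{C_t}[\lambda_i]\neq 0$. These two cases are exhaustive and mutually exclusive.

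\textbf{Case 1.} If $\config{C_t}[\lambda_i]=0$ for all $i\in[1,k]$, Lemma~\ref{lem:piCRN-empty1} applies directly and gives $\config{C_s}\reaches\config{C_t}\iff\config{C_s}-\config{C_t}\reaches\emptyConfig$.

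\textbf{Case 2.} If some $i\le k$ has $\config{C_t}[\lambda_i]\neq 0$, Lemma~\ref{lem:piCRN-empty2} gives the equivalence, but for the pruned reaction set $\reactions'$. So two checks are needed to transfer it back to $\reactions$:
\begin{itemize}
\item No path from $\config{C_s}$ to $\config{C_t}$ under $\reactions$ ever uses a removed reaction. Rules are void, so counts never increase along the path. Hence any $\lambda_i$ with $\config{C_t}[\lambda_i]>0$ is present in every configuration of that path, and every reaction with priority $\ge i$ is blocked throughout.
\item Reachability of $\emptyConfig$ from $\config{C_s}-\config{C_t}$ must be read in the pruned system, or equivalently as the residual-deletion problem. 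This means the reduced problem is understood with $\reactions'$.
\end{itemize}
With these two checks, the statement reduces to the two earlier lemmas.

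The only delicate point is this transfer in Case~2. Pruning is sound because monotonicity of void rules preserves the inhibitors along any path to $\config{C_t}$. Conversely, every run of the pruned system from $\config{C_s}-\config{C_t}$ to $\emptyConfig$ can be replayed from $\config{C_s}$ with $\config{C_t}$ added on top. The added copies never enable a reaction of $\reactions'$ that would not otherwise fire, since those reactions have priority below every index $i$ with $\config{C_t}[\lambda_i]>0$. The added copies also never disable such a reaction: their inhibitors are indices below $k'$, which $\config{C_t}$ does not occupy by the choice of $k'$. I would state this replay argument explicitly and then combine the two cases.
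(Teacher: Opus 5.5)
Your proposal is correct and takes the same route as the paper, whose proof is just the case split on Lemmas~\ref{lem:piCRN-empty1} and~\ref{lem:piCRN-empty2}. Your extra care in Case~2 is warranted. Lemma~\ref{lem:piCRN-empty2} only gives the equivalence with $\emptyConfig$-reachability taken in the pruned set $\reactions'$, and with the right-hand side read in the original $\reactions$ the statement is false: take a single priority-$1$ rule $\lambda_2+\lambda_3\rightarrow\emptyset$ with $\config{C_s}=(1,1,1)$ and $\config{C_t}=(1,0,0)$. Your replay argument, using $k'<\min\{i:\config{C_t}[\lambda_i]>0\}$ so that $\config{C_t}$ vanishes on all of $\lambda_1,\dots,\lambda_{k'}$, is exactly the justification the paper leaves implicit.
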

\begin{proof}
    Follows from Lemmas \ref{lem:piCRN-empty1} and \ref{lem:piCRN-empty2}.
\end{proof}

\begin{definition}[$b$-matching]
Given a graph $G = (V, E)$ and some edge capacity function $u:E \rightarrow \mathbb{N}\cup\{\infty\}$, a $b$-value function $b:V \rightarrow \mathbb{N}$, and a $\delta(v)$ function that returns the set of incident edges of vertex $v$, find a maximum assignment $f:E \rightarrow \mathbb{N}$ s.t. $f(e) \leq u(e)$ for all $e \in E$ and $\sum_{e \in \delta(v)}f(e) \leq b(v)$ for all $v \in V$. We call it a perfect $b$-matching if $\sum_{e \in \delta(v)}f(e) = b(v)$ for all $v \in V$.
\end{definition}
The runtime of the maximum $b$-matching problem is strictly polynomial and runs in $O(|V|^2 \log (|V|)(|E| + |V| \log (|V|)))$ \cite{b-matching_runtime}. We use this runtime to solve the reachability problem for Priority iCRNs with only $(2,0)$-size reactions in polynomial time.

\begin{lemma}\label{lem:(2,0)-Priority}
Given a Priority $iCRN~\crn{C_{PI}} = ((\species, \reactions), \mathcal{P})$ with only void rules of size $(2,0)$, the empty reachability problem is decidable in $O(\size{\species}^2\log(\size{\species})(\size{\reactions} + \size{\species}\log(\size{\species})))$ time.

\end{lemma}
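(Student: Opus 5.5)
The plan is to reduce empty configuration reachability to perfect $b$-matching, and to show that in the deletion-only setting the priorities only remove reactions outright; they never constrain the order of the remaining ones.

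\textbf{Step 1: prune dead reactions.} Call a reaction $\reaction = \lambda_i+\lambda_j \rxn \emptyset$ \emph{dead} if $\min(i,j)\le \mathcal{P}(\reaction)$. Such a reaction needs a copy of $\lambda_{\min(i,j)}$ to be present, and it also needs $\lambda_{\min(i,j)}$ to be absent. So it is never applicable in any configuration, and we delete all dead reactions in $O(\size{\reactions})$ time. Every remaining reaction has both reactant indices strictly greater than its priority.

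\textbf{Step 2: order does not matter.} Let $M$ be a multiset of remaining reactions whose combined reactants equal $\config{C_s}$ exactly. I claim $M$ can always be executed, reaching $\emptyConfig$. Sort $M$ by the minimum reactant index $m(\reaction)$ in ascending order, breaking ties arbitrarily, and fire the reactions in that order.

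When $\reaction$ is fired, consider any species $\lambda_j$ with $j<m(\reaction)$. Every reaction in $M$ consuming $\lambda_j$ has minimum index at most $j<m(\reaction)$, so it was fired earlier. Because the counts in $M$ match $\config{C_s}$ exactly, $\lambda_j$ is already at zero. Since $\mathcal{P}(\reaction)<m(\reaction)$, the species $\lambda_1,\dots,\lambda_{\mathcal{P}(\reaction)}$ are all absent, so $\reaction$ is not inhibited. Its reactants are still present, since no reaction ever over-consumes a species; this is again by exact counting.

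Conversely, any reaction sequence reaching $\emptyConfig$ is such a multiset of non-dead reactions. Hence $\config{C_s}\reaches\emptyConfig$ if and only if such an exact multiset $M$ exists.

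\textbf{Step 3: encode as perfect $b$-matching.} Build a graph $G$ with vertex set $\species$ and $b(\lambda)=\config{C_s}[\lambda]$. Each remaining reaction $\lambda_i+\lambda_j\rxn\emptyset$ with $i\ne j$ becomes an edge $\{\lambda_i,\lambda_j\}$ with capacity $\infty$. Each reaction $2\lambda_i\rxn\emptyset$ becomes a self-loop contributing $2$ to $\sum_{e\in\delta(\lambda_i)} f(e)$. A perfect $b$-matching $f$ then corresponds exactly to a multiset $M$ with the reaction multiplicities $f(e)$.

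\textbf{Step 4: decide and bound the running time.} Run the maximum $b$-matching algorithm of \cite{b-matching_runtime}. The answer is yes iff the maximum value equals $\tfrac12\sum_\lambda b(\lambda)$, i.e.\ iff the matching is perfect. Since $|V|=\size{\species}$ and $|E|\le\size{\reactions}$, this gives the stated bound $O(\size{\species}^2\log(\size{\species})(\size{\reactions}+\size{\species}\log(\size{\species})))$, which dominates the pruning cost.

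\textbf{Main obstacle.} The subtle step is the scheduling argument in Step 2, showing that priorities impose no ordering constraint beyond pruning. The self-loop convention is a minor technicality. If the cited algorithm does not natively support loops, I would use the standard fix: give each such vertex an auxiliary twin, or handle loops by parity.
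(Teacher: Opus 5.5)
Your proof is correct and follows essentially the paper's route: discard self-inhibiting reactions, reduce to perfect $b$-matching with $b(v_i)=C_s[\lambda_i]$, and show that any perfect $b$-matching can be scheduled without ever hitting an inhibitor. The differences are minor. You fire reactions in ascending order of minimum reactant index, while the paper fires them in rounds of increasing priority; both orders work, and your justification is the more careful of the two. You also model $\lambda_i+\lambda_i\rightarrow\emptyset$ as a self-loop counted twice. The paper instead uses a triangle on $v_i$ and two auxiliary vertices, each with $b$-value $\lfloor b(v_i)/2\rfloor$. That triangle is the concrete loop-free fallback you would need, since a single pendant twin vertex would not force an even contribution at $v_i$.
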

\begin{proof}
    To solve the empty-configuration reachability problem in $(2,0)$ void Priority iCRN, we reduce it to an instance of the maximum $b$-matching problem.

    Given a Priority iCRN $\crn{C_{PI}} = ((\species, \reactions), \mathcal{P})$ with starting configuration $\config{C_s}$, we construct a $b$-matching instance by building a graph $G=(V,E)$ as follows. Here we assume that the given Priority iCRN only includes reactions $\reaction: \lambda_i + \lambda_j \rightarrow \phi$ where $i, j > \mathcal{P}(\reaction)$; any self-inhibiting reactions that exist ($i$ or $j$ is less than the priority) are ignored. We turn every species $\lambda_i\in\species$ into a vertex $v_i\in V$ and set the $b$-values for each of the vertices $b(v_i) = \config{C_s}[\lambda_i]$. For each $\reaction: \lambda_i + \lambda_j \rightarrow \phi$ in $\reactions$ such that $i\not=j$, add an undirected edge $e = (v_i, v_j)$ in $E$. For any reaction of the form $\lambda_i + \lambda_i \rightarrow \phi$, add vertices $v_i^1, v_i^2$ to the set $V$ and set the b-values $b(v_i^1) = b(v_i^2) = \floor{\frac{b(v_i)}{2}}$. We also add undirected edges $(v_i,v_i^1), (v_i,v_i^2), (v_i^1,v_i^2)$ in $E$.

    From the rules above, we have the graph $G$ well defined. 
    We now present the following claims regarding $G$.

    {\bf Claim 1. If $\config{C_{s}}$ can reach the empty configuration, then there exists a perfect $b$-matching in the graph $G$.} This is straightforward for the case when applying a reaction of the form $\lambda_i + \lambda_j \rightarrow \phi$ such that $i \not= j$. The reaction application results in the removal of one copy of species $\lambda_i$ and $\lambda_j$. This corresponds to an edge matching of $(v_i,v_j)$ on the graph $G$ according to the construction above. We reduce the values of $b(v_i)$ and $b(v_j)$ by one, and the pair $(v_i,v_j)$ form a perfect matching. For the case when applying a reaction of form $\lambda_i + \lambda_i \rightarrow \phi$, two copies of $\lambda_i$ are deleted from the configuration. This rule application corresponds to selecting edges $(v_i, v_i^1)$ and $(v_i, v_i^2)$. We reduce the counts of $b(v_i^1)$ and $b(v_i^2)$ by one, and that of $b(v_i)$ by two. The remaining $b$-values of vertices $v_i^1$ and $v_i^2$ are reduced by selecting edge $(v_i^1, v_i^2)$. And the pairs $(v_i, v_i^1)$, $(v_i, v_i^2)$, and $(v_i^1, v_i^2)$ form a perfect matching. 

    {\bf Claim 2. If there exists a perfect $b$-matching in the graph $G$, then $\config{C_{s}}$ can reach the empty configuration.} If there exists a perfect $b$-matching for the graph $G$, there exists an assignment $f : E \rightarrow \N$ such that $\sum_{e\in \delta(v)}f(e) = b(v)$ for all $v \in V$. From the construction of the graph, all edges $(v_i,v_j)\in E$ such that $i\not=j$ correspond to a reaction $\lambda_i + \lambda_j \rightarrow \phi$ in $\reactions$ in the Priority iCRN. And, edges of the form $(v_i,v_i^1), (v_i,v_i^2)\in E$ correspond to a reaction $\lambda_i + \lambda_i \rightarrow \phi$ in $\reactions$. We argue that, given any perfect $b$-matching, we can apply a corresponding set of reactions in a rearranged order based on their priority to reach the empty configuration. 
    
    Let $M$ contain the set of edges to be a perfect $b$-matching for $G$. We apply reactions in Priority iCRN corresponding to the edges in $M$ as follows.
    \begin{enumerate}
        \item Apply all reactions $\reaction = \lambda_i + \lambda_j \rightarrow \phi$ corresponding to edges $e_k=(v_i, v_j)$, $f(e_k)$ times, if $e_k\in M$, $i\not = j$ and $\mathcal{P}(\reaction)=0$. All such reactions are uninhibited and can be applied.
        \item Apply all reactions $\reaction = \lambda_i + \lambda_i \rightarrow \phi$ corresponding to edges $e_k=(v_i, v_i^1)$ and $e_{\ell}=(v_i, v_i^2)$, $f(e_k)$ times, if $e_k, e_{\ell} \in M$ and $\mathcal{P}(\reaction)=0$. The vertices $v_i^1$ and $v_i^2$ are only connected to each other and to vertex $v_i$. To achieve perfect b-matching for vertices $v_i^1$ and $v_i^2$, assigned values $f(e_k)$ and $f(e_{\ell})$ must be exactly equal, $f(e_k) = f(e_{\ell)} = b(v_i^1)-f(e_m)$ where $e_m=(v_i^1, v_i^2)$. Applying such a reaction will delete $2*f(e_k)$ copies of species $\lambda_i$.
        \item For all $p$ in $1$ through $\max_{\reaction\in\reactions}\mathcal{P}(\reaction)$: 
        \begin{enumerate}
            \item apply all reactions $\reaction = \lambda_i + \lambda_j \rightarrow \phi$ corresponding to edges $e_k=(v_i, v_j)$, $f(e_k)$ times, if $e_k\in M$, $i\not = j$ and $\mathcal{P}(\reaction)=p$.
            \item apply all reactions $\reaction = \lambda_i + \lambda_i \rightarrow \phi$ corresponding to edges $e_k=(v_i, v_i^1)$ and $e_{\ell}=(v_i, v_i^2)$, $f(e_k)$ times, if $e_k, e_{\ell}\in M$, and $\mathcal{P}(\reaction)=p$.
        \end{enumerate}
    \end{enumerate}
    
    Furthermore, the reactions applied in each iteration in Step $3$ are guaranteed to be applicable (absence of inhibitors and sufficient count of reactants) because $(1)$ all species count go to zero; $(2)$ no reaction that uses a reactant $\lambda_i$ has a priority $p > i$; and $(3)$ each reaction $e_k$ is only applied $f(e_k)$ times. If multiple reactions use a species $\lambda_i$, the total applications of all those reactions $\sum_{e\in \delta(v_i)}f(e)$ will be exactly $b(v_i)$. Therefore, rearranging applications of each of those reactions will not affect the availability of reactants in later applications.
\end{proof}

\begin{theorem}\label{thm:(2,0)-Priority}
Given a Priority $iCRN~\crn{C_{PI}} = ((\species, \reactions), \mathcal{P})$ with only void rules of size $(2,0)$, the reachability problem is decidable in  $O(\size{\species}^2\log(\size{\species})(\size{\reactions} + \size{\species}\log(\size{\species})))$ time.
\end{theorem}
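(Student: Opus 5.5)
The plan is to combine Lemma~\ref{lem:piCRN-empty} with Lemma~\ref{lem:(2,0)-Priority}. Given an instance $(\crn{C_{PI}}, \config{C_s}, \config{C_t})$, first check that $\config{C_t} \leq \config{C_s}$ componentwise. If this fails, answer ``no'' immediately: void rules never increase any species count, so nothing with a larger count than $\config{C_s}$ can be reached.

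Next, preprocess the reaction set as in the proof of Lemma~\ref{lem:piCRN-empty2}. For each $i$ with $\config{C_t}[\lambda_i] \neq 0$, delete every reaction of priority at least $i$. Along any path from $\config{C_s}$ to $\config{C_t}$, the count of $\lambda_i$ only decreases and ends positive, so $\lambda_i$ is present throughout. Such reactions are therefore permanently inhibited, and removing them changes neither direction of reachability. This step takes $O(\size{\species} + \size{\reactions})$ time.

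Then form $\config{D} = \config{C_s} - \config{C_t}$. By Lemma~\ref{lem:piCRN-empty}, applied to the pruned network, $\config{C_s} \reaches \config{C_t}$ holds iff $\config{D} \reaches \emptyConfig$. We decide the latter with Lemma~\ref{lem:(2,0)-Priority}: build the $b$-matching graph on at most $3\size{\species}$ vertices and $O(\size{\reactions})$ edges, and test whether the maximum $b$-matching is perfect. Perfection holds iff the matching value equals $\frac{1}{2}\sum_v b(v)$, which is computed alongside the matching. The stated time bound follows because the dominant cost is the $b$-matching call on a graph with $O(\size{\species})$ vertices and $O(\size{\reactions}+\size{\species})$ edges. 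The additive $\size{\species}$ term is absorbed by the $\size{\species}\log\size{\species}$ term. The preprocessing and subtraction are linear.

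The main obstacle is justifying that the reduction to empty reachability is sound. The concern is that $\config{D}$ lacks the copies of $\config{C_t}$, and those copies may act as inhibitors. The pruning step handles this: after pruning, every reaction that survives and is used is inhibited only by species of index below the smallest $i$ with $\config{C_t}[\lambda_i] \neq 0$. For those species, $\config{C_t}$ contributes zero, so inhibition behaves identically in $\config{D}$ and in $\config{C_s}$ along corresponding paths. Shifting a path by $\config{C_t}$ in either direction preserves applicability of each reaction, since reactant counts only need to be at least the reactant vector. This is the argument Lemmas~\ref{lem:piCRN-empty1} and~\ref{lem:piCRN-empty2} supply, so I would cite them and add only this remark.
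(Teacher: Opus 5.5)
Your proposal is correct and follows the paper's route: you reduce to empty-configuration reachability from $\config{C_s}-\config{C_t}$ via Lemma~\ref{lem:piCRN-empty}, then decide that with the perfect $b$-matching test of Lemma~\ref{lem:(2,0)-Priority}, under the same running-time bound. Your explicit check $\config{C_t}\leq\config{C_s}$ and your explicit pruning of reactions with priority at least $\min\{i : \config{C_t}[\lambda_i]\neq 0\}$ are worthwhile additions, because the paper's one-line proof leaves the pruning inside the proof of Lemma~\ref{lem:piCRN-empty2}, and without it the equivalence $\config{C_s}\reaches\config{C_t}\iff\config{C_s}-\config{C_t}\reaches\emptyConfig$ can genuinely fail: a copy of $\lambda_i$ kept in $\config{C_t}$ is missing from $\config{C_s}-\config{C_t}$, and its absence can unlock reactions of priority at least $i$.
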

\begin{proof}
    Given a Priority iCRN with starting configurations $\config{C_s}$ and target configuration $\config{C_t}$, create the configuration $\config{C_s}-\config{C_t}$. From Lemma \ref{lem:(2,0)-Priority}, empty configuration reachability is decidable from $\config{C_s}-\config{C_t}$ in polynomial time. Finally, from Lemma \ref{lem:piCRN-empty}, empty configuration reachability in $(2,0)$ Priority iCRN implies reachability.
\end{proof}

\begin{lemma} \label{lem:(k,k-1) general}
Given a $CRN~\crn{C} = (\species, \reactions)$ with only void rules of size $(k, k-1)$, the reachability problem is decidable in $\mathcal{O}{(|\species|^2|\reactions|)}$ time (Complete proof in \cite{Fu:2025:DNA}).
\end{lemma}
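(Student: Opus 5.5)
Since each rule in the statement has the form $\lambda_{i_1}+\cdots+\lambda_{i_k}\rightarrow \lambda_{i_1}+\cdots+\lambda_{i_{k-1}}$ up to a permutation, I view it as the deletion of a single copy of one species $\lambda$. That deletion is guarded by a catalyst multiset $\config{K}$ of size $k-1$ with $\config{K}+\single{\lambda}$ as the reactants. Volume drops by exactly one per application. So $\config{C_s}\reaches\config{C_t}$ requires $\config{C_t}\le\config{C_s}$, and any witnessing sequence deletes exactly $\config{C_s}[\lambda]-\config{C_t}[\lambda]$ copies of each $\lambda$. The reactions actually used do not change; the only question is whether the deletions can be ordered so that every catalyst requirement is met at the moment it fires.

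\textbf{Monotonicity.} Every configuration along the path lies between $\config{C_t}$ and $\config{C_s}$ and only decreases. A reaction deleting $\lambda$ with catalysts $\config{K}$ is enabled at $\config{C}$ iff $\config{K}+\single{\lambda}\le\config{C}$, and this property is monotone in $\config{C}$. Thus any reaction that is enabled later in the process is already enabled earlier. I would therefore use a greedy ``last-copy'' argument: decide for each species $\lambda$ with deficit $d_\lambda>0$ whether its copies can be removed while maintaining the catalysts that other species still need.

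\textbf{The algorithm.} I would run an iterative elimination that processes deletions in reverse. Start from $\config{C}=\config{C_t}$ and repeatedly ``undelete'' copies: a copy of $\lambda$ can be restored last-removed-first if some rule deleting $\lambda$ has $\config{K}+\single{\lambda}\le\config{C}+\single{\lambda}$, i.e.\ $\config{K}\le \config{C}$ after the species is present. Reversed, catalysts only increase, so the restoration process is monotone. Greedily restoring any restorable copy, while staying below $\config{C_s}$, never hurts. Hence reachability holds iff the greedy fixpoint reaches $\config{C_s}$.

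\textbf{Efficiency.} Copy counts can be exponential in the input size, so the algorithm must work per species. Once one copy of $\lambda$ can be restored, all remaining copies of $\lambda$ can be restored up to $\config{C_s}[\lambda]$, because the catalyst condition only becomes easier. The one exception is when the catalysts include $\lambda$ itself with a multiplicity threshold, and there I compute the threshold directly. Each round therefore fully saturates at least one species, giving at most $\size{\species}$ rounds. Each round scans all rules and checks catalyst vectors over $\size{\species}$ coordinates, for $\mathcal{O}(\size{\species}^2\size{\reactions})$ total.

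\textbf{Main obstacle.} The hardest step is the exchange argument that greedy reverse saturation is complete, in particular for self-catalysis, where a rule like $2\lambda\rightarrow\lambda$ needs at least two copies of $\lambda$ in the forward direction. My approach is to take any valid forward sequence and show by induction that the greedy fixpoint contains every configuration in its reversal.
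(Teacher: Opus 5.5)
Your reverse saturation procedure is correct and essentially the paper's approach: start from $\config{C_t}$, fully restore a species $\lambda$ to $\config{C_s}[\lambda]$ as soon as some rule deleting $\lambda$ has catalysts $\config{K}\le\config{C}$, and stop after at most $\size{\species}$ rounds of $\mathcal{O}(\size{\species}\size{\reactions})$ work, with completeness from your induction that the fixpoint dominates every reversed forward path; this matches the paper's bottom-up table $D$ with $\size{\species}+1$ columns, where each column adds the species whose whole deficit one rule can remove at that stage. The self-catalytic case needs no separate threshold computation, because $\config{K}\le\config{C}$ is checked at the current configuration and stays true once $\config{C}[\lambda]$ increases, so your uniform argument already covers rules like $2\lambda\rightarrow\lambda$.
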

\begin{proof} \textit{(Sketch.)}
    The dynamic programming algorithm uses a $|\species| \times (|\species|+1)$ table $D$ of boolean entries, where each row represents a different species. Since each $(k,k-1)$ catalytic void reaction reduces the count of exactly one species, only one of the reactions need to be applied at most $\config{C_s}[\lambda_i]-\config{C_t}[\lambda_i]$ times to reduce the count of a species $\lambda_i$. Hence, only $\size{\species}$ distinct reactions need to be considered to reach the target configurations, if possible. The total number of rule applications, however, will be polynomial in the volume of the system.
    
    A bottom-up approach is used to determine the ordering of the species in which their count needs to be reduced. This is essential since a species $\lambda_i$ shouldn't go to zero before it is used as a catalyst in other reactions that need to be applicable.
    The algorithm starts by filling first column of $D$ with $1$ for species who have reached their target count. It then progresses column-by-column, applying reactions based on the ordering on species. If the last column has all $1$s, then the configuration is reached.
\end{proof}

\begin{theorem} \label{thm:(k,k-1) PiCRN}
Given a Priority $iCRN~\crn{C_{PI}} = ((\species, \reactions), \mathcal{P})$ with only void rules of size $(k, k-1)$, the reachability problem is decidable in $\mathcal{O}{(|\species|^2|\reactions|)}$ time.
\end{theorem}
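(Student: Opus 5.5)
We reduce the statement to Lemma~\ref{lem:(k,k-1) general} by folding the priority constraints into the same dynamic-programming framework. The reduction rests on the fact that in a $(k,k-1)$ void Priority iCRN, every reaction deletes exactly one copy of exactly one species, and priority inhibition only forbids a reaction while some species among $\lambda_1,\ldots,\lambda_{\mathcal{P}(\reaction)}$ is present. Both catalysts and inhibitors therefore impose \emph{ordering} constraints on when a species may be driven down to its target count. Catalysts must still be present; inhibitors must already be at zero. Because no species is ever created, each such constraint is monotone. Once a species reaches zero it stays zero, and once a catalyst is below the required count it never recovers.

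The first step is a preprocessing that mirrors the first step of Observation~\ref{obs:(1,0) p-iCRN} and Lemma~\ref{lem:piCRN-empty2}. Let $j$ be the smallest index with $\config{C_t}[\lambda_j]\neq 0$. We discard every reaction with $\mathcal{P}(\reaction)\geq j$, since $\lambda_j$ is present at every configuration on any path to $\config{C_t}$, so such reactions can never fire. In the remaining system, every reaction with priority $p$ requires $\lambda_1,\ldots,\lambda_p$ to be at zero, and all of these have target count $0$. The requirement is thus precisely that those species have already been \emph{fully processed}, i.e.\ reduced to their target.

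Next we rerun the column-by-column dynamic program of Lemma~\ref{lem:(k,k-1) general}, strengthening the condition for marking a species as finished. In column $c$, a species $\lambda_i$ that is not yet finished may be marked finished if there is a reaction $\reaction$ deleting $\lambda_i$ with two properties. First, its catalysts are all still available in sufficient quantity, meaning they are unfinished or have target count high enough, exactly as in the original algorithm. Second, every $\lambda_\ell$ with $\ell\le\mathcal{P}(\reaction)$ is already marked finished, meaning it is at target $0$. We then apply $\reaction$ exactly $\config{C_s}[\lambda_i]-\config{C_t}[\lambda_i]$ times. Choosing which species to finish greedily is safe because finishing a species only removes copies. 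It can disable a reaction only through the catalyst condition, which the original algorithm already handles, and it can only \emph{enable} reactions through the priority condition. So the exchange argument from \cite{Fu:2025:DNA} carries over: any valid ordering of species completions can be matched by the greedy table fill.

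The main obstacle is justifying that restricting attention to one reaction per species, used in one contiguous block, loses no generality under priorities. A witness run might interleave partial reductions of $\lambda_i$ with reductions of other species. Our plan is to argue that, for each species, we can move all its deletions to the moment of the \emph{last} deletion of $\lambda_i$ in the witness run. At that moment the reaction used is uninhibited, so its low-index species are already zero and remain zero. Its catalysts are also present, and their counts at that moment are at least their counts at the end of the run, so they suffice. Hence the postponed block is applicable. Each column processes $|\species|$ species against $|\reactions|$ reactions, and there are $|\species|+1$ columns. Checking the priority condition in $O(1)$ via a prefix counter of finished species gives the $\mathcal{O}(|\species|^2|\reactions|)$ bound.
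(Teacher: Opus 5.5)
\textbf{Gap: the greedy completion order.} The step that fails is your claim that finishing eligible species greedily is safe, i.e.\ that ``any valid ordering of species completions can be matched by the greedy table fill.'' Take $\species=\langle g,s,b,X\rangle$ with reactions $g+X\rightarrow X$ and $b+X\rightarrow X$ of priority $0$, and $s+b\rightarrow b$ of priority $1$. Start with one copy of each species; the target is $\single{X}$. Deleting $s$ needs $g$ absent and $b$ present, so the only successful order is $g,s,b$. In the first column both $g$ and $b$ pass your test. If $b$ is finished before $s$ (your rule allows this, and a column-parallel fill forces it), then $s$ loses its only catalyst and the fill answers ``no.''

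This forward failure is a catalyst effect and occurs even with all priorities $0$. That is why Lemma~\ref{lem:(k,k-1) general} builds the order bottom-up, starting from the last species finished, where catalyst conditions only get easier. Priorities have the opposite monotonicity. Going backwards, committing a zero-target species of index $\ell$ to be finished late inhibits every reaction of priority at least $\ell$ placed before it. In the same example, a backward fill that commits $g$ first kills the only reaction for $s$. Your own sentence (finishing can only disable reactions via catalysts and only enable them via priorities) describes exactly this mixed monotonicity. It defeats a greedy exchange argument in either direction, so the exchange argument of \cite{Fu:2025:DNA} does not simply carry over.

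\textbf{What is sound, and what is missing.} The preprocessing is correct. So is your normal form: one reaction per species, applied in one contiguous block at the time of its last deletion. You should add one check: no reaction between the first and last deletion of $\lambda_i$ can have $\lambda_i$ in its inhibiting prefix, because $\lambda_i$ is present there anyway.

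The normal form reduces the problem to choosing, for each species to be reduced, a reaction and a linear order. Every catalyst whose target count is insufficient must come later, and every positive-count species of index at most the priority must come earlier. What you lack is a rule for making these choices, together with a proof.

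One rule that does admit an exchange argument is to build the order from the end. Call an unplaced species eligible if some reaction deleting it has all such catalysts already placed and priority smaller than every index already placed. At each step:
\begin{itemize}
\item If an eligible species has index larger than the smallest placed index, place it; this restricts nothing later.
\item Otherwise, place the eligible species of \emph{largest} index.
\item If no species is eligible while species remain, reject.
\end{itemize}
The exchanges go through because a reaction deleting a species always has priority below that species' own index; otherwise it is self-inhibiting.

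The paper's sketch instead fixes in advance the order in which zero-target species are handled, via its counter $l$. Either way, the proof needs an explicit, justified ordering argument, and your proposal does not contain one.
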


\begin{proof}
    This follows as a modification of the dynamic programming algorithm for general CRNs with only void rules of size $(k, k-1)$ from Lemma \ref{lem:(k,k-1) general}. Let $\config{C_f}$ denote the target configuration, with $\lambda_1, \ldots, \lambda_{|\species|}$ denoting the order assigned to the species. Remove all rules of the form $\lambda_j + (.) \rightarrow^{k} (.)$, where $j < k$. The rest of the system remains the same.
    
    Start with a counter $l=|\Lambda|$. Construct an $|\species| \times (|\species|+1)$ table $D(s,j)$ of boolean entries, where each row represents a different species. Reduce the count of each species to $max(k, s^f)$, where $s^f$ represents the final count of species $s$, if there exists a rule that can do so. Starting from the first column $j=0$, place a $1$ if the respective species is already in its final count. Then, for each entry $D(s,j)$, place a $1$ if $D(s,j-1)$ is a $1$, or if there exists a reaction $\reaction \in \reactions$ for $s \in \species$ that reduces $s$ to its final count such that 1) $\config{C_f}[s] > 0$ or 2) $s = \lambda_l$, where all the reactants of $\reaction$ have either reached their final counts or will not prevent the reaction from occurring once they do. Any time species $\lambda_l$ has found a reaction to reach a final count $0$, set $l = l - 1$. If column $|\species|+1$ contains all 1's, then reachability is possible. Otherwise, it is not. 

    The algorithm follows the same bottom-up approach as for general CRNs. The main difference is that now there exists a predetermined ordering that species must go to zero for other rules to be applicable. Thus, any species $\lambda_j$ such that $j < l$ is not considered until species $\lambda_l$ has found a suitable reaction and reaches a count of zero. The proof then follows from general CRNs.
\end{proof}

\begin{lemma}[Rearrangement Lemma for Void (Priority) iCRNs]\label{lem:rearrange}
    For any sequence of applicable void rules $A$ in a given iCRN, there exists a sequence $B$ that is a permutation of $A$ such that all applications of a given rule type occur contiguously.
\end{lemma}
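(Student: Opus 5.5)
The plan is to fix an explicit permutation, namely grouping all copies of each rule at the position of its \emph{last} occurrence, and then check applicability of every step from two monotonicity facts about void rules. Let $A = \reaction_{a_1}, \ldots, \reaction_{a_m}$ be an applicable sequence from $\config{C_s}$, and write $\config{C}_t$ for the configuration just before step $t$. For each rule type $\reaction$ occurring in $A$, let $\ell(\reaction)$ be the index of its last occurrence. Define $B$ by sorting the distinct rule types by $\ell(\cdot)$ and replacing them, in that order, by contiguous blocks. Each block contains all copies of that rule type. $B$ is a permutation of $A$, so it ends in the same configuration. A Priority iCRN is the special case $\mathcal{I}(\reaction)=\{\lambda_1,\ldots,\lambda_{\mathcal{P}(\reaction)}\}$, so it suffices to treat iCRNs.

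\textbf{Two facts.} Since rules are void, every species count is non-increasing along any sequence. Consequently:
\begin{itemize}
\item[(i)] Once a species reaches $0$ it stays $0$.
\item[(ii)] If $\lambda$ is $0$ at time $t$ in $A$, then no rule that consumes $\lambda$ occurs at or after $t$. Such a rule has $\lambda$ as a reactant, so it would need $\lambda$ present.
\end{itemize}

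\textbf{Inhibitors.} Consider the block of rule $\reaction$ in $B$, and let $t=\ell(\reaction)$. Take $\lambda \in \mathcal{I}(\reaction)$. Then $\config{C}_t[\lambda]=0$, because the last copy of $\reaction$ fired at $t$. By (ii), every application in $A$ that consumes $\lambda$ occurs before $t$. Each such application belongs to a rule type whose occurrences are therefore all before $t$, so that rule type's last occurrence is before $t$. Hence in $B$ all these applications sit in blocks preceding $\reaction$'s block.

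Before $\reaction$'s block, $B$ has thus performed every $\lambda$-consuming application that $A$ performed before $t$. So the count of $\lambda$ at the start of $\reaction$'s block in $B$ is at most $\config{C}_t[\lambda]=0$. By (i) it stays $0$ throughout the block.

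\textbf{Reactants.} Within $\reaction$'s block the counts only decrease, so the binding check is the last application. Call it the $j$-th copy of $\reaction$ in the block. Just before it, $B$ has applied:
\begin{itemize}
\item all copies of rule types $\sigma$ with $\ell(\sigma)<t$, and
\item all copies of $\reaction$ except one.
\end{itemize}
Just before step $t$, $A$ has applied all of these same applications, plus possibly some copies of rule types $\sigma$ with $\ell(\sigma)>t$. The multiset of applications preceding the step in $B$ is therefore a sub-multiset of those preceding step $t$ in $A$. Since every application only subtracts, the configuration before this step in $B$ is componentwise $\geq \config{C}_t \geq \reactants_\reaction$. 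Applicability holds.

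Combining the inhibitor and reactant checks, every step of $B$ is applicable. The main obstacle is the inhibitor step. A block moved earlier could in principle fire while an inhibitor is still present, and the choice of ``last occurrence'' ordering together with observation (ii) is exactly what rules this out.
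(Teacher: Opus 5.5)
Your proof is correct, and it reaches the same rearranged sequence as the paper by a different argument.

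The paper uses local exchanges, adapting the rearrangement lemma of \cite{Fu:2025:DNA}. Whenever a copy of $\reaction$ at position $i$ is separated from its next copy at position $j$ by other rules, it moves the copy at $i$ to position $j-1$. It then checks that this single move preserves applicability: $\reaction$ is enabled at $j-1$ because it was enabled at $j$, and the shifted rules cannot be inhibited by reactants of $\reaction$, since those reactants are still present through position $j$. It repeats such moves until the sequence is contiguous. The moved copy is never a last occurrence, so the relative order of last occurrences is invariant, and the iteration ends at exactly your $B$ with blocks ordered by last occurrence.

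You instead write $B$ down directly and certify every step at once. For each $\lambda\in\mathcal{I}(\reaction)$, you use fact (ii) to place every $\lambda$-consuming block before $\reaction$'s block. For reactants, you compare sub-multisets with the prefix of $A$ before $\ell(\reaction)$.

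What your route buys is an explicit description of $B$ and no need for a termination measure. The paper handles termination loosely: it says each move reduces the number of non-contiguous blocks, which is not true for every move. For example, $\reaction\reaction\sigma\reaction$ becomes $\reaction\sigma\reaction\reaction$. What the paper's route buys is brevity, and it makes clear that the only new ingredient beyond the uninhibited CRN lemma is a one-line inhibitor check per swap.
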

\begin{proof}
    Consider a sequence of applicable void rules $A$ for a given (Priority) iCRN that is not contiguous. We construct a contiguous sequence as described in the Rearrangement Lemma in \cite{Fu:2025:DNA}. For the given sequence $A$, suppose rule $\reaction$ occurs at positions $i$ and $j$ such that $i < j-1$, and there is at least one non-$\reaction$ rule in between $i$ and $j$. Construct a sequence $A'$ by shifting the rule $\reaction$ at position $i$ up to position $j-1$, and shifting all rules in between down one position. 
    
    This new sequence must be applicable as $(1)$ the only rule that moved to a higher index in the sequence is of type $\reaction$, and we know that $\reaction$ is applicable at position $j-1$ since it is applicable at position $j$, and $(2)$ because, in a system with just void rules, species are only deleted and not produced. Therefore, all rules that occur between positions $i$ and $j$ must not be inhibited by the reactants of $\reaction$ (since the reactants of $\reaction$ are present until position $j$); therefore, such rules can be applied before all occurrences of $\reaction$.
    
    Since this swapping preserves the applicability of the sequence while reducing the number of non-contiguous blocks of one rule type in the sequence, we can repeat this process of swapping rule positions until the sequence is contiguous.
\end{proof}

\begin{lemma}\label{thm:NP-voidiCRN}
    The reachability problem for (Priority) iCRNs with only void reactions is in NP.
\end{lemma}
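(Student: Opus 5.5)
The certificate will come from the Rearrangement Lemma (Lemma~\ref{lem:rearrange}), which lets us compress any witness into polynomially many bits. The naive certificate is the whole reaction sequence from $\config{C_s}$ to $\config{C_t}$. Because every rule is void, each application strictly decreases the volume or leaves it unchanged. A catalytic rule of the form $\reactants \rightarrow \reactants$ changes nothing, so we may delete all such applications without affecting reachability. The sequence then has length at most $|\config{C_s}|$, but this may be exponential in the binary input size. So we do not use the raw sequence as the certificate.

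By Lemma~\ref{lem:rearrange}, if $\config{C_s} \reaches \config{C_t}$ then there is a witnessing sequence in which all applications of each rule type are contiguous. This sequence is a concatenation of blocks $\reaction_{\pi(1)}^{m_1}\reaction_{\pi(2)}^{m_2}\cdots\reaction_{\pi(\ell)}^{m_\ell}$ with distinct rules, so $\ell \le |\reactions|$ and each $m_i \le |\config{C_s}|$. The certificate is the ordered list of pairs $(\reaction_{\pi(i)}, m_i)$. Each multiplicity takes $O(\log |\config{C_s}|)$ bits, so the certificate has size polynomial in the input.

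The verifier processes the blocks in order from the current configuration $\config{A}$ (initially $\config{C_s}$). For a block $\reaction^{m}$ with $\reaction = (\reactants,\products)$, applicability of all $m$ consecutive applications comes down to three conditions.
\begin{itemize}
\item \emph{Reactant availability.} We need $\config{A} - (m-1)(\reactants-\products) \ge \reactants$, i.e.\ enough copies remain just before the last application. Since $\reactants - \products \ge 0$ for void rules, counts are monotone along the block, so checking the last application suffices.
\item \emph{Inhibitor absence.} Every inhibitor of $\reaction$ must have count $0$ throughout the block, namely each $\lambda\in\mathcal{I}(\reaction)$ in the iCRN case and each $\lambda_i$ with $i \le \mathcal{P}(\reaction)$ in the priority case. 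Counts never increase, and an inhibitor that is a reactant would block the rule anyway, so absence in $\config{A}$ at the start of the block is necessary and sufficient.
\item \emph{Update.} If both checks pass, set $\config{A} \leftarrow \config{A} - m(\reactants - \products)$.
\end{itemize}
Each step is polynomial-time arithmetic on binary numbers. At the end the verifier accepts if and only if $\config{A} = \config{C_t}$.

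The main point to get right is the first step: justifying the polynomial-size certificate. This relies on Lemma~\ref{lem:rearrange} holding for both the inhibitory and the priority dynamics. For the priority case we treat priorities as the special inhibitor sets $\{\lambda_1,\dots,\lambda_{\mathcal{P}(\reaction)}\}$, so the lemma's argument applies verbatim. Soundness is immediate, because an accepted certificate describes a valid sequence. Completeness follows from the rearranged witness.
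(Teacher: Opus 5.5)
Your proposal is correct and takes the same approach as the paper: use the Rearrangement Lemma to get a contiguous witness, and take the ordered list of (rule, multiplicity) blocks as a polynomial-size certificate that is verified block by block. You also supply details the paper leaves implicit, namely discarding identity rules so that each multiplicity is bounded by $|\config{C_s}|$, and the monotonicity argument showing it suffices to check reactants just before the last application of a block and inhibitors at its start.
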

\begin{proof}
    Given a sequence of applicable void rules in a given iCRN, we encode this sequence with a sequence of rule types accompanied by a count on the number of applications of each rule type, which must exist by Lemma \ref{lem:rearrange}. Although the original sequence is potentially exponential in length, the result of the contiguous sequence can be computed in polynomial time. Therefore, we utilize a contiguous sequence of applicable rules as a certificate for the reachability problem.
\end{proof}

\begin{restatable}{theorem}{picrnVC}\label{thm:(2, 0) + (2, 1) picrns}
    The reachability problem in Priority iCRN with only $(2,0)$ and $(2,1)$ void rules with maximum priority $1$ is NP-complete.
\end{restatable}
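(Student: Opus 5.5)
Membership in NP is immediate from Lemma~\ref{thm:NP-voidiCRN}: by the Rearrangement Lemma (Lemma~\ref{lem:rearrange}) a witness can be given as a contiguous sequence of rule types with application counts, and this can be verified in polynomial time. The work is therefore NP-hardness. I would reduce from Vertex Cover, as the restatable label suggests. The reduction uses only one inhibiting species $\lambda_1$, so the maximum priority is $1$.

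\textbf{Construction.} Given a graph $G=(V,E)$ with $|V|=n$ and an integer $k$, build the following system.
\begin{itemize}
\item Species: the first (and only inhibiting) species $\lambda_1 = d$, a ``deleter'' species; one species $v$ for each vertex $v\in V$; one species $e$ for each edge $e\in E$; and a ``cleanup'' species $z$.
\item Initial configuration $\config{C_s}$: $n-k$ copies of $d$, $k$ copies of $z$, and one copy each of every vertex species and every edge species.
\item Target $\config{C_t}$: the empty configuration $\emptyConfig$.
\item Reactions:
\begin{itemize}
\item $d + v \rightarrow \emptyset$ for each $v\in V$, with priority $0$ (size $(2,0)$);
\item $e + u \rightarrow u$ and $e + w \rightarrow w$ for each edge $e=\{u,w\}$, with priority $1$ (size $(2,1)$);
\item $z + v \rightarrow \emptyset$ for each $v\in V$, with priority $1$ (size $(2,0)$).
\end{itemize}
\end{itemize}
The construction is clearly polynomial in the size of $G$.

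\textbf{Completeness.} Suppose $S$ is a vertex cover with $|S|=k$. First apply $d+v\rightarrow\emptyset$ for each $v\notin S$. This uses up all $n-k$ copies of $d$, so the priority-$1$ rules become enabled. Next, delete each edge $e$ with a catalytic rule whose catalyst is an endpoint of $e$ lying in $S$; such an endpoint exists because $S$ is a cover. Finally, delete the $k$ remaining vertex species with the $k$ copies of $z$. This reaches $\emptyConfig$.

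\textbf{Soundness.} Take any run reaching $\emptyConfig$ and consider the moment $t$ at which the last copy of $d$ is deleted.
\begin{itemize}
\item Before $t$, only priority-$0$ rules can fire. Each of them deletes exactly one $d$ and one vertex, and nothing else deletes vertices before $t$. Hence exactly $n-k$ vertices are gone at time $t$, and the set $S$ of surviving vertices has $|S|=k$.
\item Every edge species is deleted after $t$, since its deletion rule has priority $1$. That deletion uses a catalyst vertex present at the time, and vertices are never created, so the catalyst lies in $S$. Thus $S$ is a vertex cover of size $k$.
\end{itemize}

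\textbf{Main obstacle.} The delicate point is making sure the ``phases'' cannot interleave in a way that cheats. Two interleavings need to be excluded.
\begin{itemize}
\item An edge could be deleted using a vertex that is later removed by $d$. This is blocked by giving the edge rules priority $1$.
\item The $z$-rules could delete vertices prematurely, before the edges they should cover are removed. This only removes potential catalysts, so it cannot help a run succeed and does not affect the argument above.
\end{itemize}
Both are handled by choosing $d$ as the unique inhibitor $\lambda_1$. I would also double-check that no unintended rules such as $d+d$ or $z+z$ exist, and that the counts $n-k$ and $k$ force exactly the intended cardinalities.
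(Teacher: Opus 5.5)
Your proposal is correct and is essentially the paper's own proof. It uses the same Vertex Cover reduction: your $d$ and $z$ are the paper's $R$ and $X$, with identical reactions, priorities, initial counts $|V|-k$ and $k$, and empty target, and NP membership comes from Lemma~\ref{thm:NP-voidiCRN}. Your soundness argument, which fixes the moment the last $d$ is consumed and uses priority $1$ to force every edge deletion to happen afterwards with a catalyst among the $k$ surviving vertices, is a cleaner and more explicit version of the paper's reverse direction.
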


\begin{proof}
    Due to space constraints, we provide a proof sketch here, while the complete proof can be found in Appendix \ref{sec:np-hard-picrn}.

    From Lemma \ref{thm:NP-voidiCRN}, we have that the reachability problem for Priority iCRNs with purely void rules is in NP. To show hardness, we reduce from Vertex Cover. Given an instance of VC $\langle G,k \rangle$ where $k>0$, we construct a Priority iCRN $\crn{C_{PI}}$. We create one copy of $v$ for all $v\in V$ and one copy of $e$ for all $e\in E$. We also add two new species to the $\crn{C_{PI}}$ system. In the initial configuration, we add $|V|-k$ copies of a species $R$, used to eliminate $|V|-k$ vertices, and $k$ copies of a species $X$, used to remove the remaining $k$ vertices. Denote this configuration the starting configuration $\config{C_s}$. We also order our species set as: $\species = \langle R, v_1,\ldots,v_{\size{V}}, e_1,\ldots,e_{\size{E}}, X\rangle$.

    For all $v\in V$, we create the \emph{assignment reaction} $v+R\rightarrow\emptyset$ and the \emph{clean-up reaction} $v+X\xrightarrow{1}\emptyset$. For all edges $e\in E$, for all vertices $v\in V$ where $v$ is incident to $e$, we create the \emph{covering reaction} $v+e\xrightarrow{1}v$. Each of the clean-up and covering reactions have priority $1$.

    The target configuration $\config{C_t}$ is the empty configuration $\emptyConfig$. A vertex cover of size $k$ can only exist in $G$ iff all the species copies can be completely deleted.
\end{proof}

\begin{lemma}\label{lem:pspace}
    The reachability problem in (Priority) iCRNs with only reactions of size $(2,2)$ is in PSPACE.
\end{lemma}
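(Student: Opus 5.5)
The plan is to use the fact that $(2,2)$ reactions are volume-preserving, so the reachable state space from $\config{C_s}$ is finite and bounded. Every reaction consumes exactly two copies and produces exactly two copies. So every configuration reachable from $\config{C_s}$ has the same volume $n = |\config{C_s}|$. If $|\config{C_t}| \neq n$, we reject immediately.

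Otherwise, every configuration on a witnessing path is a vector in $\N^{\species}$ with entries summing to $n$. Such a configuration can be written in $O(|\species|\log n)$ bits, which is polynomial in the input size when counts are given in binary.

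We then give a nondeterministic algorithm that stores only the current configuration $\config{C}$, initially $\config{C_s}$.
\begin{enumerate}
    \item At each step, guess a reaction $\reaction = (\reactants,\products) \in \reactions$.
    \item Check that it is applicable: $\reactants \le \config{C}$, and the inhibition condition holds. For an iCRN this means $\config{C}[\lambda]=0$ for all $\lambda \in \mathcal{I}(\reaction)$; for a Priority iCRN it means $\config{C}[\lambda_i]=0$ for all $i \le \mathcal{P}(\reaction)$. Each check is a polynomial-space scan.
    \item Replace $\config{C}$ by $\config{C}-\reactants+\products$.
    \item Accept as soon as $\config{C}=\config{C_t}$.
\end{enumerate}

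A shortest witnessing path never repeats a configuration. The number of distinct configurations of volume $n$ is $\binom{n+|\species|-1}{|\species|-1} \le (n+1)^{|\species|}$. So we keep a step counter bounded by this number and reject when it is exceeded. The counter needs only $O(|\species|\log(n+1))$ bits.

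This places the problem in NPSPACE, and Savitch's theorem gives PSPACE. Inhibition adds nothing to the difficulty: it only restricts which transitions are enabled, and each restriction is checkable locally from the current configuration.

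The one point needing care is the encoding of counts. If counts are given in unary, the bound is immediate. In binary, the configuration and the step counter are still polynomial-size, since the counter needs $|\species|\log(n+1)$ bits and $\log n$ is at most the input length. So the argument goes through under either encoding, and no step is a genuine obstacle.
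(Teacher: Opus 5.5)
Your proposal is correct and takes the same route as the paper. Volume preservation keeps each configuration polynomial-size. A nondeterministic machine stores only the current configuration and guesses applicable reactions, checking inhibition locally. Savitch's theorem then gives PSPACE. Your step counter and the binary-encoding remarks make explicit details that the paper's two-line proof leaves implicit.
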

\begin{proof}
    Given that the system is volume-preserving, the reachability problem is solvable by a non-deterministic Turing machine using only polynomial space, where the tape of the machine stores a configuration (counts of species). Therefore, the problem is in NPSPACE, and hence in PSPACE \cite{SAVITCH1970177}.
\end{proof}

\begin{observation}\label{obs:(2,2)-Picrn}
    The reachability problem in Priority iCRNs with only reactions of size $(2,2)$ is PSPACE-complete.
\end{observation}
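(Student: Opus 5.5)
Membership follows from Lemma~\ref{lem:pspace}: every $(2,2)$ rule preserves volume, so a nondeterministic machine can guess the rule sequence one application at a time. It only needs to store the current configuration, which takes polynomially many bits, and check the priority condition at each step. Savitch's theorem then gives PSPACE.

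For hardness I will not build a new reduction. I will embed ordinary CRNs into Priority iCRNs. Reachability in basic CRNs with only $(2,2)$ reactions is PSPACE-complete \cite{ERW:2019:PAI}. Given such an instance $(\species,\reactions)$ with $\config{C_s},\config{C_t}$, define the Priority iCRN $((\species,\reactions),\mathcal{P})$ with $\mathcal{P}(\reaction)=0$ for every $\reaction\in\reactions$, using any ordering of $\species$. With priority $0$ the condition ``$A[\lambda_i]=0$ for all $i\in[1,0]$'' is vacuous. Hence the priority inhibitory dynamics coincide exactly with the basic CRN dynamics, and $\config{C_s}\reaches\config{C_t}$ holds in one system iff it holds in the other. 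The map is trivially polynomial (indeed logspace), and it keeps every rule at size $(2,2)$.

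The only point needing care is that the cited hardness result really applies to rules of exactly size $(2,2)$, possibly with catalysts. I must make sure we are not relying on the VAS formulation, where catalysts are disallowed. Since the Priority iCRN model here is defined over CRN reactions, which permit catalysts, any $(2,2)$ CRN rule (catalytic or not) is a legal Priority iCRN rule. So the reduction is the identity on rules, and the observation follows from PSPACE membership together with the inherited hardness.
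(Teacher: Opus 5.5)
Your proposal is correct and follows the paper's proof. Membership comes from Lemma~\ref{lem:pspace} (volume preservation plus Savitch's theorem), and hardness is inherited from PSPACE-completeness of $(2,2)$ reachability in basic CRNs \cite{ERW:2019:PAI}, where you spell out the all-priority-$0$ embedding that the paper leaves implicit.
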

\begin{proof}
    The hardness follows from the reachability problem in regular CRNs with only $(2,2)$-size rules being PSPACE-complete \cite{ERW:2019:PAI}, and membership follows from Lemma \ref{lem:pspace}.
\end{proof}
\section{Inhibitory CRNs}\label{sec:icrns}
We now shift our focus to the reachability problem in the iCRN model. Specifically, we prove that reachability of iCRNs with void rules of size $(2,0)$, $(2,0)$ + $(2,1)$, and more generally $(k,k-1)$ are all NP-complete. Alongside these hardness results, we present FPT algorithms for each of these reachability problems. To conclude, we examine iCRN systems with rules of size $(1,1)$ and show that the reachability problem is PSPACE-complete. 

\subsection{iCRNs with Only Void Rules}

For completeness, we include the following observation.

\begin{observation} \label{obs:(1,0) iCRN}
Given an $iCRN~\crn{C_{IC}} = ((\species, \reactions), \mathcal{I})$ with only void rules of size $(1, 0)$, the reachability problem is decidable in $\mathcal{O}(|\species||\reactions|)$ time.
\end{observation}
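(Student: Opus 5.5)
The plan is to exploit the fact that every $(1,0)$ rule $\lambda \rightarrow \emptyset$ touches exactly one species, so only the \emph{order} in which species are driven down matters. A greedy ``peeling'' procedure in the style of a topological sort should then decide reachability.

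First I would preprocess. If $\config{C_t} \not\leq \config{C_s}$, answer ``no''. Call a species $\lambda$ \emph{demanding} if $\config{C_s}[\lambda] > \config{C_t}[\lambda]$. Any rule whose inhibitor set contains a species $\mu$ with $\config{C_t}[\mu] > 0$ can never fire on a path from $\config{C_s}$ to $\config{C_t}$, since counts only decrease, so $\mu$ stays present throughout. Delete all such rules in $\mathcal{O}(\size{\species}\size{\reactions})$ time. By the Rearrangement Lemma (Lemma~\ref{lem:rearrange}), we may restrict attention to sequences in which all applications of a rule are contiguous.

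Next I would argue that some sequence reaching $\config{C_t}$ also fires, for each demanding species $\lambda$, a single rule $\lambda \rightarrow \emptyset$ exactly $\config{C_s}[\lambda]-\config{C_t}[\lambda]$ times in one block. The key monotonicity fact is that removing copies of species never disables a $(1,0)$ rule except through its own reactant. Consequently, once some rule for $\lambda$ is enabled (its inhibitors are all absent), it stays enabled as long as $\lambda$ is present. So we may take the first rule for $\lambda$ that is used and fire it for all required copies at the moment it first becomes enabled. This only makes species absent earlier, which can only help other rules, except possibly for rules inhibited by $\lambda$ itself. Those are handled by the ordering below.

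The algorithm then maintains the current configuration, initially $\config{C_s}$, and repeats the following step. Find a demanding, not yet finished species $\lambda$ that has a remaining rule $\lambda \rightarrow \emptyset$ whose inhibitors are all absent in the current configuration. Fire that rule down to $\config{C_t}[\lambda]$ and mark $\lambda$ finished. Stop when no such species exists, and answer ``yes'' iff every demanding species is finished. Each round scans all rules, giving $\size{\species}$ rounds of $\mathcal{O}(\size{\reactions})$ work, plus tracking of presence bits. This is $\mathcal{O}(\size{\species}\size{\reactions})$, with counts handled arithmetically rather than step by step.

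The main obstacle is proving the greedy choice is safe. Concretely, finishing $\lambda$ first can never block a later needed rule, because the only effect of finishing $\lambda$ is possibly making $\lambda$ absent, and absence never inhibits anything. Conversely, I must show that if a valid sequence exists then the greedy procedure never gets stuck. Given a valid contiguous sequence, order species by when their counts first reach their targets. The greedy set of enabled species is always a superset of what the sequence has enabled at the corresponding point, by the monotonicity above. An exchange argument along this order should therefore finish the proof.
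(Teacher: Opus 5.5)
Your proposal is correct and is essentially the paper's argument: the paper's proof is exactly this greedy procedure (repeatedly fire any applicable rule, driving its reactant down to its target count, and accept iff $\config{C_t}$ is reached), and your monotonicity observation that absence never inhibits supplies the correctness justification the paper leaves implicit; the preprocessing step and the appeal to the Rearrangement Lemma are harmless but unnecessary. One implementation point: to actually get $\mathcal{O}(\size{\species}\size{\reactions})$, keep for each rule a counter of currently present inhibitors rather than just presence bits, so each scan costs $\mathcal{O}(\size{\reactions})$ and all counter updates together cost at most $\sum_{\gamma \in \reactions}\size{\mathcal{I}(\gamma)} \le \size{\species}\size{\reactions}$.
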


\begin{proof}
    Run any applicable rule in the system, reducing the corresponding species to its final count. If the final configuration is reached, then reachability is possible. Otherwise, if no more rules can be applied, reachability is not possible.
\end{proof}

\begin{figure}
    \centering
    \begin{subfigure}[b]{0.22\textwidth}
        \centering
        \includegraphics[width=1\textwidth]{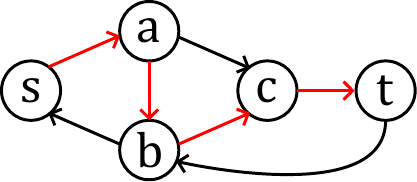}
        \caption{}
        \label{subfig:(2,0)-icrn-reduction-1}
    \end{subfigure}
    \hfill
    \begin{subfigure}[b]{0.42\textwidth}
        \centering
        \includegraphics[width=1\textwidth]{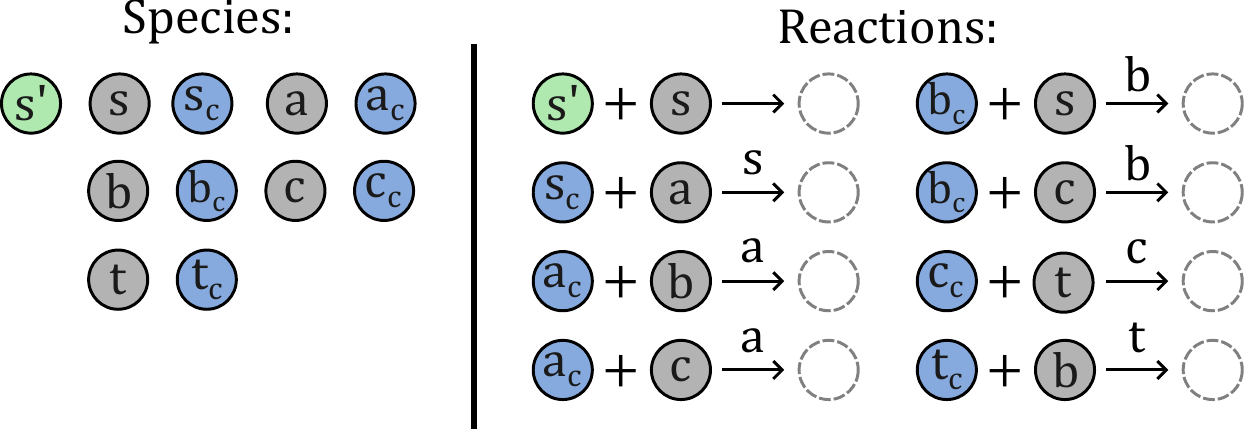}
        \caption{}
        \label{subfig:(2,0)-icrn-reduction-2}
    \end{subfigure}
    \hfill
    \begin{subfigure}[b]{0.24\textwidth}
        \centering
        \includegraphics[width=1\textwidth]{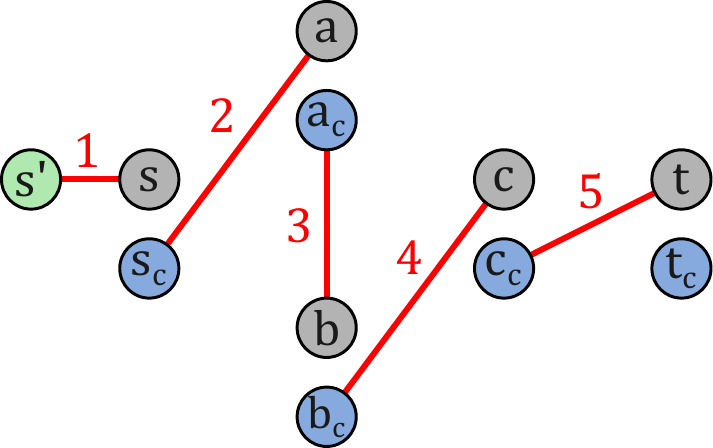}
        \caption{}
        \label{subfig:(2,0)-icrn-reduction-3}
    \end{subfigure}
    \caption{Transforming an instance of HAMPATH into an instance of iCRN reachability with only $(2,0)$-size reactions. (a) is a directed graph $G$ with a Hamiltonian path highlighted in red. (b) is the set of species and reactions of $\crn{C_{IC}}$ as constructed from $G$. (c) is the complete reachability reduction, as well as showing how the Hamiltonian path is traversed with $\crn{C_{IC}}$. The red line between two species copies represents their deletion by a $(2,0)$ void reaction; the number represents the order the reaction was applied.}
    \label{fig:(2,0)-icrn-reduction}
\end{figure}

\begin{restatable}{theorem}{icrnhampath}\label{thm:(2, 0) icrn}
    The reachability problem in iCRNs with only rules of size $(2,0)$ is NP-complete.
\end{restatable}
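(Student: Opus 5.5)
Membership in NP follows directly from Lemma~\ref{thm:NP-voidiCRN}, since $(2,0)$ rules are void rules. For hardness we reduce from Hamiltonian Path in a directed graph $G=(V,E)$, as suggested by Figure~\ref{fig:(2,0)-icrn-reduction}. We may assume the path has designated endpoints $s$ and $t$; this variant is NP-hard, or we can guess them by adding two auxiliary vertices.

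\textbf{Construction.} For each vertex $v\in V$ we create two species, an ``in'' species $v^{in}$ and an ``out'' species $v^{out}$, with one copy of each in $\config{C_s}$. The only exceptions are that $s$ gets only $s^{out}$ and $t$ gets only $t^{in}$. For each directed edge $(u,v)\in E$ we add the reaction $u^{out}+v^{in}\rightarrow\emptyset$, inhibited by $u^{in}$; for $u=s$ there is no in-copy, so the reaction is uninhibited. The target is $\emptyConfig$.

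The intended semantics is that deleting $u^{out}$ together with $v^{in}$ ``walks'' the edge $u\to v$. The inhibitor forces the walk to leave $u$ only after $u$ has been entered, that is, after $u^{in}$ has been deleted. Every reaction deletes exactly one out-copy and one in-copy. To reach $\emptyConfig$, therefore, the applied reactions must form a perfect matching between out-copies and in-copies along edges of $G$. This matching is a set of edges in which every vertex other than $t$ has out-degree one and every vertex other than $s$ has in-degree one. Such a set is a path from $s$ to $t$ together with a disjoint union of directed cycles.

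\textbf{Correctness.} Given a Hamiltonian path $s=v_1,\dots,v_n=t$, we apply $v_i^{out}+v_{i+1}^{in}\rightarrow\emptyset$ for $i=1,\dots,n-1$ in order. The step for $i$ is enabled because $v_i^{in}$ was deleted at step $i-1$, or never existed when $i=1$. This reaches $\emptyConfig$.

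Conversely, suppose some sequence reaches $\emptyConfig$. The chosen edges form a path from $s$ to $t$ plus vertex-disjoint cycles. A cycle $w_1\to\cdots\to w_m\to w_1$ is impossible. Consider the first reaction applied among its edges, say the one for $w_j\to w_{j+1}$. It requires $w_j^{in}$ to be absent. But $w_j^{in}$ can only be deleted by the cycle's edge into $w_j$, since each in-copy is deleted exactly once, by its matched edge. That edge has not yet fired, a contradiction. Hence the chosen edges form a single $s$--$t$ path covering all vertices, which is a Hamiltonian path. The instance has polynomial size, and all copy counts are $1$.

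\textbf{Main obstacle.} The delicate step is the converse direction: the inhibitor must rule out cycles rather than merely enforce an order. The cycle argument above handles this, and it relies on each copy being deleted exactly once. That property holds because every count is $1$ and every rule deletes exactly one in-copy and one out-copy. I would verify that no reaction can pair two out-copies or two in-copies; this holds by construction, since each reaction pairs one species of each kind. I would also check the endpoint handling, either fixing $s,t$ or reducing from the fixed-endpoint variant.
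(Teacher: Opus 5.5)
Your proof is correct and is essentially the paper's reduction. Your $v^{in}$ and $v^{out}$ are the paper's $v$ and $v_c$, with the same edge reactions $u^{out}+v^{in}\rightarrow\emptyset$ inhibited by $u^{in}$; the only construction difference is that you drop $s^{in}$ and $t^{out}$ and target $\emptyConfig$, whereas the paper adds a starter species $s'$ with $s'+s\rightarrow\emptyset$ and targets a single $t_c$. For the converse, your argument (fired edges form an $s$--$t$ path plus disjoint cycles, and the inhibitors rule out every cycle) is a more explicit alternative to the paper's informal ``one token, no backtracking'' walk argument.
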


\begin{proof}
    Due to space constraints, we provide a proof sketch here, while the complete proof can be found in Appendix \ref{sec:np-hard-appendix}.
    
    Lemma \ref{thm:NP-voidiCRN} shows reachability in void iCRNs is in NP. To show hardness, we reduce from the Hamiltonian path problem. Given a HAMPATH instance $\langle G,s,t \rangle$, we transform it into an instance of reachability with an iCRN $\crn{C_{IC}}$. For each vertex $v \in V$, we create the species $v$ and $v_c$. We also add an extra species $s'$. For each directed edge $\{u,v\} \in E$, we create the \emph{choosing reaction} $u_c+v\xrightarrow{u}\emptyset$. We also create the \emph{starting reaction} $s'+s\rightarrow\emptyset$. Let the initial configuration $\config{A}$ be a single copy of each species and the target configuration $\config{B}$ be a single copy of $t_c$. A Hamiltonian path from $s$ to $t$ can only exist in $G$ iff the system can delete all copies except for $t_c$.
    An example reachability instance is illustrated in Figure \ref{fig:(2,0)-icrn-reduction}.
\end{proof}

We now explore the reachability problem for a $(2,0)$ void iCRN with a constant number of inhibiting species. We follow the same format as in Priority iCRN section to show that the reachbility problem is polynomial time solvable for a constant number of inhibitors. To do so we show that reachability problem in such an iCRN can be reduced to the empty reachability problem in Lemmas \ref{lem:iCRN-empty1}, \ref{lem:iCRN-empty2}, and \ref{lem:iCRN-empty}. Finally, we reduce the empty reachability problem in the given iCRN to the empty reachability problem in Priority iCRN with $(2,0)$ reactions.

Given a $(2,0)$ void iCRN $\crn{C_{IC}} = ((\species, \reactions), \mathcal{I})$, where the set of species $\Lambda$ is partitioned into the set of inhibiting species $\species_{\mathcal{I}} = \bigcup_{\reaction\in\reactions} \mathcal{I}(\reaction)$ and the set of remaining species $\species \setminus \species_{\mathcal{I}}$. We use $\emptyConfig$ to denote an empty configuration where all species have a count of zero.

\begin{lemma}\label{lem:iCRN-empty1}
    Given two configurations $\config{C_s}$ and $\config{C_t}$ in a (2,0) void iCRN $\crn{C_{IC}}$, $\config{C_s}\reaches \config{C_t}\iff\config{C_s}-\config{C_t}\reaches\emptyConfig$, if $\forall \lambda_i \in \species_{\mathcal{I}}, \config{C_t}[\lambda_i]=0$.
\end{lemma}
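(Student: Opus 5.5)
We prove both directions by translating between reaction sequences from $\config{C_s}$ and reaction sequences from $\config{D} := \config{C_s}-\config{C_t}$. In a $(2,0)$ void system every reaction removes two copies and adds none. Hence $\config{C_s}\reaches\config{C_t}$ requires $\config{C_t}\le\config{C_s}$, so $\config{D}$ is a well-defined configuration; if $\config{C_t}\not\le\config{C_s}$, neither side holds. The key is the hypothesis that $\config{C_t}[\lambda]=0$ for every $\lambda\in\species_{\mathcal{I}}$. Under it, $\config{C_s}$ and $\config{D}$ carry the same counts of every inhibiting species, and the extra copies $\config{C_t}$ consist only of non-inhibiting species.

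\textbf{Forward direction.} Suppose $\config{C_s}=\config{A_0}\stepsto\config{A_1}\stepsto\cdots\stepsto\config{A_n}=\config{C_t}$ with reactions $\reaction_1,\ldots,\reaction_n$. Consider the shifted sequence $\config{A_m}-\config{C_t}$, which starts at $\config{D}$ and ends at $\emptyConfig$. It consists of nonnegative vectors, since counts only decrease along the path and so $\config{A_m}\ge\config{C_t}$.
\begin{itemize}
\item Each $\reaction_m$ is still applicable. Its reactant requirement holds because $\config{A_{m-1}}-\config{C_t}-\reactants_m = \config{A_m}-\config{C_t}\ge\emptyConfig$.
\item No inhibitor of $\reaction_m$ is present. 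Each inhibitor $\lambda\in\mathcal{I}(\reaction_m)$ has $(\config{A_{m-1}}-\config{C_t})[\lambda]=\config{A_{m-1}}[\lambda]=0$, because $\config{C_t}[\lambda]=0$ and $\reaction_m$ was uninhibited at $\config{A_{m-1}}$.
\end{itemize}
So the same reaction sequence gives $\config{D}\reaches\emptyConfig$.

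\textbf{Backward direction.} Take a sequence $\config{D}=\config{B_0}\stepsto\cdots\stepsto\config{B_n}=\emptyConfig$ and add $\config{C_t}$ throughout, giving $\config{B_m}+\config{C_t}$ from $\config{C_s}$ to $\config{C_t}$.
\begin{itemize}
\item Reactant requirements are monotone, so adding copies keeps every reaction applicable.
\item For $\lambda\in\species_{\mathcal{I}}$ we have $(\config{B_m}+\config{C_t})[\lambda]=\config{B_m}[\lambda]$. This is the step that needs the hypothesis: the added copies contain no inhibiting species, so none of them can block a reaction, and every reaction that was uninhibited at $\config{B_{m-1}}$ stays uninhibited.
\end{itemize}
Hence $\config{C_s}\reaches\config{C_t}$.

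The only delicate point is inhibition. Adding species is normally harmful and removing species is normally helpful, and the hypothesis $\config{C_t}|_{\species_{\mathcal{I}}}=0$ is exactly what makes both translations preserve the absence of inhibitors. The argument never uses that reactions are of size $(2,0)$ beyond their being void (no products).
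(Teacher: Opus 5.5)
Your proof is correct and rests on the same idea as the paper's argument (which defers to Lemma~\ref{lem:piCRN-empty1}: subtracting $\config{C_t}$ leaves every inhibitor count unchanged, and deletion-only rules let a trajectory be shifted by $\config{C_t}$). You carry this out in both directions by explicitly translating the trajectory by $-\config{C_t}$ and $+\config{C_t}$ respectively, whereas the paper only sketches it. One minor remark: your forward direction does not actually need the hypothesis, since removing copies can never make an absent inhibitor present. The condition $\config{C_t}[\lambda]=0$ for $\lambda\in\species_{\mathcal{I}}$ is essential only for the backward translation.
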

\begin{proof}
    Follows from Lemma \ref{lem:piCRN-empty1} because the proof does not rely on any ordering of inhibitors.
\end{proof}
\begin{lemma}\label{lem:iCRN-empty2}
    Given two configurations $\config{C_s}$ and $\config{C_t}$ in a (2,0) void iCRN $\crn{C_{IC}}$, $\config{C_s}\reaches \config{C_t}\iff\config{C_s}-\config{C_t}\reaches\emptyConfig$, if $\exists \lambda_i \in \species_{\mathcal{I}}, \config{C_t}[\lambda_i]\not=0$.
\end{lemma}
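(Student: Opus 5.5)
The plan is to mirror the proof of Lemma~\ref{lem:piCRN-empty2}: prune the reactions that can never fire, and then appeal to Lemma~\ref{lem:iCRN-empty1}. Let $S = \{\lambda \in \species_{\mathcal{I}} \mid \config{C_t}[\lambda] \neq 0\}$, which is nonempty by hypothesis. The system is void, so counts never increase along any path. Hence every species in $S$ has positive count in every configuration on any path from $\config{C_s}$ to $\config{C_t}$, and likewise on any path from $\config{C_s}-\config{C_t}$ that stays coordinatewise above the corresponding configuration shifted by $\config{C_t}$. It follows that any reaction $\reaction$ with $\mathcal{I}(\reaction) \cap S \neq \emptyset$ can never fire on a $\config{C_s} \reaches \config{C_t}$ path. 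We therefore delete all such reactions to obtain $\reactions'$, with $\mathcal{I}' = \mathcal{I}|_{\reactions'}$ and new inhibitor set $\species'_{\mathcal{I}} = \bigcup_{\reaction\in\reactions'} \mathcal{I}(\reaction)$.

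By construction, $\species'_{\mathcal{I}} \cap S = \emptyset$, so $\config{C_t}[\lambda] = 0$ for every $\lambda \in \species'_{\mathcal{I}}$. Moreover, $\config{C_s} \reaches \config{C_t}$ in the original iCRN if and only if it holds in $((\species,\reactions'),\mathcal{I}')$, since the removed reactions were never usable. Applying Lemma~\ref{lem:iCRN-empty1} to the pruned system then gives $\config{C_s} \reaches \config{C_t} \iff \config{C_s}-\config{C_t} \reaches \emptyConfig$ in the pruned system.

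The main obstacle is interpretive. The equivalence must be read with respect to the pruned reaction set, exactly as in Lemma~\ref{lem:piCRN-empty2}. In the unpruned system, starting from $\config{C_s}-\config{C_t}$, the species in $S$ might reach zero and thereby unlock a removed reaction. I would address this by stating explicitly that the right-hand side refers to $\reactions'$.

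To justify the transfer in both directions, I would argue as follows. Given a path from $\config{C_s}$ to $\config{C_t}$, subtracting $\config{C_t}$ coordinatewise yields a valid path from $\config{C_s}-\config{C_t}$ to $\emptyConfig$. This is valid because each pruned reaction's inhibitors lie outside $S$ and have target count zero, so they are absent in the original configuration exactly when they are absent in the shifted one. Conversely, given a path from $\config{C_s}-\config{C_t}$ to $\emptyConfig$, adding $\config{C_t}$ back preserves applicability for the same reason: inhibitor counts are unchanged by the shift, and reactant availability only increases. This bijection between the two families of paths is the substance of Lemma~\ref{lem:iCRN-empty1}.
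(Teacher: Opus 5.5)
Your proposal is correct and matches the paper's proof. You remove every reaction inhibited by an inhibitor species with nonzero target count, observe that all inhibitors of the surviving reactions have target count zero, and apply Lemma~\ref{lem:iCRN-empty1} to the pruned iCRN $((\species,\reactions'),\mathcal{I})$, which is exactly where the paper's proof states its conclusion. Your remark that the right-hand side must be read relative to $\reactions'$ is warranted, since the lemma statement omits this. Your argument that pruning does not affect $\config{C_s}\reaches\config{C_t}$ adds useful detail. In your forward shift, reactant availability in $A-\config{C_t}$ also deserves a word: it holds because the successor $A-\reactants$ is still $\geq\config{C_t}$ for $(2,0)$ rules. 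This is already covered by citing Lemma~\ref{lem:iCRN-empty1}, though.
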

\begin{proof}
    The proof is similar to that of Lemma \ref{lem:piCRN-empty2}. Here we remove any reactions inhibited by species $\lambda_i \in \species_{\mathcal{I}}$. Now $\lambda_i$ is no longer an inhibitor. The remaining set of reactions is $\reactions'$. From Lemma \ref{lem:iCRN-empty1}, $\config{C_s}\reaches \config{C_t}\iff\config{C_s}-\config{C_t}\reaches\emptyConfig$ in a iCRN $\crn{C_{IC}} = ((\species,\reactions'),\mathcal{I})$.
\end{proof}
\begin{lemma}\label{lem:iCRN-empty}
    Given two configurations $\config{C_s}$ and $C_t$ in a (2,0) void iCRN $\crn{C_{IC}}$, $\config{C_s}\reaches \config{C_t}\iff\config{C_s}-\config{C_t}\reaches\emptyConfig$.
\end{lemma}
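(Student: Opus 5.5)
The plan is to mirror the proof of Lemma~\ref{lem:piCRN-empty}: split into two exhaustive cases according to whether the target $\config{C_t}$ contains some inhibiting species, and invoke the corresponding earlier lemma in each case.

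\textbf{Case 1: every inhibiting species is absent from the target.} Suppose that for every $\lambda_i \in \species_{\mathcal{I}}$ we have $\config{C_t}[\lambda_i]=0$. Then the statement is exactly Lemma~\ref{lem:iCRN-empty1}, and we simply cite it.

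\textbf{Case 2: some inhibiting species survives in the target.} Otherwise there exists $\lambda_i \in \species_{\mathcal{I}}$ with $\config{C_t}[\lambda_i]\neq 0$, and Lemma~\ref{lem:iCRN-empty2} applies. The only subtlety is that there may be several such inhibitors, so I will make the reduction behind that lemma explicit.
\begin{itemize}
\item Let $S=\{\lambda\in\species_{\mathcal{I}} : \config{C_t}[\lambda]>0\}$, and delete every reaction $\reaction$ with $\mathcal{I}(\reaction)\cap S\neq\emptyset$ in one step.
\item Because all rules are void, species counts never increase along any trajectory. So every configuration on a path from $\config{C_s}$ to $\config{C_t}$ contains every species of $S$, and the deleted reactions can never fire on such a path. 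Reachability of $\config{C_t}$ is therefore unchanged in the pruned iCRN $((\species,\reactions'),\mathcal{I})$.
\item In the pruned system every remaining inhibitor has target count $0$. Case~1 (Lemma~\ref{lem:iCRN-empty1}) then gives $\config{C_s}\reaches\config{C_t}\iff\config{C_s}-\config{C_t}\reaches\emptyConfig$ in the pruned system.
\end{itemize}

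\textbf{Main obstacle.} The step needing care is translating back to the original system on the empty-configuration side. Along a path from $\config{C_s}-\config{C_t}$ to $\emptyConfig$, the species in $S$ could drop to zero in the original system, which would enable reactions that were pruned. So $\config{C_s}-\config{C_t}\reaches\emptyConfig$ must be read with respect to $\reactions'$, exactly as in Lemma~\ref{lem:piCRN-empty2}. I would state this convention explicitly and check that the correspondence is faithful in both directions.
\begin{itemize}
\item \emph{From $\config{C_s}$ to $\config{C_t}$, to the empty side.} A path from $\config{C_s}$ to $\config{C_t}$ never uses a pruned reaction. Every reaction it uses is uninhibited in configurations that dominate the shifted ones (subtracting $\config{C_t}$ only removes copies), so the same path, shifted by $-\config{C_t}$, remains valid in the pruned system.
\item \emph{From the empty side back to $\config{C_t}$.} Conversely, take a path from $\config{C_s}-\config{C_t}$ to $\emptyConfig$ using only $\reactions'$. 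Adding $\config{C_t}$ back cannot inhibit any remaining reaction: any inhibitor that is still active has target count $0$, so adding $\config{C_t}$ does not add copies of it. Hence the shifted path is valid from $\config{C_s}$ and ends at $\config{C_t}$.
\end{itemize}
Combining the two cases proves the lemma.
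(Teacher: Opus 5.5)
Your proposal is correct and follows the paper's proof, which is simply the case split on whether $\config{C_t}$ contains an inhibiting species, citing Lemma~\ref{lem:iCRN-empty1} and Lemma~\ref{lem:iCRN-empty2}. Two of your additions are things the paper's proof of Lemma~\ref{lem:iCRN-empty2} only does implicitly: pruning every reaction inhibited by some species of $S$ at once, and reading $\config{C_s}-\config{C_t}\reaches\emptyConfig$ in the pruned system $((\species,\reactions'),\mathcal{I})$. The second of these is genuinely needed, since with the single rule $a+b\rightarrow\emptyset$ inhibited by $c$, $\config{C_s}=\single{a}+\single{b}+\single{c}$ and $\config{C_t}=\single{c}$, the literal statement fails in the original system.
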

\begin{proof}
    Follows from Lemmas \ref{lem:iCRN-empty1} and \ref{lem:iCRN-empty2}.
\end{proof}

\begin{restatable}{lemma}{orderediCRN}\label{lem:iCRN-piCRN}
    Given any iCRN system $(\crn{C_{IC}}, \config{C_s})$ and an ordering on inhibitors, we can construct a Priority iCRN system $(\crn{C_{PI}}, \config{C_s'})$ that contains the same inhibitors in the given ordering. This construction takes $O(\size{\species}\cdot\size{\reactions})$ where $\species$ and $\reactions$ are the set of species and reactions in the given iCRN.
\end{restatable}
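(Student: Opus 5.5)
The construction is a relabeling of species followed by a translation of inhibitor sets into priorities. Let $\species_{\mathcal{I}} = \bigcup_{\reaction\in\reactions}\mathcal{I}(\reaction)$ be the inhibiting species, listed in the given order as $\mu_1,\ldots,\mu_m$. I would define the ordered species set of $\crn{C_{PI}}$ as $\langle \mu_1,\ldots,\mu_m, \nu_1,\ldots,\nu_{\size{\species}-m}\rangle$, where the $\nu_j$ are the non-inhibiting species in any order. The reaction set is kept the same, and $\config{C_s'}$ is simply $\config{C_s}$ re-indexed under this ordering. For each reaction $\reaction$, I would set
\[
\mathcal{P}(\reaction) = \max\{\, i : \mu_i \in \mathcal{I}(\reaction)\,\},
\]
with $\mathcal{P}(\reaction)=0$ when $\mathcal{I}(\reaction)=\emptyset$. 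Computing this means scanning each reaction's inhibitor set against the ordering, which costs $O(\size{\species})$ per reaction and $O(\size{\species}\cdot\size{\reactions})$ in total. That gives the stated running time.

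The Priority iCRN inhibits $\reaction$ by the prefix $\{\mu_1,\ldots,\mu_{\mathcal{P}(\reaction)}\}$, which can strictly contain $\mathcal{I}(\reaction)$. So the statement has to be read as saying that the inhibitors of $\crn{C_{PI}}$ are exactly the species of $\species_{\mathcal{I}}$, with their induced prefix structure respecting the given order. The step I expect to carry the real content, and to be the main obstacle, is stating precisely in what sense the two systems agree. I would aim for the following invariant: along any execution in which the inhibitors are driven to zero consistently with the given order, the two systems enable the same reactions.

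More precisely, call a configuration \emph{order-consistent} if, whenever $\mu_i$ is absent, every $\mu_{i'}$ with $i'<i$ is also absent. In an order-consistent configuration, $\mathcal{I}(\reaction)$ is entirely absent iff $\mu_{\max}$ is absent iff the whole prefix up to $\mathcal{P}(\reaction)$ is absent. Hence applicability coincides in the two models. Since void rules only delete species, once a prefix reaches zero it stays zero. An iCRN execution that zeroes the inhibitors in the given order therefore passes through order-consistent configurations at every step where an inhibited reaction fires, and is reproduced step by step in $\crn{C_{PI}}$. Conversely, every execution of $\crn{C_{PI}}$ is also an iCRN execution, because the prefix contains $\mathcal{I}(\reaction)$.

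The intended downstream use, for a constant number of inhibitors, is to enumerate all $m!$ orderings, build one Priority iCRN per ordering, and apply Lemma \ref{lem:(2,0)-Priority} to each. The correctness of that use relies on the rearrangement idea of Lemma \ref{lem:rearrange}: every iCRN execution reaching the empty configuration zeroes the inhibitors in some order, and the matching ordering is among those enumerated. For the lemma itself, I would prove only the construction, the time bound, and the two simulation directions above.
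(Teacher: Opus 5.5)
Your proposal is correct and follows the paper's construction: place the inhibitors first in the given order, permute the reactions and $\config{C_s}$ accordingly, and set $\mathcal{P}(\reaction)$ to the largest index of a species in $\mathcal{I}(\reaction)$, at a cost of $O(\size{\species})$ per reaction. There are two differences. The paper also deletes self-inhibiting reactions (those whose priority is at least the index of one of their reactants), which you can omit without harm because such reactions can never fire under priority dynamics. Your order-consistency simulation argument goes beyond the paper's proof, which leaves correctness to Lemma~\ref{lem:(2,0)-O(1)-iCRN}.
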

\begin{proof}\textit{(Sketch.)}
    We construct an ordered species set $\species'$ containing all species from the given iCRN. The species are arranged such that all the inhibitors (in order) appear first in the set. The configuration and reactions are also rearranged to match the ordering of species in $\species'$. For each reaction, the priority is set to be the maximum index of all inhibitors. Finally, we remove any reactions that are self-inhibiting, i.e., the priority of the reaction is greater or equal to the index of at least one of the reactants. The complete proof can be found in Appendix \ref{sec:np-hard-picrn}.
\end{proof}

\begin{lemma}\label{lem:(2,0)-O(1)-iCRN}
Given an $iCRN~\crn{C_{IC}} = ((\species, \reactions), \mathcal{I})$ with only void rules of size $(2, 0)$, the empty configuration reachability problem is decidable in  $O(c!\cdot f(\species,\reactions))$ time where $c$ is the number of inhibitors and $f(\species,\reactions)= \size{\species}^2\log(\size{\species})(\size{\reactions} + \size{\species}\log(\size{\species}))$.
\end{lemma}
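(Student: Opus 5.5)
The plan is to enumerate all $c!$ orderings of the inhibiting species $\species_{\mathcal{I}}$, turn each ordering into a Priority iCRN via Lemma~\ref{lem:iCRN-piCRN}, and decide empty-configuration reachability for each one with the $b$-matching algorithm of Lemma~\ref{lem:(2,0)-Priority}. We answer ``yes'' iff some ordering yields ``yes''. Each construction costs $O(\size{\species}\cdot\size{\reactions})$, which is dominated by $f(\species,\reactions)$, and each $b$-matching call costs $O(f(\species,\reactions))$. The total running time is therefore $O(c!\cdot f(\species,\reactions))$.

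\textbf{Soundness.} Suppose the Priority iCRN built from some ordering $\sigma$ reaches $\emptyConfig$ from $\config{C_s}$. Every reaction kept in that Priority iCRN carries the priority $\max$ index of its inhibitors. Hence whenever it fires, all species with index at most that priority are absent, and in particular all of its original iCRN inhibitors are absent. The reactions themselves are unchanged, and the construction only removes reactions and never adds any. So the same reaction sequence is valid in the original iCRN and reaches $\emptyConfig$ there as well.

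\textbf{Completeness.} This is the main step. Suppose the iCRN has a sequence reaching $\emptyConfig$. Since every reaction is void, each inhibitor's count drops to $0$ at some point and stays $0$ afterward. Order the inhibitors by the time their count first becomes $0$ (ties broken arbitrarily), and call this ordering $\sigma$; the target is empty, so every inhibitor eventually vanishes and $\sigma$ is well defined. Now take any applied reaction $\reaction$ with inhibitor set $\mathcal{I}(\reaction)$. At the moment $\reaction$ fires, every species in $\mathcal{I}(\reaction)$ is already zero. The inhibitors with smaller index in $\sigma$ vanished no later than the last of these, so they are also zero at that moment. Therefore the priority constraint (all of the first $\max$-index species are absent) holds.

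It remains to check that $\reaction$ was not deleted as ``self-inhibiting''. This rule removes a reaction if one of its reactants has index at most its priority. Such a reactant would then be an inhibitor that had already reached zero when $\reaction$ fired. That is a contradiction, because the reactant must be present for $\reaction$ to fire. Hence the same sequence is valid in the Priority iCRN for $\sigma$.

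The obstacle I expect is the bookkeeping around ties and around reactants that are themselves inhibitors. Ordering by the first-zero time and invoking permanence of zero, which holds because the rules are void, should handle both. If more care is needed, Lemma~\ref{lem:rearrange} can first be used to make the sequence contiguous, which simplifies the timing argument.
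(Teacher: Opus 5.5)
Your proposal is correct and follows the paper's proof. Like the paper, you enumerate the $c!$ inhibitor orderings, build a Priority iCRN for each via Lemma~\ref{lem:iCRN-piCRN}, and decide it with the $b$-matching algorithm of Lemma~\ref{lem:(2,0)-Priority}; your soundness and completeness arguments correspond to the paper's Claims~2 and~1, and your completeness step (ordering inhibitors by first-zero time, using that zeros persist under void rules to handle ties and exclude removed self-inhibiting reactions) is more careful than the paper's.
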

\begin{proof}
    Given an iCRN $\crn{C_{IC}}=((\species,\reactions),\mathcal{I})$ with $(2,0)$ void rules and some start configuration $\config{C_s}$. Given that the iCRN has at most $c$ inhibitors, there exist at most $c!$ different orderings in which the inhibitors are removed.

    To solve the empty configuration reachability problem in iCRN, we reduce it to $c!$ instances of the empty configuration reachability problem in Priority iCRN. From Lemma \ref{lem:(2,0)-Priority} we know that this problem is solvable in $O(f(\species,\reactions))$. For each ordering $\langle i_1,i_2,\ldots,i_c\rangle$ where $i_j$ is some species that acts as an inhibitor, we construct a Priority iCRN as described in Lemma \ref{lem:iCRN-piCRN}. We now show that the empty configuration is reachable in the given iCRN iff the empty configuration is reachable in at least one of the $c!$ Priority iCRNs.

    {\bf Claim 1} If the empty configuration is reachable in iCRN, then it is also reachable in at least one of the constructed Priority iCRNs. This is true simply by construction of the Priority iCRNs. If the iCRN reaches the empty configuration, then all species, including inhibitor species, go to zero. Therefore, there must exist an ordering in which these inhibitors are deleted. Since we consider every possible ordering, at least one of the $c!$ Priority iCRNs orderings will reach the empty configuration.

    {\bf Claim 2} If the empty configuration is reachable in at least one of the constructed Priority iCRNs, then the iCRN reaches the empty configuration. Let the ordering corresponding to the Priority iCRN for which the empty configuration is reachable be $\langle i_1,\ldots,i_c\rangle$. Modify the reaction set $\reactions$ of the given iCRN by removing all reactions of the form $i_j + \_ \rightarrow \phi$ or $\_ + i_j \rightarrow \phi$ if they are inhibited by species $i_{k>j}$ since $i_j$ is deleted before $i_k$. For the remaining reactions, apply the reactions in the same order in which their corresponding reactions in $\crn{C_{PI}}$ are applied. If none of the $c!$ Priority iCRNs reach the empty configuration, then the given iCRN will also not reach the empty configuration.
\end{proof}

\begin{theorem}\label{thm:(2,0)-fpt-icrn}
Given an $iCRN~\crn{C_{IC}} = ((\species, \reactions), \mathcal{I})$ with only void rules of size $(2, 0)$, the reachability problem is decidable in $O(c!\cdot f(\species,\reactions))$ time where $c$ is the number of inhibitors and $f(\species,\reactions)= \size{\species}^2\log(\size{\species})(\size{\reactions} + \size{\species}\log(\size{\species}))$.
\end{theorem}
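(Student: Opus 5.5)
The plan mirrors the proof of Theorem~\ref{thm:(2,0)-Priority}: reduce general reachability to empty-configuration reachability, then invoke the FPT routine of Lemma~\ref{lem:(2,0)-O(1)-iCRN}. Given an iCRN $\crn{C_{IC}}=((\species,\reactions),\mathcal{I})$ with only $(2,0)$ void rules, a start configuration $\config{C_s}$ and a target $\config{C_t}$, first check that $\config{C_t}\leq\config{C_s}$ componentwise. This is necessary since void rules never create species; if it fails, answer ``no''.

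Next, preprocess the reaction set as in Lemma~\ref{lem:iCRN-empty2}. For every inhibitor $\lambda\in\species_{\mathcal{I}}$ with $\config{C_t}[\lambda]\neq 0$, delete every reaction $\reaction$ with $\lambda\in\mathcal{I}(\reaction)$. Such a reaction can never fire on any path from $\config{C_s}$ to $\config{C_t}$, because counts only decrease and $\lambda$ stays present throughout. This takes $O(\size{\species}\cdot\size{\reactions})$ time and leaves the number of inhibitors at most $c$.

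By Lemma~\ref{lem:iCRN-empty} (via Lemmas~\ref{lem:iCRN-empty1} and \ref{lem:iCRN-empty2}), $\config{C_s}\reaches\config{C_t}$ holds in the original iCRN iff $\config{C_s}-\config{C_t}\reaches\emptyConfig$ holds in the modified iCRN. Build the configuration $\config{C_s}-\config{C_t}$ in $O(\size{\species})$ time. Then run the algorithm of Lemma~\ref{lem:(2,0)-O(1)-iCRN} on this instance. That algorithm enumerates the at most $c!$ orderings of the inhibitors, converts each ordering to a Priority iCRN via Lemma~\ref{lem:iCRN-piCRN}, and solves each resulting instance by the $b$-matching reduction of Lemma~\ref{lem:(2,0)-Priority} in $O(f(\species,\reactions))$ time. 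Return its answer.

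For the running time, the preprocessing and subtraction cost $O(\size{\species}\size{\reactions})$. Each of the $c!$ conversions costs $O(\size{\species}\size{\reactions})$, which is dominated by $f(\species,\reactions)$. The total is therefore $O(c!\cdot f(\species,\reactions))$.

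The main obstacle is the reduction step itself: justifying that subtracting $\config{C_t}$ is valid. Species that remain in $\config{C_t}$ are still physically present in the real system while the copies counted in $\config{C_s}-\config{C_t}$ are being deleted. For non-inhibitor species, the leftover copies are inert, since presence alone never blocks a $(2,0)$ rule and extra copies only help applicability. For inhibitor species with nonzero target count, the preprocessing has already removed every reaction they inhibit, so the leftover copies are harmless there too. The argument therefore leans on Lemma~\ref{lem:iCRN-empty}, and I would make sure it is applied to the pruned reaction set $\reactions'$, so that the inhibitor count passed to Lemma~\ref{lem:(2,0)-O(1)-iCRN} does not exceed $c$.
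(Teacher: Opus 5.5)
Your proposal is correct and follows the paper's own proof, which simply combines Lemma~\ref{lem:iCRN-empty} (reducing to empty-configuration reachability from $\config{C_s}-\config{C_t}$ over the pruned reaction set) with the $c!$-orderings algorithm of Lemma~\ref{lem:(2,0)-O(1)-iCRN}. The details you add are the steps the paper leaves implicit: the check $\config{C_t}\leq\config{C_s}$, the point that the empty-reachability side must use the pruned reaction set, and the accounting showing that the $O(\size{\species}\size{\reactions})$ preprocessing and per-ordering conversions are dominated by $f(\species,\reactions)$.
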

\begin{proof}
    Similar to the case of $(2,0)$ Priority iCRN, proof follows from Lemmas \ref{lem:iCRN-empty} and \ref{lem:(2,0)-O(1)-iCRN}.
\end{proof}

\begin{figure}
    \centering
    \begin{subfigure}[b]{0.12\textwidth}
        \centering
        \includegraphics[width=1\textwidth]{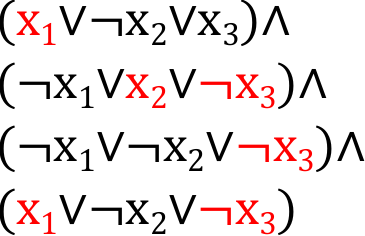}
        \caption{}
        \label{subfig:(2,1)-icrn-reduction-1}
    \end{subfigure}
    \hfill
    \begin{subfigure}[b]{0.6\textwidth}
        \centering
        \includegraphics[width=1\textwidth]{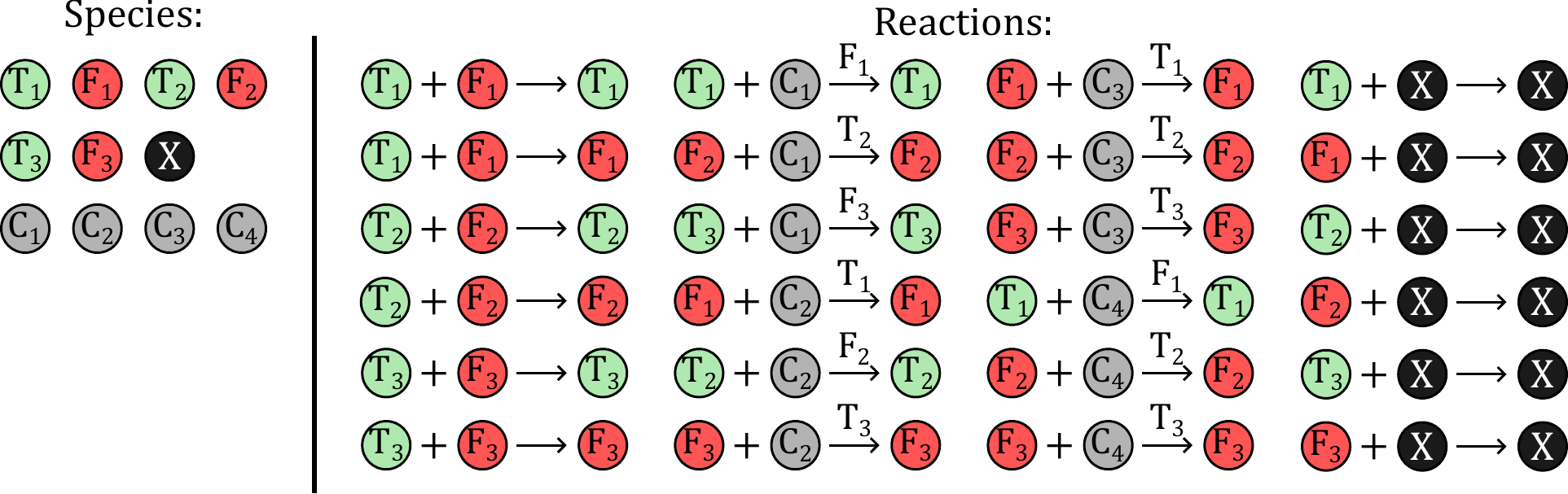}
        \caption{}
        \label{subfig:(2,1)-icrn-reduction-2}
    \end{subfigure}
    \hfill
    \begin{subfigure}[b]{0.2\textwidth}
        \centering
        \includegraphics[width=1\textwidth]{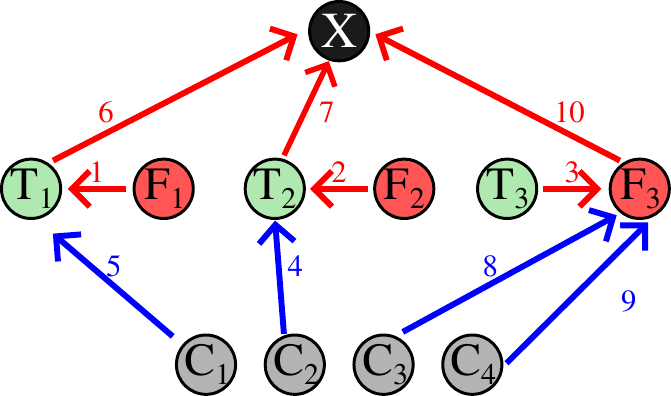}
        \caption{}
        \label{subfig:(2,1)-icrn-reduction-3}
    \end{subfigure}
    \caption{Transforming an instance of 3SAT into an instance of iCRN reachability with only $(2,1)$-size reactions. (a) is a $3$-CNF formula $\Phi$. A satisfying assignment of $\Phi$ is represented by the red letters. (b) is the set of species and reactions of $\crn{C_{IC}}$ as constructed from $\Phi$. (c) is the complete reachability reduction, as well as showing how the formula is satisfied with $\crn{C_{IC}}$. The red line between two species copies represents their deletion by a $(2,1)$ void reaction.}
    \label{fig:(2,1)-icrn-construction}
\end{figure}

\begin{restatable}{theorem}{icrnsat}\label{thm:(2, 1) icrn}\vspace{-.3cm}
    The reachability problem in (2,1) void iCRN is NP-complete.
\end{restatable}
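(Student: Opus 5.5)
Membership in NP is immediate from Lemma~\ref{thm:NP-voidiCRN}, since $(2,1)$ void rules are void reactions. For hardness I would reduce from 3SAT, as Figure~\ref{fig:(2,1)-icrn-construction} suggests. The idea is to encode a truth assignment by which of two literal species survives, and to let the surviving literal act as a catalyst that deletes the clauses it satisfies. Inhibitors will block catalysis by a literal whose opposite value has not yet been discarded.

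\textbf{Construction.} Let $\Phi$ have variables $x_1,\dots,x_n$ and clauses $c_1,\dots,c_m$.
\begin{itemize}
\item For each variable $x_i$, create species $T_i$ and $F_i$. Add the \emph{choosing reactions} $T_i + F_i \rightarrow T_i$ and $F_i + T_i \rightarrow F_i$, both uninhibited. The first fixes $x_i$ true; the second fixes it false.
\item For each clause $c_j$, create a species $c_j$.
\item For each positive occurrence of $x_i$ in $c_j$, add the \emph{satisfying reaction} $T_i + c_j \rightarrow T_i$, inhibited by $F_i$.
\item For each negative occurrence of $x_i$ in $c_j$, add $F_i + c_j \rightarrow F_i$, inhibited by $T_i$.
\item Add a species $Z$ with \emph{clean-up reactions} $Z + T_i \rightarrow Z$ and $Z + F_i \rightarrow Z$, each inhibited by every clause species $c_1,\dots,c_m$.
\end{itemize}
The initial configuration has one copy of every species, and the target configuration is $\single{Z}$. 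The instance is clearly polynomial in $|\Phi|$, and every reaction has size $(2,1)$.

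\textbf{Forward direction.} Given a satisfying assignment, I would proceed in three phases:
\begin{enumerate}
\item For each variable, apply the choosing reaction that keeps the literal species matching its truth value.
\item Each clause contains some true literal. That literal's species is present, and the species it is inhibited by has been deleted, so its satisfying reaction can delete the clause.
\item Once all clauses are gone, the clean-up reactions are enabled, and $Z$ deletes the remaining literal species, leaving exactly $\single{Z}$.
\end{enumerate}

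\textbf{Reverse direction.} Suppose some sequence reaches $\single{Z}$. I would argue from the fact that species are only ever deleted.
\begin{itemize}
\item Each $c_j$ must be deleted, and only satisfying reactions delete it.
\item Until every $c_j$ is gone, no clean-up reaction can fire. So while clauses remain, $T_i$ and $F_i$ can only be deleted by the choosing reactions.
\item A choosing reaction requires both $T_i$ and $F_i$ to be present. Hence, before the last clause disappears, at most one of $T_i,F_i$ has been deleted, and that one can never reappear.
\end{itemize}
Define the assignment by setting $x_i$ true iff $F_i$ was deleted before the final clause deletion. Variables where neither was deleted are set arbitrarily.

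Now take any clause $c_j$ and the satisfying reaction that deleted it, say $T_i + c_j \rightarrow T_i$ with inhibitor $F_i$. At that moment $F_i$ was absent. Since $Z$ cannot touch literals while $c_j$ exists, $F_i$ was removed by a choosing reaction, so $x_i$ is true under our assignment. Symmetrically, a negative literal used in this way is false-valued, making the clause true. So every clause is satisfied.

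\textbf{Main obstacle.} The delicate point is this reverse argument: ruling out orderings that delete literal species early, or use a literal before its complement is fixed. That is handled by the clause-inhibited clean-up reactions together with the monotonicity of void systems. I would spell this step out carefully, in particular the invariant that before the last clause is deleted, at most one of $T_i, F_i$ is absent for each $i$.
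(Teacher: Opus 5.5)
Your proposal is correct and is essentially the paper's reduction from 3SAT. It uses the same literal species $T_i,F_i$, the same pair of choosing reactions, clause reactions catalyzed by a literal and inhibited by its complement, and a catalytic clean-up species that deletes the surviving literals. The only change is that you inhibit the clean-up reactions by every clause species, whereas the paper leaves them uninhibited. The paper's version still works: a literal can act as a catalyst only once its complement is absent, and void systems never regenerate species, so at most one of $T_i,F_i$ is ever used. Your extra inhibitors make the reverse-direction invariant immediate. They also rigorously justify the claim, which the paper states loosely, that literals are removed only by choosing reactions while clauses remain.
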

\begin{proof}
    Due to space constraints, we provide a proof sketch here, while the complete proof can be found in Appendix \ref{sec:np-hard-appendix}.

    From Lemma \ref{thm:NP-voidiCRN}, we have that the reachability problem in void iCRNs is in NP. To show hardness, we reduce from the 3-satisfiability problem. Given an instance of 3SAT $\langle\Phi\rangle$, we create a new iCRN $\crn{C_{IC}}$. For each variable $x_i\in X_n$, we create the species $T_i$ and $F_i$. Additionally, for each clause $c_j\in C_m$, we create the species $c_j$. Finally, we create the species $X$.  For each variable $x_i$, we also create the \emph{assignment reactions} $T_i + F_i \rightarrow T_i$ and $T_i + F_i \rightarrow F_i$. For each clause $c_j$ and a variable of $c_j$ $x_k$, we create the \emph{clause reaction} $C_i + T_k \xrightarrow{F_k} T_k$ if assigning true to $x_k$ satisfies $c_j$, or $C_i + F_k \xrightarrow{T_k} F_k$ if the false assignment satisfies instead. Finally, for each variable, we create the \emph{clean-up reaction} $T_i + X \rightarrow X$ and $F_i + X \rightarrow X$. The initial configuration $\config{C_s}$ is a single copy of each species and the target configuration $\config{C_t}$ is a single copy of $X$. $\Phi$ can only be satisfied iff the system can delete all copies except for $X$.    
    An example reachability instance is illustrated in Figure \ref{fig:(2,1)-icrn-construction}.
\end{proof}

\begin{theorem} \label{thm:(k, k-1) icrns}
    The reachability problem in $(k,k-1)$ void iCRN is NP-complete.
\end{theorem}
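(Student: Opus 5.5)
Membership in NP follows directly from Lemma \ref{thm:NP-voidiCRN}, since $(k,k-1)$ rules are void. For hardness I would generalize the 3SAT reduction of Theorem \ref{thm:(2, 1) icrn}, which is exactly the case $k=2$. The idea is to pad every reaction with $k-2$ extra catalysts that are always present and never deleted. Concretely, given $\Phi$, I would build the same species set $T_i, F_i, c_j, X$ and add one fresh species $Z$ with $k-2$ copies in both $\config{C_s}$ and $\config{C_t}$. (Alternatively I could use $k-2$ distinct fresh species, one copy of each.) Each $(2,1)$ reaction $a+b\rightarrow b$ of the earlier construction becomes $a+b+(k-2)Z\rightarrow b+(k-2)Z$, with the same inhibitor set. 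This is a $(k,k-1)$ void rule, and the construction is polynomial for fixed $k$, or even for $k$ given in unary.

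The correctness argument is a step-by-step simulation in both directions. No reaction deletes $Z$, because every reaction has $Z$ on both sides with equal multiplicity. Hence $Z$ stays at exactly $k-2$ copies throughout any execution. A padded reaction is then applicable in a configuration $\config{C}+(k-2)\single{Z}$ exactly when the original reaction is applicable in $\config{C}$, since $Z$ is not an inhibitor of any reaction. Its effect on the non-$Z$ species is also identical. So $\config{C_s}+(k-2)\single{Z}\reaches \config{C_t}+(k-2)\single{Z}$ in the new iCRN iff $\config{C_s}\reaches\config{C_t}$ in the $(2,1)$ iCRN. By Theorem \ref{thm:(2, 1) icrn}, the latter holds iff $\Phi$ is satisfiable.

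The main subtlety is the case $k=1$, that is $(1,0)$ rules. That case is in P by Observation \ref{obs:(1,0) iCRN}, so the theorem must be read for $k\ge 2$, and I would state this restriction explicitly. A second point to check is that the problem's definition requires the rule size to be exactly $(k,k-1)$ rather than at most. Padding handles this, because every reaction has exactly $k$ reactants and $k-1$ products.
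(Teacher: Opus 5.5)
Your proposal is correct and matches the paper's proof: the paper also adds a fresh catalyst species $d$ with $k-2$ copies on both sides of every reaction of the $(2,1)$ 3SAT construction and in both $\config{C_s}$ and $\config{C_t}$, then argues that $d$'s count never changes. Your explicit restriction to $k\ge 2$ (since $(1,0)$ is in P) is a sensible clarification that the paper leaves implicit.
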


\begin{proof}
    Follows from Theorem \ref{thm:(2, 1) icrn}. Given an iCRN $\crn{C_{IC}}$, we modify it by adding a species $d$ and modifying each reaction to include $k-2$ copies of $d$ for both reactants and products (i.e., $T_x+F_x+d\rightarrow T_x+d$ for $k=3$). We include $k-2$ copies of $d$ for $\config{C_s}$ and $\config{C_t}$. Since $d$ acts as a catalyst species for all reactions, its count will never change, and it does not affect the behavior of $\crn{C_{IC}}$. Thus, the forward and reverse directions from Lemma \ref{thm:(2, 1) icrn} follow here.
\end{proof}

\begin{theorem}\label{thm:(k,k-1)-fpt-icrn}
Given an $iCRN~\crn{C_{CI}} = ((\species, \reactions), \mathcal{I})$ with only void rules of size $(k, k-1)$, the reachability problem is decidable in  $\mathcal{O}{(c! \cdot |\species|^2|\reactions|)}$ time, where $c$ is the number of inhibitors.
\end{theorem}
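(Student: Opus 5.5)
The plan mirrors the FPT argument for $(2,0)$ rules in Lemma~\ref{lem:(2,0)-O(1)-iCRN} and Theorem~\ref{thm:(2,0)-fpt-icrn}. We guess the order in which the inhibiting species reach their final relevant state, turn each guess into a Priority iCRN via Lemma~\ref{lem:iCRN-piCRN}, and solve each resulting instance with the polynomial-time algorithm of Theorem~\ref{thm:(k,k-1) PiCRN}. There are at most $c!$ orderings, and each call costs $\mathcal{O}(|\species|^2|\reactions|)$. The construction of Lemma~\ref{lem:iCRN-piCRN} costs $\mathcal{O}(|\species||\reactions|)$ and is dominated by that. Together these give the claimed $\mathcal{O}(c!\cdot|\species|^2|\reactions|)$ bound.

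\textbf{Preprocessing.} We first remove every reaction inhibited by a species $\lambda$ with $\config{C_t}[\lambda]>0$. Since rules are void, such a $\lambda$ is present throughout every run, so these reactions can never fire; this is the analogue of Lemma~\ref{lem:iCRN-empty2}. The inhibitors that remain must go to zero by the end. We do not reduce to empty reachability here, because $(k,k-1)$ rules keep catalysts. Instead we keep $\config{C_t}$ as the target, which is what the dynamic program of Theorem~\ref{thm:(k,k-1) PiCRN} handles.

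\textbf{Enumeration and correctness.} For each ordering $\langle i_1,\dots,i_{c'}\rangle$ of the remaining $c'\le c$ inhibitors, we build the Priority iCRN of Lemma~\ref{lem:iCRN-piCRN}. Inhibitors come first in the given order, each reaction gets priority equal to the maximum index of its inhibitors, and self-inhibiting reactions are dropped. We then run the algorithm of Theorem~\ref{thm:(k,k-1) PiCRN} from $\config{C_s}$ to $\config{C_t}$ and accept if any ordering succeeds.

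For completeness, take a successful iCRN run. Order the inhibitors by the time each reaches zero; ties and never-zeroed species do not arise after preprocessing. Every reaction fired in the run had all its inhibitors already at zero. Because rules are void, every earlier inhibitor in this order is also zero at that moment. So the same sequence is legal in the Priority iCRN for this ordering.

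For soundness, consider a run of the Priority iCRN. Whenever a reaction fires, all species of index up to its priority are zero, and this includes all of its actual inhibitors. So the run is also a legal iCRN run.

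\textbf{Main obstacle.} The delicate step is the soundness direction. The Priority iCRN may demand more than the iCRN does: it also requires that earlier inhibitors which do not inhibit this reaction be zero. This extra demand is harmless only if some ordering matches a real run, which completeness guarantees. The real concern is a second one: Theorem~\ref{thm:(k,k-1) PiCRN} must correctly handle Priority iCRNs whose inhibiting species are also catalysts in other reactions. I would check this by arguing that Lemma~\ref{lem:rearrange} applies to a fixed ordering. That lets us assume each species is reduced by a single contiguous block of one rule. This is exactly the structure the dynamic program searches over, and the priority constraint simply forces $\lambda_1,\dots,\lambda_{c'}$ to be finalized in reverse index order, as enforced by the counter $l$ in that algorithm.
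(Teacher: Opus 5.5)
Your proposal is correct and follows the paper's own proof. You enumerate the at most $c!$ orderings of the inhibitors and run the Priority iCRN algorithm of Theorem~\ref{thm:(k,k-1) PiCRN} on each, and your preprocessing step and explicit soundness/completeness arguments supply details the paper's brief proof leaves implicit. One small slip: ties in the zeroing order do occur among inhibitors already absent from $\config{C_s}$, but breaking them arbitrarily leaves your completeness argument intact.
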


\begin{proof}
    This follows by using the algorithm for Priority iCRN void systems with size $(k, k-1)$ rules from Theorem \ref{thm:(k,k-1) PiCRN} as a subroutine. Denote this algorithm $g$. Consider an arbitrary ordering of the $c$ inhibitors $\lambda_1, \ldots, \lambda_c$. Run algorithm $g$ with this ordering. If a solution is found, then reachability is possible. Otherwise, consider another ordering until all are considered. The algorithm then runs in $\mathcal{O}(c! \cdot |\species|^2|\reactions|)$ time.
\end{proof}

We show that reachability is NP-complete with $(2,0)$ and $(2,1)$ reactions, even if only one species is used to inhibit reactions.

\begin{theorem} \label{thm:(2, 0) + (2, 1) icrns}
    The reachability problem in (2,0) and (2,1) void iCRNs is NP-complete, even with a single inhibitor species.
\end{theorem}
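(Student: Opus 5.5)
Membership in NP is immediate from Lemma~\ref{thm:NP-voidiCRN}, since all rules are void. For hardness I will reuse the Vertex Cover reduction behind Theorem~\ref{thm:(2, 0) + (2, 1) picrns} and turn its priorities into ordinary inhibitors. That reduction already uses maximum priority $1$, and its species order puts $R$ first. So every priority-$1$ reaction there is simply a reaction inhibited by the single species $R$, and the Priority iCRN is literally an iCRN with inhibitor map $\mathcal{I}(\reaction)=\{R\}$ for the clean-up and covering reactions and $\mathcal{I}(\reaction)=\emptyset$ for the assignment reactions. Concretely, given $\langle G,k\rangle$, the instance is:
\begin{itemize}
\item one copy of each vertex species $v$ and each edge species $e$, together with $|V|-k$ copies of $R$ and $k$ copies of $X$;
\item the uninhibited assignment reactions $v+R\rightarrow\emptyset$;
\item the $R$-inhibited clean-up reactions $v+X\rightarrow\emptyset$;
\item the $R$-inhibited covering reactions $v+e\rightarrow v$ for each $v$ incident to $e$;
\item target $\emptyConfig$.
\end{itemize}
This uses only $(2,0)$ and $(2,1)$ rules and one inhibitor species, and it is clearly polynomial in size.

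For the forward direction, let $S$ be a vertex cover of size exactly $k$; if the cover is smaller, pad it with arbitrary vertices, which is possible when $k\le|V|$, and treat $k>|V|$ as trivial. Fire $v+R\rightarrow\emptyset$ for every $v\notin S$, which uses up all $|V|-k$ copies of $R$. Now $R$ is absent, so the inhibited reactions become enabled. Each edge has an endpoint in $S$ that is still present, so $v+e\rightarrow v$ deletes every edge species. Finally $v+X\rightarrow\emptyset$ pairs off the $k$ vertices of $S$ with the $k$ copies of $X$, leaving $\emptyConfig$.

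For the converse, the key invariant is that no covering or clean-up reaction can fire while any copy of $R$ remains. The only way to remove $R$ is through assignment reactions, each of which deletes one distinct vertex species. Therefore, at the first moment $R$ reaches zero, exactly $|V|-k$ vertices have been deleted and none of the edge species has been touched. After this point edges can only be removed catalytically by a surviving vertex adjacent to them. Reaching $\emptyConfig$ therefore forces every edge to have an endpoint among the $k$ surviving vertices, and these survivors form a vertex cover of size $k$.

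The step needing the most care is confirming that the translation from priorities to inhibitors is exact, that is, that priority $1$ under the ordering $\langle R,\dots\rangle$ means precisely ``inhibited by $R$''. A second point to check is that nothing can remove edge species or $X$ early. The only reactions consuming $X$ or $e$ are $R$-inhibited, so this holds, and it is exactly what the invariant above uses.
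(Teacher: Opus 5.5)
Your proposal is correct and matches the paper's proof. The paper also reuses the Vertex Cover reduction from the Priority iCRN theorem, keeps the same initial and target configurations, and replaces priority $1$ with inhibition by the single species $R$ on the clean-up and covering reactions. Your explicit forward and reverse arguments, including padding the cover to size exactly $k$ and the invariant that nothing but assignment reactions fires while $R$ is present, fill in details the paper leaves implicit.
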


\begin{proof}
From Lemma \ref{thm:NP-voidiCRN}, we have that the reachability for iCRNs with purely void rules is in NP. We now show hardness with a similar reduction as in Theorem \ref{thm:(2, 0) + (2, 1) picrns}. We construct our iCRN $\crn{C_{IC}}$ in a similar manner, where our species set is now unordered. Here, the species set $\species = \{R,v_1,\ldots,v_{\size{V}},e_1,\ldots,e_{\size{E}}, X\}$. Our reaction set contains \emph{assignment reactions} $v + R \rightarrow \phi$, \emph{clean-up reactions} $v + X \xrightarrow{R} \phi$, and \emph{covering reactions} $v + e \xrightarrow{R} v$. The initial and target configurations remain the same. 
\end{proof}

\subsection{iCRNs with General rules}
\begin{theorem} \label{thm:iCRN (1,1) Pspace-hardness}
The reachability problem in iCRNs with $(1,1)$ rules is PSPACE-complete.
\end{theorem}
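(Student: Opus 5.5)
Membership is the easy half and follows the pattern of Lemma~\ref{lem:pspace}. A $(1,1)$ reaction changes one copy of one species into one copy of another, so the volume stays at $|\config{C_s}|$ throughout. Every reachable configuration therefore fits in $O(|\species|\log|\config{C_s}|)$ bits. A nondeterministic machine can guess a path from $\config{C_s}$ to $\config{C_t}$ one step at a time, storing only the current configuration. At each step it checks the reactant count and checks that every inhibitor in $\mathcal{I}(\reaction)$ has count zero. This puts the problem in NPSPACE, which equals PSPACE by Savitch's theorem \cite{SAVITCH1970177}.

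For hardness I would reduce from a PSPACE-complete reachability problem over a fixed set of bits. The natural choices are acceptance of a linear-bounded automaton, or reachability in a nondeterministic constraint logic or Boolean-formula reconfiguration instance. The key observation is that a $(1,1)$ reaction with inhibitors can test absence but not presence (beyond its own reactant). So each Boolean variable $x_i$ will be encoded by exactly one token that sits in either species $T_i$ or species $F_i$. Inhibition by $F_j$ then tests that $x_j$ is true, and inhibition by $T_j$ tests that $x_j$ is false. A reaction $T_i \rightarrow F_i$ inhibited by a set $S$ can therefore flip $x_i$ exactly when an arbitrary conjunction of literals over the other variables holds. Since the volume is fixed at one token per variable, the whole system is a $(1,1)$ iCRN.

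Concretely, I would simulate a deterministic or nondeterministic LBA with tape length $n$, state set $Q$ and alphabet $\Sigma$. Each tape cell gets one token over species $\sigma_{p}$ for $\sigma\in\Sigma$, giving the symbol at position $p$. There is also one token for the head position and state, over species $h_{p,q}$. A single transition of the LBA changes both a tape cell and the head/state token, but a $(1,1)$ rule moves only one token. So each transition must be split into two or three unimolecular steps. Intermediate ``pending'' species would record which transition is in progress, for example $h_{p,q}\rightarrow h'_{p,\delta}$ inhibited by every $\sigma'_p$ with $\sigma'\neq\sigma$ (which checks that the cell holds $\sigma$). That would be followed by $\sigma_p\rightarrow\tau_p$ inhibited by every species other than $h'_{p,\delta}$ for that position, and then $h'_{p,\delta}\rightarrow h_{p\pm1,q'}$ inhibited by $\sigma_p$. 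The target configuration is a fixed accepting configuration, made unique by first letting the machine erase its tape before accepting.

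The main obstacle is making sure the split steps cannot interleave or be partially undone to create behaviours the LBA does not have. I would handle this with inhibition in three ways. While any pending species $h'$ is present, all other head moves are blocked by listing every $h'$ species as an inhibitor. The cell rewrite is only enabled by the matching pending species, which is enforced by inhibiting it with all other head species at $p$. The completion step requires the rewrite to have happened. Since the number of species is polynomial, these inhibitor sets have polynomial size. Then I would check the correspondence: every run of the iCRN projects to a valid LBA run, by induction on the number of completed transitions, and every LBA run lifts to an iCRN run. This establishes PSPACE-hardness and hence Theorem~\ref{thm:iCRN (1,1) Pspace-hardness}.
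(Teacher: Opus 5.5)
Your membership argument is the paper's (volume preservation plus Savitch). Your hardness plan is also the paper's: simulate a bounded-tape Turing machine by splitting each transition into a chain of unimolecular steps whose order is enforced by inhibitors, testing ``the unique token is in species $s$'' by inhibiting with all alternatives to $s$. Only the encoding differs.

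The paper keeps the state in its own token $Q_y$ and marks the head position by the unique \emph{absent} $x_j$. It routes each transition through a helper token $D$ that becomes a read token $Q_yV_i$. It then recycles tokens ($V_i\to x_i$, then $x_{i+b}\to c_i$) so that moving and writing happen in one reaction. It gets a unique target by a clean-up phase (reactions 1.3, 7.1, 7.2) that turns every token into $E$ once $q_a$ is reached. Your single head/state token with per-transition pending species is easier to verify: the invariant of one token per cell and one head token makes the projection and lifting inductions routine. The price is preprocessing the LBA so that it erases its tape before accepting.

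Two details need fixing as written.

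First, the completion step $h'_{p,\delta}\to h_{p\pm1,q'}$ inhibited by $\sigma_p$ is blocked forever whenever the transition rewrites the symbol it read ($\tau=\sigma$). Inhibit it instead by $\{\sigma'_p:\sigma'\neq\tau\}$, which checks in both cases that cell $p$ now holds $\tau$.

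Second, the rewrite $\sigma_p\to\tau_p$ must be inhibited by \emph{every} head species other than $h'_{p,\delta}$, at all positions and whether pending or not. If only the head species at $p$ are listed, as your last paragraph says, then while the head token sits at some $p'\neq p$ none of the inhibitors is present, and cell $p$ can be rewritten at will. With the full set, which is still polynomial, the rewrite is enabled exactly when the head token is $h'_{p,\delta}$. Your extra inhibition of head moves by all $h'$ species then becomes redundant, since there is only one head token.
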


\begin{proof}
PSPACE membership follows from Lemma \ref{lem:pspace}. We now show hardness. Given a Turing machine $\mathcal{M}= (Q,\Sigma,\Gamma,\delta,q_s,q_a,q_r)$ with bounded tape length $n$, we create an iCRN $\crn{C_{IC}} = (\Lambda,\Gamma_C)$. The set $\Lambda$ consists of the helper species $D$, $H$, and $E$, the tape symbols $\{x_i, 0_i, 1_i\}$ for each tape cell $i \in [1,n]$, and the following sets of species: $\mathcal{Q} = \{Q_y | q_y \in Q\}$, $\mathcal{Q}^0 = \{\, Q_y0_i \,\}_{q_y \in Q,\; i \in [1,n]}$ and $\mathcal{Q}^1 = \{\, Q_y1_i \,\}_{q_y \in Q,\; i \in [1,n]}$. Table \ref{tab:?} shows the reactions created.

The initial configuration $I$ of the system consists of one instance of $0_i$ or $1_i$ for each tape cell $i$, according to the initial tape of $\mathcal{M}$. Given the start state $q_s$, we also add a single copy of $Q_s$. If the tape head starts at cell $i$, we include one instance of each species $x_j$ for all $j \neq i$, i.e., $\{x_j\}_{j \neq i}$. Finally, we include one copy each of the helper species $D$ and $H$. The target configuration $T$ is volume $I$  copies of $E$.

\begin{table}[t]
\centering
        \begin{tabular}{|l|l|l|l|}\hline
            \textbf{No.} & \textbf{Notation} & \textbf{Reaction} & \textbf{Intuition} \\ \hline
            1.1 & $\mathcal{Q}_H: {Q_a, Q_r}$
            & $D\xrightarrow{\mathcal{Q}_H,\overline{\mathcal{Q}_y}, x_i, 0_i}Q_y1_i$
            & Read the tape can\\
    
            1.2 & $\overline{\mathcal{Q}_y}: \mathcal{Q}\setminus Q_y$
            & $D\xrightarrow{\mathcal{Q}_H,\overline{\mathcal{Q}_y}, x_i,1_i}Q_y0_i$ 
            & combine with state,  \\
    
            1.3 & $\overline{\mathcal{Q}_H} : \mathcal{Q} \setminus \mathcal{Q}_H$
            & $D\xrightarrow{\overline{\mathcal{Q}_H}}E$ 
            & unless state is halt. \\ \hline
    
            \multirow{3}{*}{2} & \multirow{2}{*}{$\overline{\mathcal{Q}^V}_{\!\{{y,i}\}}: \{\mathcal{Q}^0\cup \mathcal{Q}^1\} \setminus Q_yV_i$} 
            & \multirow{3}{*}{$Q_y\xrightarrow{D,E,\overline{\mathcal{Q}^V}_{\!\{{y,i\}}}}Q_{zi}$}
            & Prepare to change \\

            & \multirow{2}{*}{$z: \delta_{state}(y,V)$}
            & 
            & into the appropriate  \\ 

            & 
            & 
            &  $z$ according to $\delta$. \\ \hline
    
            3.1 &  
            & $1_i\xrightarrow{D,E,\mathcal{Q},x_i} x_i$ 
            & Mark read as done by  \\
    
            3.2 &  
            & $0_i\xrightarrow{D,E,\mathcal{Q}, x_i}x_i$ 
            &  producing $x_i$. \\ \hline
    
            \multirow{2}{*}{4}  &  
            & \multirow{2}{*}{$Q_{zi}\xrightarrow{1_i,0_i}Q_z$} 
            & Finish changing state \\ 

            &&& after producing $x_i$. \\ \hline
    
            \multirow{2}{*}{5} & $b:\delta_{dir}(y,V)$
            &   \multirow{2}{*}{$x_{i+b}\xrightarrow{E,1_i,0_i,\overline{\mathcal{Q}^V}_{\{y,i\}},Q_{zi}}{c_i}$} 
            & Move+write based on \\  
            
            &  $c:\delta_{write}(y,V)$
            & 
            & state and value read $V$. \\ \hline
    
            6 &
            & $Q_yV_i\xrightarrow{x_{i+b}}D$ 
            & Once moved, replace $D$. \\ \hline
    
            7.1 &  ${\mathcal{Q'}} : \{\mathcal{Q}^1 \cup \mathcal{Q}^0\}$
            & $H\xrightarrow{D,{\mathcal{Q}'}} E$ 
            & If $\mathcal{M}$ halts, delete $H$, \\ 
    
            7.2 & $S: \Lambda \setminus \{E,Q_r\}$ 
            & $S\xrightarrow{H}E$ 
            & and go to term config. \\ \hline
        \end{tabular}
    
    \caption{Reaction set for the $(1,1)$ iCRN construction.}\label{tab:?}
    \vspace{-.5cm}
\end{table}

\para{Forward Direction:} Assume we are given a Turing Machine that will reach a halt state by applying a series of transitions from $\delta$, each according to the tape value at head and the state. The corresponding iCRN $\crn{C_{IC}}$ will likewise reach target configuration $T$ by applying a series of reactions representing those transitions, according to the current configuration $C$. When a single $D$ is present in $C$, $C$ maps to the Turing Machine. $Q_y$ indicates $\mathcal{M}$ is in state $q_y\in Q$. The Tape head is at tape cell $j$ where $x_j$ is absent. The data for tape cell$_i$ is saved in the presence of a copy of $1_i$, or one of $0_i$. All reactions shown in Table \ref{tab:?} except 1.3, 7.1 and 7.2 handle normal transitions, updating the given configuration to map to the result of the next transition. In this order, the reactions: (1) read the tape, (2) determine the next state, (3) mark the tape head ready to move, (4) change the state, (5) write and move the tape head, and (6) indicate ready for next transition. If the current state is a halt state, reaction 1.3 is applied, than (7) is repeatedly applied until we reach a terminal configuration. The final terminal configuration will only be $T$ if the TM's tape and state input deterministically transitions to $q_a$ according to $\delta$. 

\para{Reverse Direction:} 

To reach the target configuration $T$ of iCRN $\crn{C_{IC}}$ which was designed to simulate bounded length $N$ TM $A$, reaction 7.2 must run $2N$ times. This can only happen if $Q_a$ was present, preventing reactions 1.1 and 1.2, and not being $Q_r$, which would also prevent the reactions, but could not be removed by 7.2. For $Q_a$ to be present, it either was in $I$, or was produced by a type 4 reaction. To have $Q_a$ produced by a type 4 reaction, the most recent type 2 reaction must have produced $Q_{ai}$. For any reaction of type 2 to be applicable, D cannot be present, so before this reaction could occur, a reaction of type 1.1 or 1.2 must have been run. It produced a copy of $Q_yV_i$ where $Q_y$ represents the state the system was in before $Q_a$, and $V$ is the symbol at our tape head. $Q_yV_i$ prevents all type 2 reactions except $\delta_{state}(y,V)$, so we could only have reached $Q_a$ if $\delta_state(y,V)$ = $q_a$. For our configuration to have included a $1_i$ or $0_i$ and $Q_y$ which would lead to $Q_a$, they either must have been in $I$, or were produced by repetitions of the same process. This process also includes one reaction of type 3 and one of type 5, which together update the tape. Thus we could only have reached the target configuration if the initial tape values and State deterministically Halt in an accept state.  
\end{proof}

\begin{theorem}
The reachability problem in iCRN with $(1,1)$ rules and at most 1 inhibitor per rule is NP-Hard.    
\end{theorem}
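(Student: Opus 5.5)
The plan is to reduce from 3SAT, reusing the general shape of the $(2,1)$ reduction in Theorem~\ref{thm:(2, 1) icrn}. Because rules are now $(1,1)$, every reaction simply converts one species into another. Truth values therefore have to be recorded by which species are \emph{absent}, since absence is the only thing a single inhibitor can test. The reduction is clearly polynomial, and the whole difficulty is making the encoding consistent.

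\textbf{Variable gadget.} For each variable $x_i$, the initial configuration $\config{C_s}$ holds one copy each of $P_i$ and $N_i$. The rules are
\[
N_i \xrightarrow{Z_i} Z_i \qquad\text{and}\qquad P_i \xrightarrow{Z_i} Z_i .
\]
Since each rule is inhibited by its own product $Z_i$, whichever conversion fires first blocks the other. Thus exactly one of $P_i, N_i$ is ever removed in this phase:
\begin{itemize}
\item removing $N_i$ encodes $x_i=\mathrm{true}$;
\item removing $P_i$ encodes $x_i=\mathrm{false}$.
\end{itemize}

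\textbf{Clause gadget.} For each clause $c_j$, $\config{C_s}$ holds one copy of $c_j$. For every literal $\ell$ of $c_j$ there is a single rule. If $\ell = x_k$, the rule is $c_j \xrightarrow{N_k} s_j$. If $\ell=\neg x_k$, the rule is $c_j \xrightarrow{P_k} s_j$. So $c_j$ can become $s_j$ only once a literal of $c_j$ is made true by the chosen assignment.

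\textbf{Cleanup and target.} Each $(1,1)$ rule preserves volume, so the target $\config{C_t}$ must be a fixed configuration of volume $|\config{C_s}|$. I would use $\config{C_t} = \{Z_1,\dots,Z_n, G_1,\dots,G_n, s_1,\dots,s_m\}$. Reaching it requires converting each surviving literal species ($P_i$ or $N_i$) into a common species $G_i$, and this is the main obstacle. If $P_k \to G_k$ fires before the clauses are finished, then $P_k$ becomes absent. That enables the $\neg x_k$ clause rules even though $x_k$ was set true, so a clause could be ``satisfied'' by both polarities of one variable.

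The cleanup must therefore be gated on \emph{all} clauses being done, using only one inhibitor per rule. My first attempt is a single walking token. Start with one $M_0$, and add rules $M_{j-1}\xrightarrow{c_j} M_j$ for $j=1,\dots,m$. Since no $c_j$ is ever recreated, the token reaches $M_m$ only after every clause has become $s_j$. Then route the cleanup through the token's final stage. Two options I would try are:
\begin{itemize}
\item make each survivor conversion pass through an intermediate that is inhibited by a species produced only from $M_m$;
\item have $M_m$ trigger a second token walk over the variables, $M_m=W_0$, $W_{i-1}\xrightarrow{?}W_i$, so that $P_i,N_i\to G_i$ is inhibited by a species present until $W$ arrives.
\end{itemize}
In either case the target would also include the final token state $W_n$ (in place of, or in addition to, the components listed above), so the volumes still match.

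The delicate point is that a single-inhibitor rule can test only absence, never presence. So ``the token has arrived'' must be expressed as the absence of one designated blocker species. I would make that blocker a species $B$ present from the start, with the rule $B \xrightarrow{M_{m-1}} G'$. Since only the token occupies the chain, the intent is that $B$ is removed only once the token has advanced past stage $m-1$. Every cleanup rule is then inhibited by $B$. I would verify this gating carefully, since it carries the whole correctness argument.

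\textbf{Correctness.} Given a satisfying assignment, the forward direction runs the phases in order: variable choices, then clause conversions, then the token walk, then cleanup. For the reverse direction, argue that no cleanup can precede the completion of the clause phase. Then every $s_j$ was produced while exactly one of $P_k,N_k$ was absent for each $k$, and that absence pattern defines a consistent satisfying assignment.
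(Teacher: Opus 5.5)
Your variable and clause gadgets are sound: an unassigned variable keeps both $P_i$ and $N_i$ present, and $Z_i$ lets at most one of them ever be deleted. The gap is the cleanup gate, which you rightly say carries the whole reverse direction. As proposed, it fails.

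The rule $B \xrightarrow{M_{m-1}} G'$ only needs $M_{m-1}$ to be \emph{absent}. For $m \ge 2$ that already holds in $\config{C_s}$, where the token sits at $M_0$. So the following run is legal for any formula:
\begin{enumerate}
\item delete $B$ first;
\item fire $N_i \xrightarrow{Z_i} Z_i$ and then the cleanup $P_i \to G_i$ for every $i$, so all $P_i$ and $N_i$ are absent;
\item convert every $c_j$ to $s_j$ through an arbitrary literal;
\item only then walk the token to $M_m$.
\end{enumerate}
This reaches exactly the final configuration of your honest forward run, so unsatisfiable instances map to yes-instances.

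Your other options do not repair this. Any cleanup route for $P_k$ deletes $P_k$ at its first step, so gating a later intermediate step is useless. A step inhibited by a species produced from $M_m$ is blocked after the token arrives, not before. In the $W$-walk, ``a species present until $W$ arrives'' needs its own removal rule, which faces the same problem. The root issue is that a single-inhibitor $(1,1)$ rule consuming $B$ can only detect the absence of one species. The absence of a token stage holds both before the token reaches it and after it leaves, so a walking token does not yield such a blocker.

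The missing idea is to make the cleanup \emph{monotone}: it should only ever add inhibitors, so no gate is needed. The paper does exactly this, with the encoding reversed. $x_i$ is converted into $T_i$ or $F_i$, clause checks test the absence of the opposite species, and a spare copy $y_i$ is later converted into the opposite species, which can only block clause rules. The paper's one gate, $c^k_j \xrightarrow{x_n} d^k_j$ with the chain $x_i \xrightarrow{x_{i-1}} T_i/F_i$, merely forces all variables to be assigned before any clause check. Your deletion encoding does not need that.

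The same trick repairs your construction. Drop the token, $B$ and the $G_i$. Add a spare $Y_i$ with rules $Y_i \to P_i$ and $Y_i \to N_i$, and take the target $\{P_i, N_i, Z_i\}_{i} \cup \{s_j\}_{j}$. Since $Z_i$ is produced at most once and never consumed, at most one of $P_i, N_i$ is ever absent. The spare must restore exactly the deleted one, and restoring only re-blocks clause rules. Hence reaching the target forces a consistent assignment that satisfies every clause.
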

\begin{proof}
We show this by a reduction from 3SAT.  Consider a 3SAT instance of $n$ variables $X_1, \ldots X_n$ and $m$ clauses $C_1,\ldots C_m$, and assume no clause contains both $x_i$ and $\overline{x_i}$. From this instance, we create the following $(1,1)$ iCRN system and start and end configurations for the Reachability problem.

For each variable $x_i\in X_n$, we create the species $y_i$, $x_i$, $T_i$, and $F_i$. For each clause $c_j\in C_m$, we create the species $c^1_j$,$c^2_j$,$c^3_j$, $d^1_j$,$d^2_j$,$d^3_j$, and $S_j$.
Additionally, for each variable, we create the reactions $y_i \rightarrow T_i$, $y_i \rightarrow F_i$, $x_i \xrightarrow{x_{i-1}} T_i$ (no inhibitor for $i=1$), and $x_i \xrightarrow{x_{i-1}} F_i$ (no inhibitor for $i=1$). For each clause, we create the reactions $c^k_j \xrightarrow{x_n} d^k_j$, either $d^k_j \xrightarrow{T_i} S_j$ if $\overline{x_i}$ satisfies the $k^{th}$ literal of clause $C_j$ or $d^k_j \xrightarrow{F_i} S_j$ if $x_i$ satisfies the $k^{th}$ literal of clause $C_j$, and $d^k_j \rightarrow W$.
Finally, let the initial configuration be one copy for each of $y_i$, $x_i$, and $c^1_j$, $c^2_j$, $c^3_j$, and the target configuration be one copy for each of $T_i$, $F_i$, $S_j$, and $2m$ copies of $W$.

\para{Forward Direction:} Suppose the 3SAT formula is satisfiable.  We can then reach the destination configuration by first transitioning each $x_i$ species into the appropriate true/false species $T_i$/$F_i$ that matches the satisfying assignment for the 3SAT formula.  Next, convert exactly one of the three $c_j$ species from each clause into $S_j$ (based on which literal was satisfied by the satisfying variable assignment), and the remaining two into copies of species $W$.  Finally, convert each $y_i$ into the opposite truth value taken on by $x_i$ (to ensure there is exactly one $T_i$ and $F_i$ for each $i$).

\para{Reverse Direction:}  Suppose we can reach the destination configuration, which specifically requires creating each species $S_j$.  Each $S_j$ can only be created by the absence of some $F_i$ where the clause contains literal $X_i$, or the absence of a $T_i$ where the clause contains $\bar{X_i}$.  Since the $c_j$ to $d_j$ transformation must precede the creation of $S_j$, we know that each $x_i$ must be converted to a $T_i$ or $F_i$ before the $S_j$'s can be created.  Therefore, the only possible way to create each $S_j$ is for each $x_i$ to select a truth assignment that corresponds to a satisfying assignment of the given 3SAT formula.
\end{proof}

\begin{observation}\label{obs:(2,2)-icrn}\vspace{-.2cm}
    The reachability problem for iCRNs with rule size $(2,2)$ is PSPACE-complete.
\end{observation}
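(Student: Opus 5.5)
The proof splits into membership and hardness, and both mirror Observation~\ref{obs:(2,2)-Picrn}.

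For membership, I would invoke Lemma~\ref{lem:pspace}, which is stated for (Priority) iCRNs with $(2,2)$ rules. Every $(2,2)$ reaction is volume-preserving, so every configuration reachable from $\config{C_s}$ has volume $\size{\config{C_s}}$. Such a configuration can be written in space polynomial in $\size{\species}$ and $\log\size{\config{C_s}}$. A nondeterministic machine guesses the run one step at a time. At each step it guesses a reaction $\reaction$ and checks three things in polynomial space: $\reactants\le\config{A}$, $\config{A}[\lambda]=0$ for every $\lambda\in\mathcal{I}(\reaction)$, and whether the new configuration equals $\config{C_t}$. It keeps only the current configuration and a step counter bounded by the number of configurations of that volume. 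By Savitch's theorem~\cite{SAVITCH1970177}, the problem is then in PSPACE. The inhibitor check is the only change from the plain CRN case.

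For hardness, I would reduce from reachability in ordinary CRNs with $(2,2)$ rules, which is PSPACE-complete~\cite{ERW:2019:PAI}. Given such a CRN $(\species,\reactions)$, I build the iCRN $((\species,\reactions),\mathcal{I})$ with $\mathcal{I}(\reaction)=\emptyset$ for every reaction and keep $\config{C_s}$ and $\config{C_t}$ unchanged. Under Definition~\ref{def:inhibitory-dynamics}, the inhibitor condition is vacuous when $\mathcal{I}(\reaction)=\emptyset$. So the step relation of the iCRN coincides with the basic CRN step relation, and reachability is preserved in both directions. The reduction is the identity map plus an empty inhibitor mapping, so it runs in polynomial time.

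Alternatively, hardness follows from Theorem~\ref{thm:iCRN (1,1) Pspace-hardness} by padding each $(1,1)$ rule with a catalyst species $d$. This uses the same trick as in Theorem~\ref{thm:(k, k-1) icrns}: each rule $a\rightarrow b$ becomes $a+d\rightarrow b+d$ with the same inhibitors, and one copy of $d$ is added to both the initial and target configurations. This route would need a check that $d$ is never used as an inhibitor, which holds since $d$ is a new species. The direct inclusion is simpler, so I would use that. I expect no real obstacle here; the only point needing care is stating membership for unrestricted inhibitor sets, and Lemma~\ref{lem:pspace} already covers that.
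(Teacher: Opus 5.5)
Your proposal is correct and follows the paper's argument. The paper proves this by analogy with Observation~\ref{obs:(2,2)-Picrn}: membership comes from Lemma~\ref{lem:pspace}, and hardness is inherited from PSPACE-completeness of $(2,2)$ reachability in ordinary CRNs~\cite{ERW:2019:PAI}, which is exactly your empty-inhibitor embedding written out explicitly.
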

\begin{proof}
    The proof is analogous to the proof of Observation~\ref{obs:(2,2)-Picrn}.
\end{proof}

\section{Conclusion}\label{sec: conclusion}

This paper provides a thorough treatment of reachability in restricted CRNs with inhibition. Our results, summarized in Table 1, show a clear boundary between tractable / intractable problems for CRNs with prioritized and unprioritized inhibition.  For Priority iCRNs, the reachability problem remains solvable in polynomial time for most unimolecular, bimolecular, and mostly catalytic void rules, but becomes NP-complete for systems with a combination of $(2,0)$ and $(2,1)$ void rules.  For general iCRNs, almost every case becomes NP-complete (with the exception of $(1,0)$ rules), and we provide FPT algorithms for the $(2,0)$, $(2,1)$, and $(k,k-1)$ cases.  As for volume-preserving reactions, we show that reachability in systems of $(1,1)$ rules---which is solvable in polynomial time for basic CRNs---drastically increases to PSPACE-complete for general iCRNs.

We conclude by outlining some natural extensions and future directions. First, a number of complexity gaps remain in our classification of the reachability problem in inhibitory systems. These gaps, posed as open problems in Appendix~\ref{subsec:gaps}, represent the remaining questions needed to complete the problem landscape in Table~\ref{tab:results}.

Second, we identify an intriguing line of work that focuses on extending CRNs with a set of global states. A \emph{CRN with States} can be viewed as a structured form of inhibition, where certain reactions are disabled based on the current global state. We formalize this model in Appendix~\ref{subsec:crnss}, and show that in this model the reachability problem is NP-hard even for the smallest deletion-only rules, and PSPACE-complete for the smallest volume-preserving rules.



\newpage
\bibliography{crns}
\newpage
\appendix
\section{Proofs for NP-hardness in Void Priority-iCRNs}\label{sec:np-hard-picrn}
\picrnVC*

\begin{proof}
    From Lemma \ref{thm:NP-voidiCRN}, we have that the reachability problem for Priority iCRNs with purely void rules is in NP. We now show hardness for systems with rules of size $(2,0)$ and $(2,1)$. For this we reduce from Vertex Cover problem. Given an instance of VC $\langle G,k \rangle$ where $k>0$, we construct our Priority iCRN $\crn{C_{PI}}$ as follows. We also add two new species to the $\crn{C_{PI}}$ system. In the initial configuration, we add $|V|-k$ copies of a species $R$, used to eliminate $|V|-k$ vertices, and $k$ copies of a species $X$, used to remove the remaining $k$ vertices. Denote this configuration the starting configuration $\config{C_s}$. Then our ordered species set is given as
    \[
        \species = \langle R, v_1,\ldots,v_{\size{V}}, e_1,\ldots,e_{\size{E}, X}\rangle
    \]

    We now create the reactions. For all $v\in V$, we create the \emph{assignment reaction} $v+R\rightarrow\emptyset$ to create the vertex cover and the \emph{clean-up reaction} $v+X\xrightarrow{1}\emptyset$ to remove the remaining $v$ copies. For all edges $e\in E$, for all vertices $v\in V$ where $v$ is incident to $e$, we create the \emph{covering reaction} $v+e\xrightarrow{1}v$, which represents edge $e$ being covered by vertex $v$. Here each of the covering reactions have priority $1$.

    The target configuration is the empty configuration $\config{0}$.

    \para{Forward Direction.}
    Assume there exists a vertex cover of size $k$ in $G$. We can then apply a sequence of assignment reactions to $\config{C_s}$ that deletes each copy of $v$ whose respective vertex is \emph{not} in the vertex cover, as well as all $R$ copies. Denote this new configuration $\config{C_s}'$. With no copies of $R$ present in $\config{C_s}'$, the covering reactions can be fired now. Because the remaining $v$ copies represent the vertex cover, by the construction of $\crn{C_{PI}}$, all $e$ copies can be deleted by the covering reactions. Then the $k$ copies of $v$ and $X$ remaining can be deleted with the clean-up reactions, resulting in a final configuration of $\config{0}=\config{C_t}$
    
    \para{Reverse Direction.}
    Assume there exists a sequence of applicable reactions that transitions $\config{C_s}$ to $\config{C_t}$. The only method to delete all copies of $R$ in $\config{C_s}$ is to apply a sequence of assignment reactions that consumes all $R$ copies and $|V|-k$ copies of $v$. Denote the resulting configuration $\config{C_s}'$.
    
    Assume there exists a sequence of covering reactions that can delete all copies of $e$ from $\config{C_s}'$. By the construction of $\crn{C_{PI}}$, because each covering reaction corresponds to an edge of $G$, deleting all copies of $e$ implies that the remaining copies of $v$ correspond to a vertex cover of size $k$ in $G$. If no sequence of covering reactions exists that can delete all $e$ copies, then the empty configuration cannot be reached.
\end{proof}

\orderediCRN*
\begin{proof}
    Given an iCRN $\crn{C_{IC}}=((\species,\reactions),\mathcal{I})$, start configuration $\config{C_s}$ and an ordering on inhibitors $\langle i_1,\ldots,i_c\rangle$, we construct a Priority iCRN $\crn{C_{PI}} = ((\species', \reactions'), \mathcal{P})$ and start configuration $\config{C_s'}$ as follows.
    The ordered set of species in the Priority iCRN is $\species' = \langle i_1,\ldots,i_c,j_1,\ldots,j_{\size{\species}-c}\rangle$ where $\{j_1,\ldots ,j_{\size{\species}-c}\}=\species\setminus\{i_1,\ldots,i_c\}$. Thus, the inhibitor species occupy the first $c$ positions.

    We then define a mapping $index:\species'\rightarrow[1,\size{\species'}]$ where $index(\lambda) = k \iff \species'[k]=\lambda$. Each reaction $\reaction = (\reactants,\products)\in\reactions$ is transformed into $\reaction' = (\reactants',\products')\in\reactions'$ by permuting the coordinates of both $\reactants$ and $\products$ according to $index$, and the initial configuration vector $\config{C_s'}$ is obtained by rearranging $\config{C_s}$ in the same manner. Next, for each reaction $\reaction'\in\reactions'$, corresponding to $\reaction \in \reactions$, we define $\mathcal{P}(\reaction') = \max_{\lambda\in\mathcal{I}(\reaction)}index(\lambda)$. Finally, we remove any reactions that are self-inhibiting, i.e., the priority of the reaction is greater or equal to the index of at least one of the reactants.
\end{proof}

\section{Proofs for NP-hardness in Void iCRNs}\label{sec:np-hard-appendix}
We now show that the reachability problem for iCRNs with only $(2,0)$-size rules is NP-hard by a reduction from the Hamiltonian Path (HAMPATH) problem.
We can transform any instance of HAMPATH $\langle G, s,t \rangle$ into an iCRN $\crn{C_{IC}}$ in which each species represents a vertex in $G$ and each void reaction represents the path visiting these vertices. Then a Hamiltonian path from $s$ to $t$ in $G$ only exists iff the system can delete all but the $t_c$ copy. The complete construction of $\crn{C_{IC}}$ is described as follows:

\para{Species.}
For each vertex $v \in V$, we create the species $v$ and $v_c$. The presence of a copy of $v$ in a configuration of $\crn{C_{IC}}$ indicates its respective vertex has yet to be visited by the path, while $v_c$ can be thought of as a token from $v$ that the path must use to visit a new vertex connected from $v$. We also add an extra species $s'$ to start the path at vertex $s$.

\para{Reactions.}
For each directed edge $\{u,v\} \in E$, we create the \emph{choosing reaction} $u_c+v\xrightarrow{u}\emptyset$, which represents the path choosing a new vertex $v$ to visit from $u$. As species $u$ inhibits the reaction, it can only be applied if a previously-applied choosing reaction consumed $u$ (i.e. $s_c+u\xrightarrow{s}\emptyset$). Because each choosing reaction is initially inhibited by some $v$ species, we also create the \emph{starting reaction} $s'+s\rightarrow\emptyset$ to be the first reaction applied in the configuration, as well as ensure the path always starts at vertex $s$.

An example construction of $\crn{C_{IC}}$ is illustrated across Figures \ref{subfig:(2,0)-icrn-reduction-1} and \ref{subfig:(2,0)-icrn-reduction-2}.

\icrnhampath*

\begin{proof}
    Lemma \ref{thm:NP-voidiCRN} shows reachability in void iCRNs is in NP. To show hardness, we reduce from the Hamiltonian path problem. Given a HAMPATH instance $\langle G,s,t \rangle$, we transform it into an instance of reachability with an iCRN $\crn{C_{IC}}$, following the construction above, an initial configuration $\config{C_s}$ of a single copy of each species of $\crn{C_{IC}}$ (See Figure \ref{subfig:(2,0)-icrn-reduction-3}), and a target configuration $\config{C_t}$ of only one copy of $t_c$ (or $\single{t}_c$).

    \para{Forward Direction.}
    Assume there exists a Hamiltonian path from $s$ to $t$ in $G$. For $\crn{C_{IC}}$, the only applicable rule in $\config{C_s}$ is the starting reaction, which deletes the $s$ and $s'$ copy. Then, by following the edge sequence of the Hamiltonian path, we can apply the choosing reaction for each edge to delete the involved $u_c$ and $v$ copies. No choosing reaction $r$ is inhibited prior to application because the inhibitor species of $r$ must be consumed by the previous choosing reaction. The last choosing reaction leaves only one $t_c$ copy as the final configuration, which matches $\config{C_t}$.

    \para{Reverse Direction.}
    Assume there exists a sequence of applicable reactions from $\crn{C_{IC}}$ that transitions the initial configuration $\config{C_s}$ to $\config{C_t}=\single{t}_c$. First, the only applicable reaction in $\config{C_s}$ is the starting reaction, which deletes the $s$ and $s'$ copies and starts the path at $s$. Denote the resulting configuration $\config{C_s}'$. Removing the $s$ copy allows all choosing reactions inhibited by the $s$ species to be applied to $\config{C_s}'$, which corresponds to the path selecting a new vertex to visit from $s$. Because only one copy of $s_c$ is present, only one of these choosing reactions may be applied; thus, the path can only choose one vertex to visit at a time. This also makes it impossible for the path to backtrack to $s$ and select a new vertex to visit. After applying a choosing reaction and deleting the copy of $s_c$ and some species $u$, the path arrives at the new vertex $u$, which follows the same case regarding choices and backtracking as from vertex $s$.

    Assume a sequence of choosing reactions can be applied in $\config{C_s}'$ that deletes all copies except the $t_c$ copy. By the construction of $\crn{C_{IC}}$, each choosing reaction in the sequence must correspond to an edge in $G$ visited by the path. Because all but the $t_c$ species were removed, this implies the existence of a Hamiltonian path from $s$ to $t$ in $G$. If a vertex cannot be reached by a path, then its respective $v$ and $v_c$ copies will never be removed. If a Hamiltonian path exists, but $t$ is not the last vertex visited, then the $t_c$ copy will be deleted by a choosing reaction. Therefore, if $\config{C_t}$ can be reached from $\config{C_s}$, then a Hamiltonian path from $s$ to $t$ must exist in $G$.
\end{proof}

We now show that reachability in iCRN is still NP-complete even for $(2,1)$-size reactions. Here, we reduce from the classic 3SAT problem. We transform any 3SAT instance $\langle\Phi\rangle$ into a new iCRN $\crn{C_{IC}}$ where the variables and clauses are represented by species and variable assignments, and clause satisfactions are represented by reactions. The formula $\Phi$ can only be satisfied iff $\crn{C_{IC}}$ can consume all species copies except for one. The complete construction of $\crn{C_{IC}}$ is provided below.

\para{Species.} For each variable $x_i\in X_n$, we create the species $T_i$ and $F_i$, which represent an assignment of true or false to $x_i$, respectively. Additionally, for each clause $c_j\in C_m$, we create the species $c_j$. The presence of a copy of $c_j$ in a configuration indicates its respective clause is not yet satisfied by some assigned variable. Finally, we create the species $X$ for clean-up purposes.

\para{Reactions.} For each variable $x_i$, we create the \emph{assignment reactions} $T_i + F_i \rightarrow T_i$ and $T_i + F_i \rightarrow F_i$. The remaining copy represents assigning $x_i$ the respective truth value of that copy. For each clause $c_j$ and a variable of $c_j$ $x_k$, we create the clause reaction $C_i + T_k \xrightarrow{F_k} T_k$ if assigning true to $x_k$ satisfies $c_j$, or $C_i + F_k \xrightarrow{T_k} F_k$ if the false assignment satisfies instead. Finally, for each variable, we create the \emph{clean-up reaction} $T_i + X \rightarrow X$ and $F_i + X \rightarrow X$.

An example construction of $\crn{C_{IC}}$ is illustrated across Figures \ref{subfig:(2,1)-icrn-reduction-1} and \ref{subfig:(2,1)-icrn-reduction-2}.

\icrnsat*
\begin{proof}
From Lemma \ref{thm:NP-voidiCRN}, we have that the reachability problem in void iCRNs is in NP. To show hardness, we reduce from the 3-satisfiability problem. Given an instance of 3SAT $\langle\Phi\rangle$, we create an iCRN $\crn{C_{IC}}$ following the reduction above. Let the initial configuration $\config{C_s}$ be a single copy for each species, and the target configuration $\config{C_t}$ be only the copy of $X$.

\para{Forward Direction.}
Assume there exists an assignment of variables that satisfies $\Phi$. Then we can apply a sequence of assignment reactions to $\config{C_s}$ in which the remaining single copy of $T_i$ or $F_i$ for each variable $x_i$ matches its truth value in the satisfying assignment. Because the variable assignment satisfies $\Phi$, all $C_j$ copies can be deleted by clause reactions. Then the copy of $X$ consumes all remaining $T_i/F_i$ copies with the clean-up reactions, leaving the single copy of $X$ as the final configuration that matches $\config{C_t}$.

\para{Reverse Direction.}
Assume there exists a sequence of applicable reactions from $\crn{C_{IC}}$ that transitions the initial configuration $\config{C_s}$ to $\config{C_t}=\single{X}$. First, an assignment reaction is required to delete one of the single copies of $T_i$ and $F_i$ for each variable. Then a sequence of assignment reactions may be applied to $\config{C_s}$ that results in one single $T_i/F_i$ copy per variable. Denote this configuration $\config{C_s}'$. Although we can now immediately use the $X$ copy and the clean-up reactions to delete the remaining $T_i/F_i$ copies, this leaves the copies of $C_j$ present; thus, $\config{C_t}$ cannot be reached in this case.

Assume there exists a sequence of clause reactions that deletes all $C_j$ copies in $\config{C_s}'$. By the construction of $\crn{C_{IC}}$, because each clause reaction corresponds to a clause $c_j$ being satisfied by some assigned literal, the deletion of all $c_j$ species implies that the assignment represented by $\config{C_s}'$ satisfies $\Phi$. Note that a clause reaction for a specific variable $x_i$ can only be fired once the assignment reaction for that variable is fired (otherwise it is inhibited), ensuring that $\crn{C_{IC}}$ must assign variables first before satisfying clauses and preventing both of the $T_i$ and $F_i$ copies from the same variable from deleting $C_j$ copies together. If no sequence of clause reaction exists to delete all $C_j$ copies, then this will leave at least one $C_j$ copy present, preventing $\config{C_t}=\single{X}$ from being reached. Therefore, if $\config{C_t}$ can be reached from $\config{C_s}$, then there must exist an assignment of variables that satisfies $\Phi$.
\end{proof}

\section{Extensions and Open Problems}
\subsection{Resolving Complexity Gaps}\label{subsec:gaps}

While many of our classifications are tight, several important gaps need to be resolved in order to complete Table~\ref{tab:results}.  These gaps are natural targets for future work.

\para{Open Problem.}\label{open: w1-hardness} In the case of inhibitory systems with void rules of size $(2,0)$ and $(2,1)$, we show NP-completeness even with systems limited to 1 inhibitor/priority. Does there exist some other parameter, such as the number of inhibited rules, that lends itself to an FPT algorithm? Or the contrary: is the problem W[1]-hard?

\para{Open Problem.}\label{open: fpt-improvements} Can we improve the complexity of the FPT algorithms for iCRNs? Such an improvement might require a new method to reduce the number of cases instead of trying all the permutations of inhibitors.

\para{Open Problem.}\label{open: (1,1)} What is the complexity of $(1,1)$ Priority iCRNs? For basic CRNs, this problem is solvable in polynomial time, but jumps all the way to PSPACE-complete when general inhibition is allowed.  How does prioritized inhibition change the problem?  Does it match basic CRNs or iCRNs? Or does it fall somewhere in between, such as NP-complete?

\subsection{CRN with States}\label{subsec:crnss}
\emph{Vector Addition System with States} (VASS) is an extension of Vector Addition Systems that maintains a global state variable \cite{HOPCROFT1979135} that dictates which vectors can be added. We can extend CRNs in the same way by considering a set of states $\states$.
Formally, we say a \emph{CRN with States} $\mathcal{C_{S}} = ((\species, \reactionsS), \states)$ is defined as a CRN where the set of reactions is given by $\reactionsS \subseteq \states\times\reactions\times\states$.

\begin{definition}[CRN States Dynamics]
    For a CRN with states $((\species, \reactionsS), \states)$ and state configurations $A = (q_A, \config{C_A})$ and $B = (q_B, \config{C_B})$, we say that $A \rightarrow_{\crn{C_{S}}}^{(\species, \reactions)} B$ if there exists a rule $(q_i, \reaction, q_f) \in \reactionsS$ where $\reaction = (\reactants, \products)$ such that $q_i = q_A$, $q_f=q_B$, $\reactants \leq \config{C_A}$, and $\config{C_A} -\reactants+\products=\config{C_B}$.
\end{definition}

CRNs with states can be thought of as a form of inhibitory CRN: the states inhibit reactions that would otherwise be possible based on configuration counts alone.  Thus, it warrants investigating the complexity of reachability in this model.
As initial preliminary work in this direction, we show that the reachability problem for CRNs with states is NP-hard even for void rules of size $(1,0)$ by reducing from HAMPATH, and it becomes PSPACE-complete for systems with size $(1,1)$ rules by reducing from zero-player motion planning \cite{demaine2023pspace}. 
These results are contrasted with reachability for basic CRNs in Table~{\ref{tab:results with states}}.

This section explores the reachability problem in the CRN model with States. We start by considering void rules of different sizes, showing that the problem is hard even for void rules of size $(1,0)$. We then consider systems with rules of size $(1,1)$, showing that the problem is PSPACE-complete.

\begin{table}[t]
    \vspace{-.2cm}
    \centering
    \begin{tabular}{|c|c|c|c|c|c}
    \hline
     \multicolumn{5}{|c|}{\textbf{Reachability with States}} \\
     \hline \hline
     \textbf{Rule Size} & \textbf{\hyperref[subsec:crn]{\textbf{CRN}}} & \textbf{Ref.} & \textbf{\hyperref[subsec:crnss]{\textbf{CRN with States}}} & \textbf{Ref.}\\
     \hline
     (1,0) & P & - & \cellcolor{blue!15} \hyperref[open: NP membership]{NP-hard}  & Thm. \ref{thm:(1, 0) picrns} \\
     \hline
     \multicolumn{5}{|c|}{\textbf{General Systems}}\\
     
     \hline
     (1,1) & P, NL-hard & \cite{AFG:2025:RRC} & PSPACE-c & Thm. \ref{thm:(1, 1) picrns}\\
     \hline
     General & Ack-c& \cite{czerwinski2022reachability, leroux2022reachability} & Ack-c & Obs. \ref{obs:general-crnss}\\
     \hline
    \end{tabular}
    \caption{Reachability in CRNs with states. The cell highlighted in \colorbox{blue!15}{blue} links to an open direction discussed in Section~\ref{subsec:crnss}.}
    \label{tab:results with states}
    \vspace{-.5cm}
\end{table}

\begin{restatable}{theorem}{crnssVoid}\label{thm:(1, 0) picrns}
    The reachability problem in CRNs with states, with only void rules of size $(1,0)$, is NP-hard.
\end{restatable}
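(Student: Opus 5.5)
We reduce from HAMPATH, as announced just before the statement. The idea is to let the global state record the \emph{current vertex} of the path, and to let one species per vertex record whether that vertex is still unvisited. The only species-level operation available is a $(1,0)$ deletion. So visiting a vertex is encoded as deleting its single copy while the state moves to that vertex.

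\textbf{Construction.} Let $\langle G,s,t\rangle$ be an instance with $G=(V,E)$ directed.
\begin{itemize}
\item Take species $\species = \{v : v \in V\}$.
\item Take states $\states = \{q_0\} \cup \{q_v : v\in V\}$.
\item Add a starting rule $(q_0, s\rightarrow\emptyset, q_s)$.
\item For every directed edge $(u,v)\in E$, add the rule $(q_u, v\rightarrow\emptyset, q_v)$.
\item The initial configuration is $(q_0, \config{C_s})$, where $\config{C_s}$ has exactly one copy of each vertex species.
\item The target is $(q_t, \emptyConfig)$.
\end{itemize}
The construction has $|V|+1$ states and $|E|+1$ rules, so it is clearly polynomial.

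\textbf{Forward direction.} Let $s=v_1,\dots,v_n=t$ be a Hamiltonian path. Apply the starting rule, then apply the edge rule for $(v_i,v_{i+1})$ for $i=1,\dots,n-1$. Each application is legal for two reasons. The state is $q_{v_i}$ by induction. The copy of $v_{i+1}$ is still present because the vertices are distinct. After the last step every copy has been deleted and the state is $q_t$.

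\textbf{Reverse direction.} Consider any execution reaching $(q_t,\emptyConfig)$. Every rule deletes exactly one copy, and each species starts with a single copy, so the execution has exactly $|V|$ steps and deletes each vertex exactly once.
\begin{itemize}
\item From $q_0$ the only rule is the starting one, so the first deleted vertex is $s$ and the state becomes $q_s$.
\item Every subsequent rule goes from $q_u$ to $q_v$ along an edge $(u,v)\in E$ while deleting $v$. By induction on the steps, the state always names the most recently deleted vertex.
\end{itemize}
The sequence of deleted vertices is therefore a walk in $G$ starting at $s$. Since each vertex is deleted once, the walk visits every vertex exactly once, so it is a Hamiltonian path. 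Because the final state is $q_t$, it ends at $t$.

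The main obstacle is the invariant that the state equals the last deleted vertex. This is where the state replaces the inhibition used in Theorem~\ref{thm:(2, 0) icrn}. It also needs the convention that self-loops $(v,v)$ are discarded, since otherwise a rule could fire from $q_v$ on a species $v$ that has already been deleted. With self-loops removed, the argument is immediate.
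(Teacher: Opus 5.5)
Your reduction is correct and is essentially the paper's: the global state records the current vertex, and one singleton species per vertex is deleted by a $(1,0)$ rule when that vertex is visited. The paper instead deletes the current vertex's species when \emph{leaving} it and requires ending in state $q_s$, which in effect encodes a Hamiltonian cycle; your extra start state $q_0$ and target state $q_t$ give the cleaner $s$--$t$ path version. Your self-loop caveat is unnecessary, since in state $q_v$ the only copy of $v$ has already been deleted, so a self-loop rule can never fire.
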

\begin{proof}
    Given a Ham-Path $H$ with $n$ nodes $N_1, N_2, \ldots N_n$, edges $E_1, E_2, \ldots E_i$, and starting node $N_s$, we construct a Priority iCRN $C_{pi}$ with $n$ states, $i$ reactions, $n$ species and $n$ volume as follows. The species list of $C_{pi}$ is $N_1', N_2', \ldots N_n$, and the state list is $q_1, q_2, \ldots q_n$. The initial configuration is 1 count of every species in the species list of $C_{pi}$, with initial state $q_s$. The rules allowed in each state $q_k$ are, for each edge $E_j$ connecting $N_k$ to $N_l$, $(q_k, N_k'\rightarrow\emptyset, q_l)$. The target configuration is $\config{\emptyset}$ in state $q_s$. 

\end{proof}

\begin{corollary} \label{cor:picrns}
    The reachability problem in CRNs with states, with only void rules of bigger sizes, is NP-hard.
\end{corollary}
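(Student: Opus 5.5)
We plan to reduce from Theorem~\ref{thm:(1, 0) picrns}, padding each $(1,0)$ rule with catalysts or extra reactants so that the new system behaves exactly like the old one. Let a larger void size be $(r,p)$ with $r\ge 1$ and $p<r$ (so that at least one copy is deleted).

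\textbf{Construction.} We start from the HAMPATH instance produced in Theorem~\ref{thm:(1, 0) picrns}, with rules $(q_k, N_k'\rightarrow\emptyset, q_l)$. Following the padding trick of Theorem~\ref{thm:(k, k-1) icrns}, we add a fresh catalyst species $d$ and place $p$ copies of $d$ on both sides of every rule. We also add a fresh species $g$ and place $r-p-1$ copies of $g$ as extra reactants. Each rule then becomes
\[
(q_k,\; N_k' + p\,d + (r-p-1)\,g \rightarrow p\,d,\; q_l),
\]
which has size $(r,p)$ and is void.

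\textbf{Configurations.} The initial configuration adds $p$ copies of $d$ and $n(r-p-1)$ copies of $g$, where $n$ is the number of nodes. The target is $p$ copies of $d$, no $g$, no $N'$ species, in state $q_s$. The count of $d$ is never altered, because $d$ is only a catalyst.

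\textbf{Correctness.} Every original rule application corresponds to exactly one padded application in the same state transition, and vice versa.
\begin{itemize}
\item The $d$ copies are always available.
\item The $g$ budget suffices only if every application also consumes an $N'$.
\end{itemize}
Since every rule consumes exactly one $N'$ together with $r-p-1$ copies of $g$, reaching zero $g$ and zero $N'$ forces exactly $n$ applications. These are in bijection with the original sequence. Therefore the forward and reverse directions of Theorem~\ref{thm:(1, 0) picrns} transfer directly, and the construction is clearly polynomial.

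\textbf{Main obstacle.} The step most likely to need care is faithfulness when $r-p-1>0$. One must check that the $g$ copies cannot enable any behaviour the original system lacked. This holds because $g$ appears only as a reactant alongside an $N'$, so it cannot be consumed on its own.
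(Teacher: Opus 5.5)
Your padding reduction is correct: the catalyst $d$ and fuel species $g$ make every rule void of size exactly $(r,p)$, projecting any padded run onto the states and $N'$ species gives a valid run of the original $(1,0)$ system, and conversely the budget of $n(r-p-1)$ copies of $g$ suffices because every run reaching zero $N'$ makes exactly $n$ applications. The paper states this corollary without proof, and your argument is essentially the padding it implicitly relies on: the catalyst trick from Theorem~\ref{thm:(k, k-1) icrns}, extended with extra fuel reactants to cover $r-p\ge 2$.
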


We now consider non-void rules, showing that even for systems with size $(1,1)$ rules, the problem is PSPACE-Complete through a reduction from zero-player motion planning. In this problem, we are given an initial configuration of gadgets, a starting location, and a target location, and ask whether an agent can traverse through the gadgets from the starting location to reach the target location. By showing how to simulate a locking 2-toggle and a rotate gadget with $(1,1)$ rules, we achieve PSPACE-hardness \cite{demaine2023pspace}.

\begin{figure}
    \centering
    \begin{subfigure}[b]{0.3\textwidth}
        \centering
        \includegraphics[width=.8\textwidth]{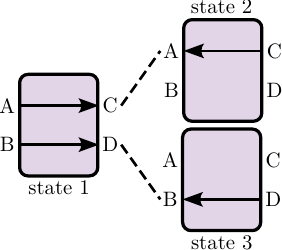}
        \caption{}
        \label{subfig:Locking-2-Toggle}
    \end{subfigure}
    \hfill
    \begin{subfigure}[b]{0.5\textwidth}
        \centering
        \includegraphics[width=1.\textwidth]{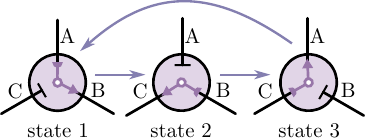}
        \caption{}
        \label{subfig:Rotating-Gadget}
    \end{subfigure}
    \hfill
    \caption{(a) A locking 2-toggle with 3 possible states. (b) A rotating turnstile gadget.}
    \label{fig:Gadgets}
\end{figure}

\textbf{Wires.} Each wire is given a label as shown in Figure \ref{subfig:Locking-2-Toggle} and is represented by 2 states $\overrightarrow{q_i}$ and $\overleftarrow{q_i}$. The orientation of the arrow denotes what direction the agent is traveling in. 

\textbf{Locking 2-Toggle (L2T).} Each L2T is enumerated and represented as species $G_i^1, G_i^2, G_i^3$ for each possible state. All possible interactions by an agent with an L2T are represented by the set of rules. In Figure \ref{subfig:Locking-2-Toggle}, a valid traversal through $G_i$ in state $1$ would be represented as $(\overrightarrow{q_a}, G_i^1 \rightarrow G_i^2, \overrightarrow{q_c})$. An example where the agent `bounces off' $G_i$ would be represented as $(\overleftarrow{q_c}, G_i^1 \rightarrow G_i^1, \overrightarrow{q_c})$.

\textbf{Rotating Gadget.} Each rotating gadget is also enumerated and represented as a single species $R_i$. All possible interactions by an agent with a rotating gadget are represented by the set of rules. In Figure \ref{subfig:Rotating-Gadget}, a valid example traversal through $R_i$ would be represented as $(\overrightarrow{q_a}, R_i^1 \rightarrow R_i^2, \overrightarrow{q_b})$.

\begin{restatable}{theorem}{crnss}\label{thm:(1, 1) picrns}
    The reachability problem in CRNs with states, with only rules of size $(1,1)$, is PSPACE-complete.
\end{restatable}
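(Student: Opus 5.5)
The proof has two parts: membership in PSPACE and PSPACE-hardness.

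\textbf{Membership.} Every $(1,1)$ rule preserves volume, so every configuration reachable from $\config{C_s}$ has volume $\size{\config{C_s}}$. A global state together with a count vector therefore fits in polynomial space: $O(\log\size{\states} + \size{\species}\log\size{\config{C_s}})$ bits. A nondeterministic machine can guess the rule applications one at a time. It stores only the current state and configuration, checks that each chosen rule $(q_i,\reaction,q_f)$ matches the current state and that its reactant is present, and accepts if it reaches $\config{C_t}$ in the target state. This places the problem in NPSPACE, which equals PSPACE by Savitch's theorem, exactly as in Lemma~\ref{lem:pspace}.

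\textbf{Hardness.} I reduce from zero-player motion planning with locking 2-toggles and rotating (turnstile) gadgets, which is PSPACE-complete \cite{demaine2023pspace}. The global state encodes the agent's position: the current wire together with its direction of travel, $\overrightarrow{q_i}$ or $\overleftarrow{q_i}$. Each gadget is encoded by exactly one copy of a species whose identity is the gadget's internal state: $G_i^1,G_i^2,G_i^3$ for the $i$-th L2T and $R_i^j$ for the rotating gadgets. For every traversal or bounce of the gadget, I add one rule $(q,\,G\rightarrow G',\,q')$, where $q$ is the wire and direction entering the gadget, $q'$ is the wire and direction leaving it, and $G\rightarrow G'$ is the gadget's state change. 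A bounce uses $G\rightarrow G$. Every rule is $(1,1)$, and the number of states, species and rules is linear in the size of the gadget network.

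The initial configuration contains one copy of each gadget species corresponding to that gadget's initial state, and the initial global state is the start wire with its initial direction. For the target, I add a fresh species $Z$ with one copy and a rule $(q_{\text{target}}, Z\rightarrow Z, q_{\text{done}})$, where $q_{\text{done}}$ is a new state with no outgoing rules. Since a $(1,1)$ system cannot delete copies, I also add, in $q_{\text{done}}$, rules that convert each gadget species into a sink species $W$. The target is then state $q_{\text{done}}$ with $\size{\config{C_s}}-1$ copies of $W$ and one copy of $Z$. Reachability of this target is equivalent to the agent reaching the target location.

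\textbf{Correctness and the main obstacle.} The forward direction is simulation step by step: each move of the agent is exactly one rule application. For the reverse direction, the key invariant is that each gadget always has exactly one species copy. This holds because every rule maps a gadget species to a species of the same gadget, and the clean-up rules fire only in $q_{\text{done}}$. Hence, from a non-done state, the only applicable rules are those that correspond to legal agent moves from the current wire. The main obstacle is the zero-player aspect. If in some state more than one rule is applicable, the CRN could take a nondeterministic branch that the deterministic agent would not. I would handle this by checking that the gadget semantics make each pair of (incoming wire, gadget state) determine exactly one transition, so that each global state together with the present gadget species enables exactly one rule. Even if this check failed, the reduction would still be valid from the one-player variant, which is also PSPACE-hard, because for reachability nondeterminism only helps when every allowed branch is a legal move.
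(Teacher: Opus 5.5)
Your proposal is correct and follows essentially the same route as the paper: PSPACE membership via a nondeterministic polynomial-space simulation of the volume-preserving system plus Savitch's theorem, and hardness by simulating zero-player motion planning with locking 2-toggles and rotating gadgets. As in the paper, you encode each directed wire as a global state, keep one species copy per gadget to record its internal state, and add clean-up rules that turn all gadget species into a dummy species once the target is reached. Your marker species $Z$ with a separate state $q_{\text{done}}$ is a slightly cleaner trigger for clean-up than the paper's rules firing directly in $q_t$, and your one-copy-per-gadget invariant and determinism check spell out details the paper leaves implicit.
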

\begin{proof}
First, we observe that the problem is contained within PSPACE:  We can non-deterministically select a rule that is ready to apply, and it takes polynomial space to maintain the current system configuration, which is just the number of copies of each species. This gives a nondeterministic polynomial space simulation for the CRN system that as the volume does not change over time. By the classical result NSPACE=PSPACE, we have membership in PSPACE.

We now show hardness through a reduction from zero-player motion planning. Given an instance of a motion planning problem with $n$ Locking 2-Toggles and $m$ rotate gadgets, create a CRN with states system as described above, with $\overrightarrow{q_0}$ denoting the initial position of the agent and $q_t$ denoting the target position. We add rules for each gadget such that once the system reaches state $q_t$, then all species transition to a dummy species $F$. If the system can reach the final configuration of $n + m$ copies of species $F$ in state $q_t$, then there exists a path from the initial location to the target location in the motion planning problem.
\end{proof}

\begin{restatable}{observation}{crnssGeneral}\label{obs:general-crnss}
    The reachability problem in CRNs with states, with general rule sizes, is Ackermann-complete since VAS and VASS are equivalent.
\end{restatable}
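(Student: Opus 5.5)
We prove both directions, Ackermann-hardness and membership in Ackermann, by translating between CRNs with states and ordinary CRNs (equivalently VAS and VASS), and then invoking the known Ackermann-completeness of reachability in general CRNs \cite{czerwinski2022reachability, leroux2022reachability}.

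\textbf{Hardness.} Reduce from reachability in basic CRNs. Given a CRN $(\species,\reactions)$ with configurations $\config{C_s},\config{C_t}$, build a CRN with states that has a single state $q$. Its rule set is $\{(q,\reaction,q) : \reaction\in\reactions\}$. Ask whether $(q,\config{C_s})$ reaches $(q,\config{C_t})$. Each step of one system is a step of the other, so the answers coincide. The reduction is trivially polynomial, hence Ackermann-hardness transfers.

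\textbf{Membership.} Simulate the global state with species, using the standard VASS-to-VAS encoding.
\begin{enumerate}
\item Add a fresh species $s_q$ for every $q\in\states$.
\item Replace each rule $(q_i,(\reactants,\products),q_f)$ with the ordinary reaction $\reactants+\single{s_{q_i}}\rightarrow\products+\single{s_{q_f}}$.
\item Map the instance $(q_A,\config{C_A})\reaches(q_B,\config{C_B})$ to $\config{C_A}+\single{s_{q_A}}\reaches\config{C_B}+\single{s_{q_B}}$.
\end{enumerate}

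The invariant is that exactly one state species is present, with count one. Every reaction consumes one state token and produces one, so total state-token count is preserved. It starts at $1$ and therefore stays $1$.

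Given the invariant, a reaction built from $(q_i,\reaction,q_f)$ is applicable iff the current state token is $s_{q_i}$ and $\reactants$ fits in the remaining species. This matches the CRN-with-states dynamics exactly, step for step. The only care needed is when $q_i=q_f$: the token then behaves as a catalyst, which is still a legitimate CRN reaction. A step-by-step induction on run length gives the equivalence of reachability.

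The construction is polynomial-size, so reachability for CRNs with states is in Ackermann.

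The main obstacle is checking that the encoding's token invariant is tight, meaning no spurious runs appear. Conservation of state tokens handles this. With both reductions in place, Ackermann-completeness follows.
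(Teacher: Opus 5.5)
Your proposal is correct and is essentially the paper's argument: the paper only invokes the equivalence of VAS and VASS together with the Ackermann-completeness of CRN reachability. Your single-state embedding (hardness) and state-token encoding (membership) are exactly that equivalence written out in CRN terms. Because the CRN model allows catalytic reactions, your token encoding $\reactants+\single{s_{q_i}}\rightarrow\products+\single{s_{q_f}}$ is immediate and avoids the more intricate classical VASS-to-VAS simulation, which is needed there because plain VAS transitions cannot test for a token catalytically.
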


\para{Open Problem.}\label{open: NP membership} We show that CRNs with states, when limited to void rules of size $(1,0)$, are NP-hard. Is the 
problem contained within NP for such systems, or is the problem PSPACE-hard?

\end{document}